\newcounter{nodemarkers}
\newtheorem{theorem}{Theorem}[section]
\newtheorem{definition}{Definition}[section]
\newtheorem{proposition}{Proposition}[section]
\newtheorem{lemma}{Lemma}[section] 
\newtheorem{corollary}[theorem]{Corollary}
\journal{Discrete Applied Mathematics}
\begin{document}
\makeatletter
\def\ps@pprintTitle{%
   \let\@oddhead\@empty
   \let\@evenhead\@empty
   \let\@oddfoot\@empty
   \let\@evenfoot\@empty}
\makeatother

\begin{frontmatter}


\author{Peace Ayegba\corref{label2}\fnref{label2}}
\ead{p.ayegba.1@research.gla.ac.uk}
\author{Sofiat Olaosebikan}
\ead{sofiat.olaosebikan@glasgow.ac.uk}
\author{David Manlove}
\ead{david.manlove@glasgow.ac.uk}
\cortext[label2]{Corresponding author}

\fntext[label2]{Supported by a College of Science and Engineering Scholarship, University of Glasgow, United Kingdom}

\affiliation{organization={School of Computing Science, University of Glasgow},
city={Glasgow},
country={United Kingdom}}

\title{Structural aspects of the Student Project Allocation problem}



\begin{abstract}
We study the Student Project Allocation problem with lecturer preferences over students ({\sc spa-s}), which involves the assignment of students to projects based on student preferences over projects, lecturer preferences over students, and capacity constraints on both projects and lecturers. The goal is to find a \emph{stable matching} that ensures no student and lecturer can mutually benefit by deviating from a given assignment to form an alternative arrangement involving some project. We explore the structural properties of {\sc spa-s} and characterise the set of stable matchings for an arbitrary {\sc spa-s} instance. We prove that, similar to the classical Stable Marriage problem ({\sc sm}) and the Hospital Residents problem ({\sc hr}), the set of all stable matchings in {\sc spa-s} forms a distributive lattice. In this lattice, the student-optimal and lecturer-optimal stable matchings represent the minimum and maximum elements, respectively. Our results extend known structural characterisations from bipartite models to the more complex {\sc spa-s} setting, and provide a basis for the development of efficient algorithms to address several open problems in {\sc spa-s} and its extensions.
\end{abstract}



\begin{keyword}
Student-Project Allocation \sep Distributive lattice \sep Stable matching \sep Rotation poset.



\end{keyword}

\end{frontmatter}


\section{Introduction}
\label{sect:intro}
Matching problems have been extensively studied, particularly in the fields of theoretical computer science and economics, due to their broad range of applications. These problems typically arise when we need to allocate one set of agents to another, based on given preferences and subject to capacity constraints.
Given two disjoint sets \( A \) and \( B \), and a set of acceptable pairs \( E \subseteq A \times B \), a \textit{matching} is a subset \( M \subseteq E \) such that each element in \( A \) is matched to at most one element in \( B \), and the capacity constraints for elements in both sets are satisfied. Matching algorithms are widely used in various real-world applications, such as assigning kidney donors to patients \cite{DGGKMP2024,R2007,benedek2023}, junior residents to hospitals \cite{I1998, MOPU2007}, users to access points in wireless networks \cite{GSBDH2015}, and students to university projects \cite{CC2023, MMO2022, AIM2007}. For example, in the United Kingdom, the National Health Service (NHS) employs matching algorithms to pair living donors with compatible recipients, improving both the efficiency and success rates of kidney transplants \cite{NHSBT2024,NSF2024}. In two-sided matching problems, such as those involving students and lecturers, different types of optimality criteria can arise. Roth \cite{R1984} emphasized that a key criteria for a matching is \textit{stability}, which ensures that no two agents, who are not currently matched to each other, would both prefer being paired together over their current assignments. Such pairs are known as \textit{blocking pairs}.\\

\noindent In their seminal work \cite{GS1962}, Gale and Shapley introduced the Stable Marriage problem ({\sc sm}), which involves an equal number of men and women who each rank all members of the opposite sex in strict order of preference. The authors proved that every instance of {\sc sm} has at least one stable matching, and presented an algorithm that finds one in $O(n^2)$ time, where $n$ is the number of participants on each side. In the man-oriented version of the algorithm, men propose to women, who accept or reject proposals based on their preferences. This results in the \textit{man-optimal} stable matching, where each man is matched to his most preferred partner among all stable matchings, while each woman receives her least preferred partner. Conversely, when women propose to men, the algorithm yields the \textit{woman-optimal} stable matching. Relaxations of {\sc sm} include settings where preference lists are incomplete ({\sc smi}), preferences contains ties ({\sc smt}) or both incomplete lists and ties ({\sc smti}) \cite{GS1962,GI1989,IMMM2002}. \\

\noindent The Hospital Residents problem ({\sc hr}) is a many-to-one extension of {\sc sm} that models the assignment of junior residents to hospitals \cite{GS1962, MOPU2007, R1984}. In this setting,  each resident is assigned to at most one hospital, while each hospital can accept multiple residents up to its capacity. Using an extension of the Gale-Shapley algorithm, we can find both the resident-optimal and hospital-optimal matchings. Moreover, when preference lists include ties, the problem becomes the Hospitals Residents problem with Ties ({\sc hrt}) \cite{IMS2000,KM2014}. Another generalization of {\sc sm} is the \textit{Student-Project Allocation problem with Lecturer Preferences over Students} ({\sc spa-s}), which also generalizes {\sc hr}. This problem involves a set of students, projects, and lecturers, where each project is supervised by a unique lecturer, and both projects and lecturers are subject to capacity constraints. Students rank projects based on their preferences, while each lecturer ranks the students who find at least one of their offered projects acceptable.\\

\noindent Abraham et al. \cite{AIM2007} proved that every instance $I$ of {\sc spa-s} admits at least one stable matching and described two linear-time algorithms to find a stable matching in $I$. The first algorithm constructs the \textit{student-optimal} stable matching, where each student is assigned to the best project they could obtain in any stable matching. The second algorithm constructs the \textit{lecturer-optimal} stable matching, in which each lecturer is assigned a set of students that is at least as favourable (as precisely defined in Section \ref{sect:lect-pref}) as in any other stable matching. Additionally, the authors presented key properties that the set of stable matchings in any instance of {\sc spa-s} satisfies, analogous to the Rural Hospitals Theorem for {\sc hr} \cite{GI1989}, which they refer to as the \textit{Unpopular Projects Theorem} [see Theorem \ref{thm:unpopular-students}].  For a detailed example of the \textit{student-optimal} and \textit{lecturer-optimal} stable matchings, readers are referred to the example in Section \ref{sect:example1}.

\subsection{Related work}
\noindent Research has shown that stable matchings exhibit remarkably pristine structural properties, which have enabled the development of polynomial-time algorithms for computing stable matchings that are optimal with respect to specific criteria \cite{EIV2023,M2013,GMRV2023}. In general, there may be many stable matchings for a given instance of {\sc sm} \cite{IL1986}. Furthermore, the set of stable matchings forms a finite distributive lattice under the natural dominance relation \( \preceq \), a result attributed to John Conway by Knuth \cite{K1997}; similar results hold in {\sc hr} [\cite{GI1989}, Section 1.6.5]. A distributive lattice is a partially ordered set (poset) where the meet and join operations distribute over each other (see Definition \ref{def:dl} for a formal definition). \\

\noindent In the context of {\sc smti} and {\sc hrt}, three notions of stability are defined: \textit{weak stability}, \textit{strong stability}, and \textit{super-stability}. Manlove \cite{M2002} proved that the set of strongly stable matchings in {\sc smti} forms a finite distributive lattice. Later, Ghosal et al. \cite{GKP2015} presented a polynomial-time algorithm to enumerate all strongly stable matchings and provided an alternative proof demonstrating that the set of strongly stable matchings in {\sc smti} forms a distributive lattice. Similarly, Speiker \cite{S1995} showed that the set of super-stable matchings in {\sc smti} also forms a distributive lattice, with a subsequent proof later provided by Manlove \cite{M2002}. \\

\noindent The \textit{Workers Firms} problem ({\sc wf-2}) is a many-to-many extension of {\sc sm} and {\sc hr}, where each worker and firm can be matched with multiple partners. In this version, each worker ranks acceptable subsets of firms, and each firm ranks acceptable subsets of workers. Pairwise stability ensures that no unmatched worker-firm pair would both prefer to match with each other over their current matches. Blair \cite{B1988} showed that the set of pairwise-stable matchings forms a lattice under the assumption that preferences are path-independent. Later, Alkhan \cite{A2002} and Li \cite{L2014} extended this result by proving that, under substitutable, strict, and cardinally monotone preferences, the set of pairwise-stable matchings forms a distributive lattice.\\

\noindent In each of these results, leveraging the lattice structure facilitated the development of efficient algorithms and \NP-hardness results for other stable matching problems within that setting. For example, in {\sc sm}, the lattice structure led to algorithms for finding all stable pairs \cite{G1987}, generating all stable matchings \cite{G1987}, finding egalitarian or minimum regret stable matchings \cite{G1987}, and establishing the \P-completeness of the problem of counting stable matchings \cite{IL1986}.\\

\noindent  Another generalization of {\sc hr} is  the Laminar Classified Stable Matching ({\sc lcsm}) model introduced by Huang~\cite{huang2010classified}, in which institutes and applicants have preferences over one another. Each institute classifies applicants into a laminar family of classes and specifies upper and lower bounds on the number of applicants it can accept from each class.  Its extension, {\sc 2lcsm} \cite{fleiner2016lower}, adds quota constraints on both sides, by assigning each agent a separate matroid that defines their feasible assignments; the overall feasibility condition is then given by the direct sum of these individual matroids. As noted by the authors, {\sc lcsm} reduces to a special case of {\sc spa-s} when classifications are simple partitions and no lower bounds are imposed.  Importantly, neither {\sc lcsm} nor {\sc 2lcsm} can model the case where a student may be assigned to different projects offered by the same lecturer across stable matchings. \\

\noindent The classical stable marriage problem has also been generalised through matroid-theoretic frameworks. One such model is Fleiner’s formulation of bipartite stable matchings \cite{fleiner2001matroid, fleiner2003fixed}, which characterises stable matchings as matroid kernels. The {\sc spa-s} model can be embedded into this framework by representing students as a partition matroid and lecturers as a truncation of a direct sum of uniform matroids, as observed in \cite{AIM2007}. In this representation, the bipartite graph becomes a multigraph: vertices on one side correspond to students, those on the other to lecturers, and edges to acceptable student–project pairs.

\bigskip
\noindent Yu further generalises Fleiner’s model by introducing the framework of $g$-matroids and $g$-polymatroids~\cite{yokoi2015stable}, which extend the standard matroid setting to accommodate more flexible assignment constraints, including lower quotas. In contrast to classical matroids, $g$-matroids do not require the empty set to be feasible, thereby allowing the model to capture settings in which each agent must be assigned a minimum number of partners in order for the overall matching to be considered feasible.

\medskip
\noindent In both Fleiner’s and Yu’s models, a structural result holds that generalises the Rural Hospitals Theorem: if the matroid (or g-matroid) kernel corresponding to one stable matching does not span the entire ground set, then the same subset is spanned in all stable matchings. This implies that if an agent is undersubscribed in one stable matching, they must be assigned the same set of agents in every stable matching. In contrast, this property does not hold in {\sc spa-s}, where a project or lecturer may be undersubscribed in one stable matching but assigned different sets of students in others. This suggests that although {\sc spa-s} may be embedded into these frameworks, key structural properties do not carry over.

\subsection{Our contributions}
\noindent The main contribution of this paper is to prove that the set of stable matchings in \textsc{spa-s} forms a distributive lattice, with the student-optimal and lecturer-optimal matchings representing the minimum and maximum elements of the lattice, respectively. This result builds upon the preliminary work of Olaosebikan \cite{O2020}, which explored the structure of stable matchings under the restriction that each student ranks projects offered by different lecturers. In this paper, we confirm that this characterisation of the set of stable matchings in \textsc{spa-s} remains valid without this restriction, by establishing the result in the general setting. \\

\noindent To achieve this, we introduce three new results—namely Lemmas \ref{lem:samelecturer1}, \ref{lem:student-pref-lecturer-pref}, and \ref{lem:s-prefers-l-prefers1} —that establish interesting properties required for our final result. Moreover, we provide revised proofs for the meet and join operations, addressing gaps in Olaosebikan \cite{O2020} and ensuring correctness in the general case. Overall, our contributions offer new insights that enhance our understanding of the structure of stable matchings in \textsc{spa-s}.

\subsection{Structure of the paper}
\noindent The paper is structured as follows. In Section \ref{sect:premdef}, we provide formal definitions relevant to this paper. In Section~\ref{sect:structural-lemmas}, we present some structural properties of stable matchings in {\sc spa-s}. Section~\ref{sect:stablematchings-distlattice} contains our main result, showing that the set of stable matchings forms a distributive lattice. We conclude in Section~\ref{sect:conclusion} with a summary of our results and a discussion of open problems.

\section{Preliminary Definitions}
\label{sect:premdef}
\noindent Section \ref{def:spa-s} provides the formal definition of the \textsc{spa-s} model, as introduced in \cite{AIM2007}. In Section \ref{def:pom}, we present the concept of student and lecturer preferences over matchings, also from \cite{AIM2007}. Finally, in Section \ref{def:dr}, we define the dominance relation on stable matchings and prove that it is reflexive, anti-symmetric, and transitive, as presented in \cite{O2020}.

\subsection{The Student-Project Allocation model}
\label{def:spa-s}
\noindent An instance $I$ of \textsc{spa-s} involves a set $\mathcal{S} = \{s_1, s_2, \ldots, s_{n_1}\}$ of students, a set $\mathcal{P} = \{p_1, p_2, \ldots, p_{n_2}\}$ of projects, and a set $\mathcal{L} = \{l_1, l_2, \ldots, l_{n_3}\}$ of lecturers. Each student $s_i \in \mathcal{S}$ ranks a subset of $\mathcal{P}$ in strict order, which forms $s_i$'s \textit{preference list}. A project $p_j \in \mathcal{P}$ is said to be \textit{acceptable} to $s_i$ if it appears in $s_i$'s preference list. The set of all projects acceptable to $s_i$ is denoted by $A_i \subseteq \mathcal{P}$. \\

\noindent Each project $p_j \in \mathcal{P}$ is offered by a single lecturer $l_k \in \mathcal{L}$. Each lecturer $l_k$ offers a non-empty subset of projects $P_k \subseteq \mathcal{P}$, where the sets $P_1, P_2, \ldots, P_{n_3}$ form a partition of $\mathcal{P}$. Each lecturer $l_k \in \mathcal{L}$ ranks, in strict order of preference, those students who find at least one project in $P_k$ acceptable. A student $s_i$ is said to be \emph{acceptable} to $l_k$ if $s_i$ appears in $l_k$'s preference list. The set of all students acceptable to $l_k$ is denoted by $\mathcal{L}_k \subseteq \mathcal{S}$. Let $p_j, p_{j'} \in \mathcal{P}$ be projects such that $s_i$ finds both acceptable, i.e., $p_j, p_{j'} \in A_i$. If $p_j$ precedes $p_{j'}$ in $s_i$'s preference list, then we say that $s_i$ \emph{prefers} $p_j$ to $p_{j'}$. Similarly, if $s_i, s_{i'} \in \mathcal{S}$ are acceptable to $l_k$, and $s_i$ precedes $s_{i'}$ in $l_k$'s preference list, then we say that $l_k$ \emph{prefers} $s_i$ to $s_{i'}$. For any pair $(s_i, p_j) \in \mathcal{S} \times \mathcal{P}$ where $p_j$ is offered by $l_k$, we define $(s_i, p_j)$ to be an \emph{acceptable pair} if $p_j \in A_i$ and $s_i \in \mathcal{L}_k$, i.e., both $s_i$ finds $p_j$ acceptable and $l_k$ finds $s_i$ acceptable. \\

\noindent Further, each project $p_j \in \mathcal{P}$ has a capacity $c_j \in \mathbb{Z^+}$, representing the maximum number of students that can be assigned to $p_j$. Similarly, each lecturer $l_k$ has a capacity $d_k \in \mathbb{Z^+}$, indicating the maximum number of students who may be assigned to projects offered by $l_k$. For any lecturer $l_k$, it holds that $\max \{c_j : p_j \in P_k\} \leq d_k \leq \sum \{c_j : p_j \in P_k\}$. This implies that the capacity of $l_k$ is (i) at least the maximum capacity of the projects offered by $l_k$, and (ii) at most the sum of the capacities of all projects that $l_k$ offers. We denote by $\mathcal{L}_k^j$ the \emph{projected preference list} of $l_k$ for $p_j$, which is derived from $\mathcal{L}_k$ by removing those students who do not find $p_j$ acceptable. \\

\noindent An assignment $M$ in $I$ is a subset of $\mathcal{S} \times \mathcal{P}$ such that, for each pair $(s_i, p_j) \in M$, $s_i$ finds $p_j$ acceptable. The \emph{size} of $M$ is the number of (student, project) pairs in $M$, denoted by $|M|$. We say that $s_i$ is \emph{assigned} to $p_j$ and that $p_j$ is assigned to $s_i$ if $(s_i, p_j) \in M$. We denote by $M(s_i)$ the set of projects assigned to a student $s_i \in \mathcal{S}$, and by $M(p_j)$ the set of students assigned to a project $p_j \in \mathcal{P}$. For simplicity, if $s_i$ is assigned to a project $p_j$ offered by a lecturer $l_k$, we say that $s_i$ is \emph{assigned to} $l_k$, and $l_k$ is \emph{assigned to} $s_i$. Consequently, we denote by $M(l_k)$ the set of students assigned to $l_k$ in $M$. Furthermore, we denote by $S_k(M,M')$ the set of students assigned in two different stable matchings $M$ and $M'$ to different projects offered by the same lecturer $l_k$. A project $p_j \in \mathcal{P}$ is said to be \emph{undersubscribed}, \emph{full}, or \emph{oversubscribed} in $M$ if $|M(p_j)| < c_j$, $|M(p_j)| = c_j$, or $|M(p_j)| > c_j$, respectively. Similarly, a lecturer $l_k \in \mathcal{L}$ is said to be \emph{undersubscribed}, \emph{full}, or \emph{oversubscribed} in $M$ if $|M(l_k)| < d_k$, $|M(l_k)| = d_k$, or $|M(l_k)| > d_k$, respectively. \\

\subsubsection{Example.}
\label{sect:example1}
\noindent An example {\sc spa-s} instance is shown in Figure \ref{fig:example1}. Here, the set of students is $\mathcal{S} = \{s_1, s_2, \ldots, s_5\}$, the set of projects is $\mathcal{P} = \{p_1, p_2, \ldots, p_5\}$, and the set of lecturers is $\mathcal{L} = \{l_1, l_2\}$. Each student has a preference list over the projects they find acceptable. For example, $s_1$'s preference list is $p_1, p_2$, and $s_2$'s preference list is $p_2, p_3$. Also, lecturer $l_1$ offers $p_1, p_2, p_5$, while lecturer $l_2$ offers $p_3, p_4$. Each lecturer ranks students in order of preference. In this example, $l_1$'s preference list is $s_4, s_5, s_3, s_1, s_2$, and the projected preference list of $l_1$ for $p_1$ includes $s_3, s_1$, ranked in that order.

\begin{figure}[H]
\centering
\small
\renewcommand{\arraystretch}{1.2} 
\setlength{\tabcolsep}{4pt} 
\begin{tabular}{p{0.35\textwidth} p{0.35\textwidth} p{0.25\textwidth}}
\hline
\textbf{Students' preferences} & \textbf{Lecturers' preferences} & \textbf{Offers} \\ 
\hline
$s_1$: $p_1 \; p_2$ & $l_1$: $s_4 \; s_5 \; s_3 \; s_1 \; s_2$ & $p_1$, $p_2$, $p_5$ \\ 
$s_2$: $p_2 \; p_3$ & $l_2$: $s_2 \; s_3 \; s_5 \; s_4$ & $p_3$, $p_4$ \\ 
$s_3$: $p_3 \; p_1$ & & \\ 
$s_4$: $p_4 \; p_5$ & & \\ 
$s_5$: $p_5 \; p_4$ & & \\ 
\multicolumn{2}{l}{\textbf{Project capacities:} $c_1 = c_2 = c_3 = c_4 = c_5 = 1$} & \\
\multicolumn{2}{l}{\textbf{Lecturer capacities:} $d_1 = 3, \; d_2 = 2$} & \\ 
\hline
\end{tabular}
\caption{An instance $I_1$ of {\sc spa-s}}
\label{fig:example1}
\end{figure}

\noindent \textbf{Matching:}  
A matching \( M \) is an assignment of students to projects such that the following conditions hold:  
\begin{enumerate}
    \item[(a)] Each student \( s_i \in \mathcal{S} \) is assigned to at most one project, i.e., \( |M(s_i)| \leq 1 \).
    \item[(b)] No project \( p_j \in \mathcal{P} \) exceeds its capacity, i.e., \( |M(p_j)| \leq c_j \).
    \item[(c)] No lecturer \( l_k \in \mathcal{L} \) exceeds their capacity, i.e., \( |M(l_k)| \leq d_k \).
\end{enumerate}

\noindent \textbf{Blocking Pairs and Stable Matchings.}  
Let \( I \) be an instance of {\sc spa-s} and let \( M \) be a matching in \( I \). An acceptable pair \( (s_i, p_j) \in (\mathcal{S} \times \mathcal{P}) \setminus M \) is a \emph{blocking pair} for \( M \) if either:
\begin{itemize}
    \item[(S1)] \( s_i \) is unassigned in \( M \), or
    \item[(S2)] \( s_i \) prefers \( p_j \) to \( M(s_i) \),
\end{itemize}
\noindent and one of the following conditions holds:
\begin{itemize}
    \item[(P1)] \( p_j \) is undersubscribed in \( M \), and its lecturer \( l_k \) is also undersubscribed;
    \item[(P2)] \( p_j \) is undersubscribed in \( M \), \( l_k \) is full in $M$, and \( s_i \in M(l_k) \);  (Note that this cannot arise when (S1) holds.)
    \item[(P3)] \( p_j \) is undersubscribed in \( M \), \( l_k \) is full in $M$, and \( l_k \) prefers \( s_i \) to the worst student in \( M(l_k) \);
    \item[(P4)] \( p_j \) is full in \( M \), and \( l_k \) prefers \( s_i \) to the worst student in \( M(p_j) \).
\end{itemize}

\noindent If such a pair exists, we say that \( (s_i, p_j) \) \emph{blocks} \( M \). A matching \( M \) is \emph{stable} if it admits no blocking pair.\\

\noindent With respect to the {\sc spa-s} instance \( I_1 \) shown in Figure \ref{fig:example1}, the matching  
\( M_1 = \{(s_1, p_1), (s_2, p_2), (s_3, p_3), (s_4, p_4), (s_5, p_5)\} \) is a stable matching, as it does not admit any blocking pair.

\subsection{Preferences over Matchings}
\label{def:pom}

\noindent In this section, we extend the notion of preferences over individual assignments for students and lecturers to preferences over matchings. First, we present the Unpopular Projects Theorem, which captures key structural properties of the set of stable matchings in {\sc spa-s}.

\begin{theorem}[Unpopular Projects Theorem {\cite{AIM2007, O2020}}]
\label{thm:unpopular-students}
Let \( \mathcal{M} \) denote the set of all stable matchings in a given instance of {\sc spa-s}. Then:
\begin{itemize}
\item[(i)] Each lecturer is assigned the same number of students in all stable matchings in \( \mathcal{M} \).
\item[(ii)] Exactly the same students are unassigned in all stable matchings in \( \mathcal{M} \).
\item[(iii)] Any project offered by an undersubscribed lecturer is assigned the same number of students in all stable matching in \( \mathcal{M} \).
\end{itemize}
\end{theorem}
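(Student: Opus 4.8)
The plan is to derive all three parts from the single statement that each lecturer is assigned the same number of students in every stable matching, i.e.\ part~(i). Throughout I would fix an arbitrary stable matching $M$ and compare it with the student-optimal stable matching $M^+$, which exists by Abraham et al.~\cite{AIM2007}; write $\mu(N)$ for the set of students assigned in a matching $N$. Since $M^+$ assigns each student a project at least as good as any stable matching does, every student assigned in $M$ is assigned in $M^+$, so $\mu(M)\subseteq\mu(M^+)$ for every stable $M$.

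Granting~(i), part~(ii) is immediate: $|M|=\sum_k|M(l_k)|$ is then independent of the stable matching $M$, and a subset of $\mu(M^+)$ having this fixed cardinality must equal $\mu(M^+)$; hence $\mu(M)=\mu(M^+)$ for all stable $M$. Part~(iii) then follows as well. By~(i), being undersubscribed is a property of a lecturer rather than of a matching, so fix an undersubscribed lecturer $l_k$ and a project $p_j\in P_k$. For any stable $M$ I claim $|M(p_j)|\ge|M^+(p_j)|$: otherwise $p_j$ is undersubscribed in $M$ and there is a student $s\in M^+(p_j)\setminus M(p_j)$; by~(ii), $s$ is assigned in $M$, necessarily to some $p_{j'}\neq p_j$, and by student-optimality of $M^+$ together with strictness of preferences $s$ strictly prefers $p_j$ to $p_{j'}$, so $(s,p_j)$ satisfies (S2); since $p_j$ and $l_k$ are both undersubscribed in $M$ it also satisfies (P1), and thus $(s,p_j)$ blocks $M$, a contradiction. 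Summing $|M(p_j)|\ge|M^+(p_j)|$ over $p_j\in P_k$ and invoking $|M(l_k)|=|M^+(l_k)|$ from~(i) forces equality in every term, which is~(iii).

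It remains to prove~(i). Since $\sum_k|M(l_k)|=|\mu(M)|\le|\mu(M^+)|=\sum_k|M^+(l_k)|$, it suffices to establish the componentwise inequality $|M(l_k)|\ge|M^+(l_k)|$ for every lecturer $l_k$, because then all of these inequalities must be equalities. Suppose instead $|M(l_k)|<|M^+(l_k)|$; then $l_k$ is undersubscribed in $M$, and there is a student $s\in M^+(l_k)\setminus M(l_k)$, assigned in $M^+$ to some $p_j\in P_k$. If $s$ is unassigned in $M$, or is assigned in $M$ to a project of a lecturer other than $l_k$ (which $s$ then necessarily likes less than $p_j$, by student-optimality of $M^+$), then provided $p_j$ is undersubscribed in $M$ the pair $(s,p_j)$ blocks $M$ by (P1) (with (S1) or (S2) as appropriate), and if instead $p_j$ is full in $M$ but $l_k$ prefers $s$ to the worst student in $M(p_j)$, then $(s,p_j)$ blocks $M$ by (P4). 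The genuinely delicate case is when $p_j$ is full in $M$ and $l_k$ prefers every student in $M(p_j)$ to $s$, so that (P4) does not apply directly.

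To resolve this last case I would argue at the level of the entire lecturer $l_k$ rather than a single project: form the bipartite multigraph whose vertices are the students in $M(l_k)\cup M^+(l_k)$ on one side and the projects of $P_k$ on the other, with the edges of $M$ and of $M^+$ incident with $P_k$; each student has degree at most two and each project $p_j$ has degree $|M(p_j)|+|M^+(p_j)|$. Because $l_k$ is undersubscribed in $M$, the multigraph has strictly more $M^+$-edges than $M$-edges, so some component does too, and tracing such a component—using stability of both $M$ and $M^+$ and the optimality of $M^+$—one exhibits a blocking pair of one of them. This counting/alternating-path argument for~(i) is the technical heart of the theorem; once it is in place, parts~(ii) and~(iii) reduce to the bookkeeping around the blocking conditions (S1)--(S2) and (P1)--(P4) sketched above. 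I expect the main obstacle to be exactly this last case: ruling out a discrepancy between $|M(l_k)|$ and $|M^+(l_k)|$ when the displaced students of $M^+(l_k)$ sit on projects that are full in $M$ with uniformly better-ranked students, where no single-pair blocking argument suffices and the alternating structure across $P_k$ must be exploited.
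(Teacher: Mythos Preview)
The paper does not itself prove Theorem~\ref{thm:unpopular-students}: it is stated with a citation to \cite{AIM2007,O2020} and used as a black box throughout, so there is no in-paper proof to compare against.

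On the substance: your reductions of (ii) from (i), and of (iii) from (i) and (ii), are clean and correct. The gap is in your argument for (i), and it is self-inflicted. You first pick a \emph{student} $s\in M^+(l_k)\setminus M(l_k)$ and then must cope with the possibility that $p_j=M^+(s)$ is full in $M$; this spawns your ``delicate case'' and the hand-wavy alternating-component sketch, which as written is not a proof. If instead you pick a \emph{project} first, the difficulty evaporates. From $|M(l_k)|<|M^+(l_k)|$ there is some $p_j\in P_k$ with $|M(p_j)|<|M^+(p_j)|\le c_j$, so $p_j$ is automatically undersubscribed in $M$; now choose any $s\in M^+(p_j)\setminus M(p_j)$. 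Whether $s$ is unassigned in $M$, assigned in $M$ to a project of another lecturer, or assigned in $M$ to a different project of $l_k$, student-optimality of $M^+$ together with strictness of preferences gives (S1) or (S2), and since both $p_j$ and $l_k$ are undersubscribed in $M$, condition (P1) applies directly. Hence $(s,p_j)$ blocks $M$ in every case, with no need to worry about $p_j$ being full and no alternating-path machinery. With this one-line fix your outline becomes a complete proof of the theorem.
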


\noindent In the Rural Hospitals Theorem for the {\sc hr} model, an undersubscribed hospital is assigned the same set of residents in every stable matching, and each hospital receives the same number of residents across all stable matchings. However, these properties do not fully extend to {\sc spa-s}. In particular:
\begin{itemize}
    \item An undersubscribed lecturer may be assigned different sets of students in different stable matchings (see Figure~3 in \cite{AIM2007}).
    \item A project offered by a full lecturer in one stable matching may be assigned a different number of students in another stable matching (see Figure~4 in \cite{AIM2007}).
\end{itemize}
\noindent Nevertheless, the properties in Theorem \ref{thm:unpopular-students} hold across all stable matchings in any instance of {\sc spa-s}.

\subsubsection{Student Preferences over Matchings}
\noindent Let \( I \) be an instance of {\sc spa-s}, and let \( \mathcal{M} \) denote the set of all stable matchings in \( I \). Given two matchings \( M, M' \in \mathcal{M} \), a student \( s_i \in \mathcal{S} \) \emph{prefers} \( M \) to \( M' \) if \( s_i \) is assigned in both matchings and prefers \( M(s_i) \) to \( M'(s_i) \). Similarly, \( s_i \) is \emph{indifferent} between \( M \) and \( M' \) if either:
\begin{enumerate}[label=(\roman*)]
    \item \( s_i \) is unassigned in both \( M \) and \( M' \), or
    \item \( s_i \) is assigned the same project in both matchings, i.e., \( M(s_i) = M'(s_i) \).
\end{enumerate}

\subsubsection{Lecturer Preferences over Matchings}
\label{sect:lect-pref}
\noindent It is not immediately clear how to compare two stable matchings from the perspective of a lecturer. To formalise lecturer preferences over matchings, we adopt the definition proposed by Abraham et al. in ~\cite{AIM2007}. Let \( M \) and \( M' \) be two stable matchings in \( \mathcal{M} \). By Theorem~\ref{thm:unpopular-students}, \( |M| = |M'| \) and \( |M(l_k)| = |M'(l_k)| \) for each lecturer \( l_k \). Suppose that \( l_k \) is assigned different sets of students in \( M \) and \( M' \). Define
\[
M(l_k) \setminus M'(l_k) = \{s_1, \ldots, s_r\}, \quad 
M'(l_k) \setminus M(l_k) = \{s'_1, \ldots, s'_r\},
\]
where the students in each set are listed in the order they appear in \( l_k \)'s preference list \( \mathcal{L}_k \). Then \( l_k \) \emph{prefers} \( M \) to \( M' \) if \( l_k \) prefers \( s_i \) to \( s'_i \) for all \( i \in \{1, \ldots, r\} \). On the other hand, lecturer $l_k$ is \emph{indifferent} between $M$ and $M'$ if $l_k$ is not assigned to any student or is assigned the same set of students in $M$ and $M'$, i.e., $M(l_k) = M'(l_k)$.

\medskip
\noindent\textbf{Example.} Consider the two stable matchings \( M_1 \) and \( M_2 \) for instance \( I_1 \). Then:
\[
M_2(l_1) \setminus M_1(l_1) = \{s_4, s_3\}, \quad M_1(l_1) \setminus M_2(l_1) = \{s_5, s_2\}.
\]
The reader can verify that neither \( l_1 \) nor \( l_2 \) is assigned their most preferred set of students in both stable matchings. However, since \( l_1 \) prefers \( s_4 \) to \( s_5 \) and \( s_3 \) to \( s_2 \), it follows that \( l_1 \) prefers \( M_2 \) to \( M_1 \).

\subsection{Dominance relation}
\label{def:dr}
\noindent We now define the dominance relation that plays a central role in constructing the lattice structure of stable matchings. Let \( \mathcal{M} \) denote the set of all stable matchings in {\sc spa-s}. We show in Proposition~\ref{prop:dominance} that \( \mathcal{M} \), under the dominance relation \( \preceq \), forms a partial order. Unless stated otherwise, whenever we write \( M \preceq M' \), we refer to the \emph{student-oriented} dominance relation. References to the lecturer-oriented dominance relation will be made explicit.

\begin{definition}[\textbf{Student-oriented dominance relation}]
Let \( M, M' \in \mathcal{M} \). We say that \( M \) \emph{dominates} \( M' \), denoted \( M \preceq M' \), if and only if each student prefers \( M \) to \( M' \), or is indifferent between them.
\end{definition}

\noindent\textbf{Example.} Consider instance \( I_1 \) in Figure~\ref{fig:example1}, which admits the following two stable matchings: \( M_1 = \{(s_1,p_1), (s_2,p_2), (s_3,p_3), (s_4,p_4), (s_5,p_5)\}, 
M_2 = \{(s_1,p_2), (s_2,p_3), (s_3,p_1), (s_4,p_5), (s_5,p_4)\}. \)
Each student prefers their assignment in \( M_1 \) to their assignment in \( M_2 \), so \( M_1 \) dominates \( M_2 \). 

\begin{definition}[\textbf{Lecturer-oriented dominance}]
Let \( M, M' \in \mathcal{M} \). We say that \( M \) dominates \( M' \) from the lecturers’ perspective if each lecturer either prefers \( M \) to \( M' \), or is indifferent between the two.
\end{definition}

\noindent We note that in the hospital-resident setting, given any two stable matchings \( M \) and \( M' \), each hospital either prefers all of its assigned residents in \( M \) to those assigned to it in \( M' \setminus M \), or prefers all its assigned residents in \( M' \) to those assigned to it in \( M \setminus M' \). This property does not hold in {\sc spa-s}. In {\sc spa-s}, if some lecturer \( l \) is assigned different sets of students in two stable matchings \( M \) and \( M' \), they may not prefer all students in \( M(l) \) to those in \( M'(l) \setminus M(l) \), nor all students in \( M'(l) \) to those in \( M(l) \setminus M'(l) \). However, it is always the case that \( l \) prefers at least one student in \( M(l) \setminus M'(l) \) to at least one student in \( M'(l) \setminus M(l) \), or vice versa. 

\medskip
\noindent In Figure \ref{fig:example1}, \( M_1 \) is the \textit{student-optimal} stable matching since every student is assigned to their best project in \( M_1 \). Similarly, the matching  
\( M_2 = \{(s_1, p_2), (s_2, p_3), (s_3, p_1), (s_4, p_5), (s_5, p_4)\} \) is also stable in \( I_1 \), and \( M_2 \) is the \textit{lecturer-optimal} stable matching. Clearly, in \( M_2 \), each lecturer is assigned a student they prefer to at least one of the students assigned to them in \( M_1 \). 

\medskip
\noindent In Lemma~\ref{lem:samelecturer1}, we prove that for any two stable matchings $M$ and $M'$, if a student is assigned to lecturer $l_k$ in both matchings, then there exists at least one student in $M'(l_k) \setminus M(l_k)$ and, consequently, one in $M(l_k) \setminus M'(l_k)$. We then prove in in Lemma~\ref{lem:student-pref-lecturer-pref} that if some student $s \in M(l_k) \setminus M'(l_k)$ prefers $M$ to $M'$, then $l_k$ prefers $M'$ to $M$.

\begin{proposition}
\label{prop:dominance}
Let \( \mathcal{M} \) be the set of all stable matchings in \( I \). The dominance relation \( \preceq \) defines a partial order on \( \mathcal{M} \), and we denote this partially ordered set as \( (\mathcal{M}, \preceq) \).
\end{proposition}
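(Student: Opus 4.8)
The plan is to verify the three defining properties of a partial order for the relation $\preceq$ on $\mathcal{M}$: reflexivity, antisymmetry, and transitivity, working directly from the definition of the student-oriented dominance relation together with the fact (Theorem~\ref{thm:unpopular-students}(ii)) that exactly the same students are assigned in every stable matching.

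\medskip
\noindent First, for \textbf{reflexivity}, I would observe that for any $M \in \mathcal{M}$ and any student $s_i$, either $s_i$ is unassigned in $M$ (so $s_i$ is indifferent between $M$ and $M$ by case (i) of the definition of indifference) or $M(s_i) = M(s_i)$ (so $s_i$ is indifferent by case (ii)); hence every student is indifferent between $M$ and itself, giving $M \preceq M$. For \textbf{antisymmetry}, suppose $M \preceq M'$ and $M' \preceq M$ with $M, M' \in \mathcal{M}$. Take any student $s_i$. By Theorem~\ref{thm:unpopular-students}(ii), $s_i$ is assigned in $M$ if and only if $s_i$ is assigned in $M'$. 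If $s_i$ is unassigned in both, there is nothing to check. If $s_i$ is assigned in both, then $M \preceq M'$ forces $s_i$ to prefer $M(s_i)$ to $M'(s_i)$ or to have $M(s_i) = M'(s_i)$; similarly $M' \preceq M$ forces the reverse preference or equality. Since a student's preference list is a strict linear order, the two preference possibilities are mutually exclusive, so we must have $M(s_i) = M'(s_i)$. As this holds for every student, $M = M'$. For \textbf{transitivity}, suppose $M \preceq M'$ and $M' \preceq M''$. Fix a student $s_i$; by Theorem~\ref{thm:unpopular-students}(ii) its assigned/unassigned status is the same in all three matchings. If it is unassigned throughout, it is indifferent between $M$ and $M''$. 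If it is assigned in all three, then from the two hypotheses $s_i$ weakly prefers $M(s_i)$ to $M'(s_i)$ and weakly prefers $M'(s_i)$ to $M''(s_i)$ in the strict order given by its preference list; transitivity of that linear order yields that $s_i$ weakly prefers $M(s_i)$ to $M''(s_i)$, i.e.\ $s_i$ prefers $M$ to $M''$ or is indifferent. Hence $M \preceq M''$.

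\medskip
\noindent The argument is essentially bookkeeping once one notices that $\preceq$ is, student by student, the pointwise lift of each student's individual strict preference order (extended to a weak order by allowing indifference/equality) and that Theorem~\ref{thm:unpopular-students}(ii) guarantees the set of assigned students is common to all of $\mathcal{M}$, so no student can be ``assigned in one matching but not the other'' within $\mathcal{M}$. The only point requiring a little care---and the closest thing to an obstacle---is making explicit why, in the antisymmetry step, ``$s_i$ prefers $M$ to $M'$'' and ``$s_i$ prefers $M'$ to $M$'' cannot both hold: this is precisely the irreflexivity/asymmetry of the strict order on $s_i$'s preference list, which forces the equality $M(s_i) = M'(s_i)$. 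I would state this reliance on the strictness of preference lists explicitly, since it is the one place the hypothesis of strict preferences (rather than ties) is used.
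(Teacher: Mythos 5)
Your proposal is correct and follows essentially the same route as the paper: verify reflexivity, antisymmetry and transitivity student-by-student, using Theorem~\ref{thm:unpopular-students}(ii) to ensure each student's assigned/unassigned status agrees across stable matchings and the strictness of preference lists to rule out mutual strict preference. The only cosmetic difference is that the paper phrases antisymmetry as a proof by contradiction while you argue it directly, which changes nothing of substance.
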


\noindent We remark that the proof given below follows a similar line of argument to that presented in \cite{O2020}.

\begin{proof}
We show that the dominance relation \( \preceq \) on \( \mathcal{M} \) is:  
(i) reflexive, (ii) anti-symmetric, and (iii) transitive.  

\begin{enumerate}[label=(\roman*)]
    \item \textbf{Reflexive:}  
    Let \( M \in \mathcal{M} \). Clearly, \( M \preceq M \), since every student is indifferent between \( M \) and itself. Thus, \( \preceq \) is reflexive.

    \item \textbf{Anti-symmetric:}  
    Let \( M, M' \in \mathcal{M} \) such that \( M \preceq M' \) and \( M' \preceq M \). Then \( M = M' \). Suppose, for contradiction, that \( M \neq M' \). Then there exists some student \( s_i \) such that \( s_i \) is assigned in both \( M \) and \( M' \), and \( M(s_i) \neq M'(s_i) \). Since \( M \preceq M' \), \( s_i \) prefers \( M(s_i) \) to \( M'(s_i) \). Similarly, \( M' \preceq M \) implies \( s_i \) prefers \( M'(s_i) \) to \( M(s_i) \). This is a contradiction. Hence, \( M = M' \), and \( \preceq \) on $\mathcal{M}$ is anti-symmetric.  

    \item \textbf{Transitive:}  
    Let \( M, M', M'' \in \mathcal{M} \) such that \( M \preceq M' \) and \( M' \preceq M'' \). We claim that \( M \preceq M'' \). By Theorem \ref{thm:unpopular-students}, we know that exactly the same students are unassigned in all stable matchings. Thus, every student who is unassigned in $M$ is unassigned in $M''$, and every unassigned student is indifferent between $M$ and $M''$. Clearly, every student who is assigned to the same project in $M$ and $M''$ is indifferent between $M$ and $M''$.

    \medskip
    \noindent Now, let $s_i$ be some student who is assigned to different projects in both $M$ and $M''$, say $M(s_i)$ and $M''(s_i)$ respectively. First, suppose that $M(s_i) \neq M'(s_i)$; since $M \preceq M'$, it follows that $s_i$ prefers $M(s_i)$ to $M'(s_i)$. Further, (a) if $M'(s_i) = M''(s_i)$ then $s_i$ prefers $M(s_i)$ to $M''(s_i)$, and (b) if $M'(s_i) \neq M''(s_i)$, $M' \preceq M''$ implies that $s_i$ prefers $M'(s_i)$ to $M''(s_i)$, and since the preference lists are strictly ordered, $s_i$ prefers $M(s_i)$ to $M''(s_i)$. Now, suppose that $M'(s_i) = M''(s_i)$. It follows that $M'(s_i) \neq M''(s_i)$; thus $M' \preceq M''$ implies that $s_i$ prefers $M'(s_i)$ to $M''(s_i)$. This implies that $s_i$ prefers $M(s_i)$ to $M''(s_i)$. Hence our claim holds; and therefore $\preceq$ on $\mathcal{M}$ is transitive.
\end{enumerate}
\end{proof}

\begin{definition}[Distributive lattice \cite{GI1989}]
\phantomsection
\label{def:dl}
\noindent Let $A$ be a set and let $\preceq$ be an ordering relation defined on $A$. The partial order $(A,\preceq)$ is a distributive lattice if:
    \begin{enumerate}[label = (\roman*)]
        \item each pair of element $x,y \in A$ has a greatest lower bound, or meet, denoted $x \land y$, such that $x \land y \preceq x$, $x \land y \preceq y$, and there is no element $z \in A$ for which $z \preceq x$, $z \preceq y$ and $x \land y \preceq z$;
        
        \item each pair of element $x,y \in A$ has a least upper bound, or join, denoted $x \lor y$, such that $x \preceq x \lor y$, $y \preceq x \lor y$, and there is no element $z \in A$ for which $x \preceq z$, $y \preceq z$ and $z \preceq x \lor y$;
        \item the join and meet distribute over each other, i.e., for $x,y,z \in A$, $x \lor (y \land z) = (x \lor y) \land (x \lor z)$ and $x \land (y \lor z) = (x \land y) \lor (x \land z)$.
    \end{enumerate}

\end{definition}

\section{Structural Properties of Stable Matchings}
\label{sect:structural-lemmas}
\noindent In this section, we present new results that illustrate lecturers' preferences over matchings when a student prefers one stable matching to another. These results will be used in the next section to prove that the set of stable matchings forms a distributive lattice. We first present Proposition~\ref{prop:s-pref-l-prefs}, which is used in the proofs of Lemmas~\ref{lem:samelecturer1} and~\ref{lem:student-pref-lecturer-pref}. 

\medskip
\noindent
Let $M$ and $M'$ be two stable matchings in a {\sc spa-s} instance $I$.  
In Lemma~\ref{lem:samelecturer1}, we show that if a student \( s_i \) is assigned to different projects offered by the same lecturer \( l_k \) in \( M \) and \( M' \), and \( s_i \) prefers \( M \) to \( M' \), then \( l_k \) prefers some student in \( M'(l_k) \setminus M(l_k) \) to \( s_i \).  
In Lemma~\ref{lem:student-pref-lecturer-pref}, we show that if there exists a student \( s \in M(l_k) \setminus M'(l_k) \) who prefers \( M \) to \( M' \), then \( l_k \) prefers \( M' \) to \( M \).  
In Lemma~\ref{lem:s-prefers-l-prefers1}, we show that if \( s_i \) is assigned to \( p_j \) offered by \( l_k \) in \( M' \) and prefers \( M \) to \( M' \), then \( l_k \) prefers \( s_i \) to each student in \( M(p_j) \setminus M'(p_j) \), or, if \( p_j \) is undersubscribed in \( M \), to each student in \( M(l_k) \setminus M'(l_k) \).  
Finally, in Lemma~\ref{lem:samelecturer2}, we prove the symmetric case of Lemma~\ref{lem:samelecturer1}: if \( s_i \) is assigned to different projects offered by the same lecturer \( l_k \), then \( l_k \) prefers \( s_i \) to some student in \( M(l_k) \setminus M'(l_k) \).

\begin{proposition}
\label{prop:s-pref-l-prefs}
Let $M$ and $M'$ be two stable matchings in $I$, and let $s$ be some student assigned in $M$ to a project $p_j$ offered by lecturer $l_k$. If $s$ prefers $M$ to $M'$ and either $s \in M'(l_k)$ or $l_k$ prefers $s$ to at least one student in $M'(l_k)$, then $p_j$ is full in $M'$.
\end{proposition}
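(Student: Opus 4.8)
The plan is to argue by contradiction. Suppose $p_j$ is not full in $M'$. Since $M'$ is a matching we have $|M'(p_j)| \le c_j$, so this means $p_j$ is undersubscribed in $M'$, i.e., $|M'(p_j)| < c_j$. Under this assumption I will exhibit $(s,p_j)$ as a blocking pair for $M'$, contradicting the stability of $M'$; hence $p_j$ must be full in $M'$.

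First I would dispatch the preliminary requirements for $(s,p_j)$ to be a candidate blocking pair. Since $s$ is assigned to $p_j$ in $M$, the pair $(s,p_j)$ is acceptable (so $p_j\in A_s$ and $s\in\mathcal{L}_k$). Also, because $s$ prefers $M$ to $M'$, by definition $s$ is assigned in $M'$ and prefers $M(s)=p_j$ to $M'(s)$; in particular $M'(s)\neq p_j$, so $(s,p_j)\notin M'$, and the student-side condition (S2) holds.

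Next I would verify the project-side condition by a short case analysis on the status of $l_k$ in $M'$. If $l_k$ is undersubscribed in $M'$, then $p_j$ undersubscribed together with $l_k$ undersubscribed is exactly (P1). If $l_k$ is full in $M'$, I split on whether $s\in M'(l_k)$. If $s\in M'(l_k)$, then $p_j$ undersubscribed, $l_k$ full, and $s\in M'(l_k)$ give (P2). If $s\notin M'(l_k)$, then the disjunctive hypothesis forces its second clause, namely that $l_k$ prefers $s$ to at least one student in $M'(l_k)$, and hence $l_k$ prefers $s$ to the worst student in $M'(l_k)$, which is (P3). In every case $(s,p_j)$ blocks $M'$, the desired contradiction, so $p_j$ is full in $M'$.

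The argument is essentially bookkeeping over the four blocking conditions (S1)--(S2) and (P1)--(P4), so I do not anticipate a genuine obstacle; the one point requiring care is using the hypothesis correctly, i.e., observing that when $l_k$ is full in $M'$ the clause ``$s\in M'(l_k)$'' delivers (P2) and the clause ``$l_k$ prefers $s$ to at least one student in $M'(l_k)$'' delivers (P3), so that the two stated alternatives together cover precisely the full-lecturer subcase while (P1) handles the undersubscribed-lecturer subcase.
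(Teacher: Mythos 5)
Your proposal is correct and follows essentially the same argument as the paper: assume $p_j$ is undersubscribed in $M'$ and derive a contradiction by showing $(s,p_j)$ blocks $M'$; the paper simply states this blocking-pair conclusion directly, while you spell out the routine case analysis over (P1)--(P3) and the verification of (S2). No gaps.
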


\begin{proof}
Let $s$ be some student assigned in $M$ to $p_j$ offered by $l_k$, where $s$ prefers $M$ to $M'$. Suppose, for contradiction, that $p_j$ is undersubscribed in $M'$. Then, if $s \in M'(l_k)$ or $l_k$ prefers $s$ to some student in $M'(l_k)$, it follows that $(s,p_j)$ forms a blocking pair in $M'$. This contradicts the stability of $M'$. Hence, $p_j$ is full in $M'$ and our claim holds.
\end{proof}

\begin{lemma}
\label{lem:samelecturer1}
Let \( M \) and \( M' \) be two stable matchings in \( I \). If some student \( s_i \) is assigned in \( M \) and \( M' \) to different projects offered by the same lecturer \( l_k \), and $s_i$ prefers $M$ to $M'$, then there exists some other student $s_r \in M'(l_k) \setminus M(l_k)$ such that $l_k$ prefers $s_r$ to $s_i$. Thus, \( M(l_k) \neq M'(l_k) \).
\end{lemma}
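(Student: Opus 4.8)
The plan is to set up the situation carefully and exploit the counting identity $|M(l_k)| = |M'(l_k)|$ from Theorem~\ref{thm:unpopular-students}(i), together with the stability of $M'$ and Proposition~\ref{prop:s-pref-l-prefs}. Write $p_j = M(s_i)$ and $p_{j'} = M'(s_i)$, both offered by $l_k$, with $p_j \neq p_{j'}$. Since $s_i$ prefers $M$ to $M'$, $s_i$ prefers $p_j$ to $p_{j'}$. Note immediately that $s_i \in M'(l_k)$ (via $p_{j'}$) and $s_i \in M(l_k)$ (via $p_j$), so $s_i$ does not itself witness a difference in the symmetric-difference sets; the claim is that $l_k$ is nonetheless assigned different student sets, and in fact some student strictly better than $s_i$ lies in $M'(l_k)\setminus M(l_k)$.

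First I would apply Proposition~\ref{prop:s-pref-l-prefs}: since $s_i$ prefers $M$ to $M'$ and $s_i \in M'(l_k)$, the project $p_j$ is full in $M'$. In particular $s_i \notin M'(p_j)$ (as $M'(s_i) = p_{j'} \neq p_j$), so $M'(p_j)$ consists of $c_j$ students none of whom is $s_i$. Now suppose for contradiction that every student in $M'(l_k)$ is also in $M(l_k)$, or more generally that $l_k$ does not prefer any student of $M'(l_k)\setminus M(l_k)$ to $s_i$. I would then argue that $(s_i, p_j)$ blocks $M'$. Indeed $(s_i,p_j)$ is acceptable and $s_i$ prefers $p_j$ to $M'(s_i) = p_{j'}$, so condition (S2) holds. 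For the project side: $p_j$ is full in $M'$, so I need condition (P4), namely that $l_k$ prefers $s_i$ to the worst student in $M'(p_j)$. This is exactly the point where the counting argument enters — I must show $l_k$ prefers $s_i$ to some student in $M'(p_j)$, and that requires comparing the $l_k$-preference rank of $s_i$ against the students $l_k$ actually holds in $M'$ versus $M$.

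The key step, and the main obstacle, is the following counting/ranking argument. Consider the students assigned to $l_k$. In $M$, $l_k$ holds $s_i$ (at $p_j$) among its $d_k$ (or fewer) students; in $M'$, $l_k$ holds $s_i$ (at $p_{j'}$) and the full set $M'(p_j)$. I would show that if $l_k$ preferred $s_i$ to \emph{no} student in $M'(p_j)$, i.e. every student in $M'(p_j)$ is strictly better than $s_i$ to $l_k$, then combined with the assumption that every student of $M'(l_k)\setminus M(l_k)$ is no better than $s_i$, we get $M'(p_j) \subseteq M(l_k)$ — each such student is better than $s_i$ hence (by the negated hypothesis) cannot be in $M'(l_k)\setminus M(l_k)$, so lies in $M(l_k)$. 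But these $c_j$ students of $M'(p_j)$ are distinct from $s_i$, and $s_i \in M(l_k)$, giving at least $c_j + 1$ distinct students assigned to $l_k$ in $M$ who are all in $M'(l_k)$; meanwhile these same students plus the analysis of where $s_i$ sits lets me compare $|M(l_k)|$ and $|M'(l_k)|$ and derive a contradiction with $|M(l_k)| = |M'(l_k)|$, or alternatively directly exhibit that $p_j$ is oversubscribed in $M$ (since $s_i \in M(p_j)$ and the $c_j$ students of $M'(p_j)$ would also have to fit). I expect the cleanest route is: the negated hypothesis forces $M'(p_j) \setminus \{s_i\} \subseteq M(l_k) \setminus M'(l_k)$ is empty, push everything into $M(l_k)$, and contrast cardinalities using that $s_i \notin M'(p_j)$ but $s_i \in M(p_j) \subseteq M(l_k)$, overflowing $l_k$'s count. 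Once the contradiction is reached, $(s_i,p_j)$ blocks $M'$ via (P4), contradicting stability of $M'$; hence $l_k$ prefers some $s_r \in M'(l_k)\setminus M(l_k)$ to $s_i$, and $M(l_k) \neq M'(l_k)$ follows because that $s_r \notin M(l_k)$.
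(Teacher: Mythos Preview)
Your proposal has a genuine gap at the counting step. You correctly deduce (from stability of $M'$ and fullness of $p_j$ in $M'$) that $l_k$ prefers every student in $M'(p_j)$ to $s_i$, and hence, under the negated hypothesis, that $M'(p_j) \subseteq M(l_k)$. But from $M'(p_j) \cup \{s_i\} \subseteq M(l_k)$ you cannot conclude that $p_j$ is oversubscribed in $M$: the $c_j$ students of $M'(p_j)$ are assigned to $l_k$ in $M$, but there is no reason they must be assigned to $p_j$ --- they may sit at other projects of $l_k$. Likewise, knowing that $c_j+1$ students lie in $M(l_k)\cap M'(l_k)$ gives no contradiction with $|M(l_k)|=|M'(l_k)|$; it only bounds the intersection from below. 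A small example shows the obstruction: take $l_k$ offering $p_1,p_2$ with capacity~1 each, $d_k=2$, and two students $s_1,s_2$ with $M=\{(s_1,p_1),(s_2,p_2)\}$, $M'=\{(s_1,p_2),(s_2,p_1)\}$. Here $M(l_k)=M'(l_k)=\{s_1,s_2\}$, $M'(p_1)=\{s_2\}\subseteq M(l_k)$, yet nothing is oversubscribed. Your one-shot argument stops here without a contradiction.

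The paper's proof gets past this by \emph{iterating}. Having found $s_2 \in M'(p_j)$ with $l_k$ preferring $s_2$ to $s_i$, one shows (via stability of $M$) that $s_2$ must prefer $M(s_2)$ to $p_j$; under the assumption $M(l_k)=M'(l_k)$ the project $M(s_2)$ is again offered by $l_k$, so $s_2$ is in the same configuration as $s_i$ was. Repeating produces an infinite strictly $l_k$-increasing chain of students, which is impossible. Dropping the assumption, the chain terminates precisely when some $s_r$ lies in $M'(l_k)\setminus M(l_k)$, and by construction $l_k$ prefers $s_r$ to $s_i$. Your argument is missing this inductive mechanism; without it, the single step you describe does not close.
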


\begin{proof}
Let $M$ and $M'$ be two stable matchings in $I$. Let \( s_i \) be some student assigned to different projects in \( M \) and \( M' \), both offered by the same lecturer \( l_k \), and suppose \( s_i \) prefers \( M \) to \( M' \). Suppose for contradiction that \( M(l_k) = M'(l_k) \), i.e., \( M'(l_k) \setminus M(l_k) = \emptyset \). We construct sequences \( \langle s_1, s_2, s_3, \ldots \rangle \) and \( \langle p_0, p_1, p_2, \ldots \rangle \) of students and projects such that for each \( t \geq 2 \):
\begin{enumerate}[label=(\arabic*)]
    \item \( s_t \) prefers \( p_t \) to \( p_{t-1} \),
    \item \( (s_t, p_t) \in M \setminus M' \) and \( (s_t, p_{t-1}) \in M' \setminus M \),
    \item \( l_k \) offers both \( p_t \) and \( p_{t-1} \),
    \item \( l_k \) prefers \( s_t \) to \( s_{t-1} \).
\end{enumerate}

\noindent We prove by induction that these properties hold for all \( t \geq 2 \). Let $s_1 = s_i$.

\medskip
\noindent \textbf{Base case (\( t = 2 \)).}  
Let \( p_1 = M(s_1) \), \( p_0 = M'(s_1) \). Since \( s_1 \) prefers \( M \) to \( M' \), it follows that \( s_1 \) prefers \( p_1 \) to \( p_0 \). Moreover, \( (s_1, p_1) \in M \setminus M' \), \( (s_1, p_0) \in M' \setminus M \), and both projects are offered by \( l_k \). Since \( M' \) is stable, one of the following conditions hold:
\begin{enumerate}[label=(\roman*)]
    \item \( p_1 \) is full in \( M' \), and \( l_k \) prefers the worst student in \( M'(p_1) \) to \( s_1 \); or
    \item \( p_1 \) is undersubscribed in \( M' \), \( l_k \) is full in \( M' \), \( s_1 \notin M'(l_k) \), and \( l_k \) prefers the worst student in \( M'(l_k) \) to \( s_1 \).
\end{enumerate}

\noindent Since \( s_1 \in M'(l_k) \), it follows from Proposition~\ref{prop:s-pref-l-prefs} that \( p_1 \) is full in \( M' \) and case~(i) holds. In this case, since \( (s_1, p_1) \in M \setminus M' \), there exists some student \( s_2 \in M'(p_1) \setminus M(p_1) \); otherwise, \( p_1 \) would be oversubscribed in \( M \). Furthermore, \( l_k \) prefers \( s_2 \) to \( s_1 \). Now, in \( M \), \( s_2 \) must be assigned to some project \( p_2\) such that $s_2$ prefers $p_2$ to $p_1$; otherwise, \( (s_2, p_1) \) would block \( M \). Hence \( (s_2, p_2) \in M \setminus M' \). Moreover, since \( s_2 \in M'(l_k) \), and we assumed \( M(l_k) = M'(l_k) \), it follows that \( l_k \) also offers \( p_2 \). Thus, properties (1)–(4) hold for \( t = 2 \), completing the base case.

\medskip
\noindent \textbf{Inductive step.}  
Assume that properties (1) - (4) above hold for some $t=q-1 \geq 2$. We now show that the properties also hold for \( t = q \). By the inductive hypothesis:
\begin{enumerate}[label=(\arabic*)]
    \item \( s_{q-1} \) prefers \( p_{q-1} \) to \( p_{q-2} \),
    \item \( (s_{q-1}, p_{q-1}) \in M \setminus M' \), \( (s_{q-1}, p_{q-2}) \in M' \setminus M \),
    \item \( l_k \) offers both \( p_{q-1} \) and \( p_{q-2} \).
    \item \( l_k \) prefers \( s_{q-1} \) to \( s_{q-2} \),
\end{enumerate}

\noindent Since \( M' \) is stable, one of the following conditions hold:

\begin{enumerate}[label=(\roman*)]
    \item \( p_{q-1} \) is full in \( M' \), and \( l_k \) prefers the worst student in \( M'(p_{q-1}) \) to \( s_{q-1} \);
    \item \( p_{q-1} \) is undersubscribed in \( M' \), \( l_k \) is full, \( s_{q-1} \notin M'(l_k) \), and \( l_k \) prefers the worst student in \( M'(l_k) \) to \( s_{q-1} \).
\end{enumerate}

\noindent Since \( s_{q-1} \in M'(l_k) \), it follows from Proposition~\ref{prop:s-pref-l-prefs} that $p_{q-1}$ is full in $M'$and case~(i) holds. Since $(s_{q-1},p_{q-1}) \in M \setminus M'$, there exists some student, say $s_{q}$, such that $(s_q,p_{q-1}) \in M' \setminus M$; otherwise, $p_{q-1}$ would be oversubscribed in $M$. Furthermore, \( l_k \) prefers \( s_q \) to \( s_{q-1} \). Now, in $M$, \( s_q \) must be assigned to some project \( p_q \) such that $s_q$ prefers $p_q$ to $p_{q-1}$; otherwise \( (s_q, p_{q-1}) \) would block \( M \). Hence, \( (s_q, p_q) \in M \setminus M' \). Also, since \( s_q \in M'(l_k) \) and \( M(l_k) = M'(l_k) \), it follows that $l_k$ also offers \( p_q \). Thus, properties (1) - (4) hold for \( t = q \), completing the inductive step.

\medskip
\noindent It is easy to see that for each new student that we identify, $l_k$ prefers $s_{{t}}$ to $s_{{t - 1}}$ and prefers $s_{{t-1}}$ to $s_{{t-2}}$, $\dots$, and prefers $s_{2}$ to $s_{1}$, just as in Figure \ref{fig:infiniteseq}. Hence, all identified students must be distinct. Since the sequence of distinct students is infinite, we reach an immediate contradiction. This contradiction implies that $M(l_k) \neq M'(l_k)$, and the sequence must terminate with some student \( s_r \in M'(l_k) \setminus M(l_k) \) such that \( l_k \) prefers \( s_r \) to \( s_1 \). Hence, \( M(l_k) \neq M'(l_k) \), as required.

\begin{figure}[H]
\centering
\small
\begin{tabular}{llll}
& $M$ & $M'$  & \\ 
\hline
$s_1:$ & $p_1$ & $p_0$ \;&  $l_k$: \; $\dots$ \; $s_t$ \; $s_{t-1}$ \; $\dots$ \; $s_3$ \; $s_2$ \; $s_1$ \\
$s_2:$ & $p_2$ & $p_1$ \;& \\ 
$s_3:$ & $p_3$ & $p_2$ \;& \\ 
$\vdots$\; & $\vdots$ & $\vdots$  &  \\
$s_{t-1}:$ & $p_{t-1}$ &  $p_{t-2}$ \;& \\ 
$s_t:$ & $p_t$  &  $p_{t-1}$ \; & \\
$\vdots$\; & $\vdots$ & $\vdots$  &  \\ 
\end{tabular}
\caption{An illustration of the sequence of students generated in Lemma~\ref{lem:samelecturer1}, with $(s_r, p_r) \in M$ and $(s_r, p_{r-1}) \in M'$ for all $r \geq 2$}
\label{fig:infiniteseq} 
\vspace{-0.5cm}
\end{figure}
\end{proof}

\noindent As an example, consider the instance \( I_1 \) in Figure \ref{fig:example1}. Here, \( s_1 \) is assigned to \( p_1 \) in \( M_1 \) and \( p_2 \) in \( M_2 \), where both \( p_1 \) and \( p_2 \) are offered by \( l_1 \). By Lemma \ref{lem:samelecturer1}, we can identify some \( s_3 \in M_2(l_1) \setminus M_1(l_1) \) such that \( l_1 \) prefers \( s_3 \) to \( s_1 \).

\begin{lemma}
\label{lem:student-pref-lecturer-pref}
Let \( M \) and \( M' \) be two stable matchings in an instance \( I \), and let \( l_k \) be a lecturer such that \( M(l_k) \neq M'(l_k) \).  
If there exists a student \( s \in M(l_k) \setminus M'(l_k) \) who prefers \( M \) to \( M' \), then \( l_k \) prefers \( M' \) to \( M \).
\end{lemma}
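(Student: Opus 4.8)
\noindent The plan is to argue by contradiction, building a chain of students linked through $l_k$'s preference list, in the spirit of the proof of Lemma~\ref{lem:samelecturer1}. By Theorem~\ref{thm:unpopular-students}(i) and $M(l_k)\neq M'(l_k)$, the sets $M(l_k)\setminus M'(l_k)=\{s_1,\dots,s_r\}$ and $M'(l_k)\setminus M(l_k)=\{s'_1,\dots,s'_r\}$ (each listed in $l_k$'s order of preference) have the same size $r\ge 1$; by Theorem~\ref{thm:unpopular-students}(ii) every student in these sets is assigned in both matchings, so each $s_i$ is assigned in $M'$ to a project offered by some lecturer other than $l_k$ and hence either prefers $M$ to $M'$ or prefers $M'$ to $M$. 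Recalling that ``$l_k$ prefers $M'$ to $M$'' means $l_k$ prefers $s'_i$ to $s_i$ for every $i$, I would assume for contradiction that this fails and fix the least index $q$ with $l_k$ preferring $s_q$ to $s'_q$; then $l_k$ prefers each of $s_1,\dots,s_q$ to each of $s'_q,\dots,s'_r$.

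\medskip
\noindent Next I would start the chain from the hypothesised student $s=s_m$, which is assigned in $M$ to a project $p^{(1)}=M(s_m)$ offered by $l_k$, prefers $p^{(1)}$ to $M'(s_m)$, and has $(s_m,p^{(1)})\notin M'$. Analysing $(s_m,p^{(1)})$ against $M'$: condition (S2) holds while (P2) is vacuous because $s_m\notin M'(l_k)$, so stability of $M'$ (via Proposition~\ref{prop:s-pref-l-prefs} together with conditions (P1), (P3), (P4)) forces either $p^{(1)}$ to be full in $M'$ with $l_k$ preferring every student of $M'(p^{(1)})$ to $s_m$, or $p^{(1)}$ undersubscribed and $l_k$ full in $M'$ with $l_k$ preferring every student of $M'(l_k)$ to $s_m$ (the latter already gives all of $\{s'_1,\dots,s'_r\}$ preferred to $s_m$). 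In the first case a size comparison ($|M(p^{(1)})|\le |M'(p^{(1)})|$ since $p^{(1)}$ is full in $M'$, and $s_m\in M(p^{(1)})\setminus M'(p^{(1)})$) yields a student $s^{(2)}\in M'(p^{(1)})\setminus M(p^{(1)})\subseteq M'(l_k)$ that $l_k$ prefers to $s_m$.

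\medskip
\noindent The propagation step splits on where $s^{(2)}$ lies. If $s^{(2)}\notin M(l_k)$ the chain terminates inside $M'(l_k)\setminus M(l_k)$. If $s^{(2)}\in M(l_k)$, then $s^{(2)}\in S_k(M,M')$, and I split on $s^{(2)}$'s own preference: if $s^{(2)}$ prefers $M$ to $M'$, then Lemma~\ref{lem:samelecturer1} supplies a student of $M'(l_k)\setminus M(l_k)$ that $l_k$ prefers to $s^{(2)}$, hence to $s_m$, and the chain continues from this strictly better student; if $s^{(2)}$ prefers $M'$ to $M$, then $(s^{(2)},p^{(1)})$ satisfies (S2) against $M$ with $s^{(2)}\in M(l_k)$, so conditions (P1) and (P2) force $p^{(1)}$ to be full in $M$, whereupon (P4) forces $l_k$ to prefer every student of $M(p^{(1)})$ -- in particular $s_m$ -- to $s^{(2)}$, contradicting $l_k$'s preference for $s^{(2)}$ over $s_m$; so this branch cannot occur. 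Each step produces a student strictly better for $l_k$ than all earlier ones, so the students are pairwise distinct; since $\mathcal{S}$ is finite the chain must terminate inside $M'(l_k)\setminus M(l_k)$, and comparing the rank of the terminal student and of $s_m$ against the defining property of $q$ is intended to deliver the contradiction. In all likelihood the argument must be run from a carefully chosen starting student of $M(l_k)\setminus M'(l_k)$, or from several of them, to pin down all $r$ indices.

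\medskip
\noindent I expect the main obstacle to be exactly this last point: turning ``there exists a student of $M'(l_k)\setminus M(l_k)$ that $l_k$ prefers to $s_m$'' into the full rank-by-rank domination ``$l_k$ prefers $s'_i$ to $s_i$ for every $i$'' requires tracking the positions of all the chain students simultaneously in both ordered sets $\{s_1,\dots,s_r\}$ and $\{s'_1,\dots,s'_r\}$, which is the most delicate and error-prone part of the proof and plausibly where the gap in \cite{O2020} lies. The clean new ingredient, by contrast, is the observation that a chain student who would prefer $M'$ to $M$ at once creates a blocking pair of $M$ through conditions (P1)/(P2).
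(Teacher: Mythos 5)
Your single-chain machinery is essentially the paper's own (including what you call the ``clean new ingredient'': the observation that a chain student who preferred $M'$ to $M$ would create a blocking pair of $M$ via (P1)/(P2)/(P4) is exactly the argument used inside the paper's chains), and the individual steps you describe are correct. The genuine gap is the one you flag yourself and then leave open: everything your chain proves is a statement about the one hypothesised student $s_m$, namely that either $s_m$ is dominated or there exists \emph{some} student of $M'(l_k)\setminus M(l_k)$ whom $l_k$ prefers to $s_m$. The conclusion of Lemma~\ref{lem:student-pref-lecturer-pref}, under the definition of lecturer preferences over matchings, is the full sorted comparison of the two difference sets, and your least-index-$q$ device is never connected to the chain: nothing ties the terminal student of the chain to position $q$, so no contradiction is actually derived. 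Saying that ``in all likelihood the argument must be run from several starting students'' names the missing step but does not supply it, so the attempt is incomplete precisely at the point where the lemma's content lies.

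The paper closes this not by tracking ranks at all, but by building a one-to-one mapping $f : M(l_k)\setminus M'(l_k) \to M'(l_k)\setminus M(l_k)$ with $l_k$ preferring $f(s)$ to $s$. A chain (of the kind you and the paper both construct, with every intermediate student lying in $S_k(M,M')$ and forced to prefer $M$ to $M'$ by the blocking-pair observation) is run from \emph{every} non-dominated student of $M(l_k)\setminus M'(l_k)$ who prefers $M$ to $M'$; when two chains pass through the same project $p_x$, the fact that $p_x$ is full in $M'$ guarantees enough distinct students of $M'(l_k)\setminus M(l_k)$ assigned to $p_x$ to choose distinct terminal students, which is what keeps $f$ injective; the dominated students are then paired arbitrarily with the leftover students of $M'(l_k)\setminus M(l_k)$, which is safe because they are beaten by everyone in $M'(l_k)$. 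An injection between the two equal-sized sets in which each image is preferred to its preimage immediately yields the index-by-index domination (the $i$ best students on the $M$-side map to $i$ distinct, strictly better students on the $M'$-side, so the $i$-th best on the $M'$-side beats the $i$-th best on the $M$-side), which is the aggregation step your proposal lacks. Two smaller points: your invocation of Lemma~\ref{lem:samelecturer1} when $s^{(2)}\in S_k(M,M')$ terminates the chain rather than ``continues'' it (the student it produces already lies in $M'(l_k)\setminus M(l_k)$), and the case where $(s_m,p^{(1)})$ forces $l_k$ to prefer all of $M'(l_k)$ to $s_m$ only shows $s_m$ is dominated, which again says nothing about the other members of $M(l_k)\setminus M'(l_k)$.
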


\begin{proof}
Let $M$ and $M'$ be two stable matchings in \( \mathcal{M} \). Let \( l_k \) be some lecturer such that \( M(l_k) \neq M'(l_k) \), and let \( s_i \in M(l_k) \setminus M'(l_k)\) be some student who prefers  \( M \) to \( M' \). To prove that $l_k$ prefers $M'$ to $M$, we construct a one-to-one mapping
\[
f : M(l_k) \setminus M'(l_k) \to M'(l_k) \setminus M(l_k)
\]
such that for each student \( s \in M(l_k) \setminus M'(l_k) \) who prefers $M$ to $M'$, \( l_k \) prefers \( f(s) \) to \( s \). That is, for each such student in $M(l_k) \setminus M'(l_k)$, we can find a corresponding student $s' \in M'(l_k) \setminus M(l_k)$ such that $l_k$ prefers $s'$ to $s$.

\medskip
\noindent We say that a student $s_x \in M(l_k) \) is a \emph{dominated student} if \( l_k \) prefers all students in \( M'(l_k) \) to \( s_x \). There are two possible cases for the students in $M(l_k)$ who prefer $M$ to $M'$:

\medskip
\noindent \textbf{Case 1:} All such students are dominated.\\
In this case, $l_k$ prefers all students in $M'(l_k)$ to each such student in $M(l_k) \setminus M'(l_k)$. Since $|M'(l_k) \setminus M(l_k)| = |M(l_k) \setminus M'(l_k)|$, we can construct a one-to-one mapping for each student $s \in M(l_k) \setminus M'(l_k)$ who prefers $M$ to $M'$ to another student $s' \in M'(l_k) \setminus M(l_k)$ such that $l_k$ prefers $s'$ to $s$. In this way, the mapping is valid, and \( l_k \) prefers \( M' \) to \( M \).

\bigskip
\noindent \textbf{Case 2:} There exists at least one student in \( M(l_k) \setminus M'(l_k) \) who prefers $M$ to $M'$ and is not dominated.\\
Let \( s_1 \in M(l_k) \setminus M'(l_k) \) be a non-dominated student who prefers \( M \) to \( M' \), and let \( p_1 = M(s_1) \). It follows that $l_k$ prefers $s_1$ to at least one student in $M'(l_k)$. Since $M'$ is a stable matching, then either (i) or (ii) holds as follows:
\begin{enumerate}[label=(\roman*)]
    \item \( p_1 \) is full in \( M' \), and \( l_k \) prefers the worst student in \( M'(p_1) \) to \( s_1 \); or
    \item \( p_1 \) is undersubscribed in \( M' \), \( l_k \) is full in \( M' \), $s_1 \notin M'(l_k)$, and \( l_k \) prefers each student in \( M'(l_k) \) to \( s_1 \).
\end{enumerate}
\noindent Since $l_k$ prefers $s_1$ to at least one student in $M'(l_k)$, it follows from Proposition~\ref{prop:s-pref-l-prefs} that $p_1$ is full in $M'$. So case (i) holds, and there exists some student $s_2 \in M'(p_1) \setminus M(p_1)$ such that $l_k$ prefers $s_2$ to $s_1$. First suppose that $s_2$ prefers $M'$ to $M$. If $p_1$ is full in $M$, then $l_k$ prefers $s_2$ to some student in $M(p_1)$ (namely $s_1$), so $(s_2,p_1)$ blocks $M$. If $p_1$ is undersubscribed in $M$, then $l_k$ prefers $s_2$ to some student in $M(l_k)$ (namely $s_1$), and $(s_2,p_1)$ again blocks $M$ (this holds whether \(l_k\) is full or undersubscribed in \(M\)). Therefore, $s_2$ prefers $M$ to $M'$. 

\medskip
\noindent Let $S_k(M,M')$ denote the set of students assigned in $M$ and $M'$ to different projects offered by lecturer $l_k$. If \( s_2 \in M'(l_k) \setminus M(l_k) \), we define \( f(s_1) = s_2 \) and stop. Otherwise, $s_2 \in S_k(M,M')$. In this case, let $p_2 = M(s_2)$. Then $p_2$ is offered by $l_k$ and $(s_2,p_2) \in M\setminus M'$. Since $M'$ is a stable matching, then either (i) or (ii) holds as follows: 
\begin{enumerate}[label=(\roman*)]
    \item \( p_2 \) is full in \( M' \), and \( l_k \) prefers the worst student in \( M'(p_2) \) to \( s_2 \); or
    \item \( p_2 \) is undersubscribed in \( M' \), \( l_k \) is full in \( M' \), $s_2 \notin M'(l_k)$, and \( l_k \) prefers each student in \( M'(l_k) \) to \( s_2 \).
\end{enumerate}
Since \( l_k \) prefers \( s_2 \) to \( s_1 \), and prefers $s_1$ to at least one student in $M'(l_k)$, then \( l_k \) prefers $s_2$ to at least one student in $M'(l_k)$. By Proposition \ref{prop:s-pref-l-prefs}, it follows that $p_2$ is full in $M'$. Since \( s_2 \in M(p_2) \setminus M'(p_2) \) and \( p_2 \) is full in \( M' \), there exists some student \( s_3 \in M'(p_2) \setminus M(p_2) \); for otherwise, \( p_2 \) is oversubscribed in \( M \). Moreover, \( l_k \) prefers \( s_3 \) to \( s_2 \). Suppose that $s_3$ prefers $M'$ to $M$. If $p_2$ is full in $M$, then $l_k$ prefers $s_3$ to some student in $M(p_2)$ (namely $s_2$), so $(s_3,p_2)$ blocks $M$. If $p_2$ is undersubscribed in $M$, then $l_k$ prefers $s_3$ to some student in $M(l_k)$ (namely $s_2$), and $(s_3,p_2)$ again blocks $M$. Therefore, $s_3$ prefers $M$ to $M'$.  If $s_3 \in M'(l_k) \setminus M(l_k)$, we define $f(s_1) = s_3$ and stop. Otherwise, we continue this process to obtain a sequence of students $s_4, s_5, \dots, s_t$, where each student in the sequence prefers $M$ to $M'$ and is preferred by $l_k$ to their predecessor; that is, $l_k$ prefers $s_r$ to $s_{r-1}$ for $1 < r \leq t$. Since the number of students is finite, this sequence must eventually terminate with a student $s_t \in M'(l_k) \setminus M(l_k)$, at which point we define $f(s_1) = s_t$.

\medskip
\noindent The sequence $s_1, s_2, \dots, s_t$ is such that
\begin{itemize}
    \item $s_1 \in M(l_k) \setminus M'(l_k)$
    \item $s_r \in M(l_k) \cap M'(l_k)$ for $1 < r < t$
    \item $s_t \in M'(l_k) \setminus M(l_k)$
    \item $l_k$ prefers $s_r$ to $s_{r-1}$ for $1 < r \leq t$.
    \item $s_r$ prefers $M$ to $M'$ for $1 \leq r \leq t$.
\end{itemize}

\noindent We repeat this construction for every non-dominated student in \( M(l_k) \setminus M'(l_k) \). We also ensure that if the same project appears in the sequences starting from multiple students in $M(l_k) \setminus M'(l_k)$, then we can assign a distinct student from $M'(l_k) \setminus M(l_k)$ to that project in each case. Suppose some project $p_x \in P_k$ arises multiple times in $M \setminus M'$, each time assigned to a different student. Then, as argued earlier, $p_x$ must be full in $M'$, and all students assigned to $p_x$ in $M'$ are preferred by $l_k$ to the students it was assigned to in $M$. Since each occurrence of $p_x$ corresponds to a different student in $M(l_k) \setminus M'(l_k)$, and $p_x$ is full in $M'$, there are sufficiently many students in $M'(l_k) \setminus M(l_k)$ assigned to $p_x$ from which we can choose. We select a distinct one for each occurrence, preserving the one-to-one mapping.

\medskip
\noindent Finally, for the dominated students, we assign the remaining unassigned students in \( M'(l_k) \setminus M(l_k) \) arbitrarily. Since for each dominated student \( s \in M(l_k) \setminus M'(l_k) \), $l_k$ prefers each student in $M'(l_k)$ to $s$, the condition that $l_k$ prefers $f(s)$ to $s$ still holds. Thus, in both cases, we construct a valid one-to-one mapping from \( M(l_k) \setminus M'(l_k) \) to \( M'(l_k) \setminus M(l_k) \) such that $l_k$ prefers each student in \( M'(l_k) \setminus M(l_k) \) to the corresponding student in \( M(l_k) \setminus M'(l_k) \). Therefore, \( l_k \) prefers \( M' \) to \( M \), as required.
\end{proof}

\begin{lemma}
\label{lem:s-prefers-l-prefers1}
Let \( M \) and \( M' \) be two stable matchings in a given instance \( I \). Suppose some student \( s_i \) is assigned to different projects in \( M \) and \( M' \), and that in \( M' \), \( s_i \) is assigned to a project \( p_j \) offered by lecturer \( l_k \). Suppose further that \( s_i \) prefers \( M \) to \( M' \). Then:
\begin{enumerate}[label=(\alph*)]
    \item If there exists a student in \( M(p_j) \setminus M'(p_j) \), then \( l_k \) prefers \( s_i \) to each student in \( M(p_j) \setminus M'(p_j) \).
    \item If \( p_j \) is undersubscribed in \( M \), then \( l_k \) prefers \( s_i \) to each student in \( M(l_k) \setminus M'(l_k) \).
\end{enumerate}
\end{lemma}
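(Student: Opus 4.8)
The plan is to prove both parts by contradiction: assume $l_k$ prefers some student $s_x$ to $s_i$, where $s_x\in M(p_j)\setminus M'(p_j)$ for part~(a) and $s_x\in M(l_k)\setminus M'(l_k)$ for part~(b). Note first that $s_x\neq s_i$ (since $s_i\notin M(p_j)$ in~(a), and $s_i\in M'(l_k)$ while $s_x\notin M'(l_k)$ in~(b)), that $s_x$ is assigned in both $M$ and $M'$ by Theorem~\ref{thm:unpopular-students}(ii), and that $s_x$ is not indifferent between $M$ and $M'$, because in $M$ it is assigned to a project offered by $l_k$ that it does not occupy in $M'$. So it remains to rule out, in turn, that $s_x$ prefers $M$ to $M'$ and that $s_x$ prefers $M'$ to $M$.

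Suppose $s_x$ prefers $M$ to $M'$, and write $q=M(s_x)$ (a project of $l_k$, equal to $p_j$ in part~(a)). Since $l_k$ prefers $s_x$ to $s_i\in M'(l_k)$, Proposition~\ref{prop:s-pref-l-prefs} applied to $s_x$ and $q$ gives that $q$ is full in $M'$; since $(s_x,q)\in M\setminus M'$, a counting argument produces a student $s_1\in M'(q)\setminus M(q)$, and because $(s_x,q)$ satisfies clause~(S2) with respect to $M'$ yet does not block it, clause~(P4) must fail, so $l_k$ prefers every student of $M'(q)$ — in particular $s_1$ — to $s_x$. The pair $(s_1,q)$ now plays the role formerly played by $(s_i,p_j)$: $s_1$ is assigned to $q$ in $M'$ but elsewhere in $M$, $l_k$ prefers $s_1$ to $s_x\in M(q)$, and stability of $M$ forces $s_1$ to again prefer $M$ to $M'$ (otherwise $(s_1,q)$ would block $M$ via (P4), (P3) or (P1)). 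Iterating this construction — each step passing from a student to one that $l_k$ strictly prefers — yields an infinite sequence of pairwise distinct students, contradicting finiteness, \emph{provided} every student produced stays assigned (in one of $M,M'$) to a project offered by $l_k$. The only way the construction can fail is if some student $s'$ in the sequence is assigned in $M$ to a project of a lecturer $l'\neq l_k$; but then $s'$ lies in $M(l_k)\triangle M'(l_k)$, and depending on which side, either Lemma~\ref{lem:samelecturer1} or Lemma~\ref{lem:student-pref-lecturer-pref} (together with Theorem~\ref{thm:unpopular-students}) applies to $s'$ and terminates the escape.

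The case in which $s_x$ prefers $M'$ to $M$ is handled symmetrically, interchanging the roles of $M$ and $M'$: the pair $(s_x,M'(s_x))$ satisfies clause~(S2) with respect to $M$, Proposition~\ref{prop:s-pref-l-prefs} (with $M$ and $M'$ swapped) forces $M'(s_x)$ to be full in $M$, and the same stepping argument produces an infinite $l_k$-ascending chain of distinct students. Since $s_x$ can prefer neither $M$ nor $M'$ to the other and is not indifferent, no such $s_x$ exists, proving~(a); part~(b) follows by the identical argument with $M(s_x)$ in place of $p_j$, noting additionally that when $s_x\in M(l_k)\setminus M'(l_k)$ prefers $M$ to $M'$, Lemma~\ref{lem:student-pref-lecturer-pref} already asserts that $l_k$ prefers $M'$ to $M$, which sharpens the bookkeeping. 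I expect the hard part to be exactly this bookkeeping: keeping the chains of students under control and ensuring that whenever a chain touches a project offered by a lecturer other than $l_k$ it is genuinely terminated by the earlier lemmas rather than merely deferred.
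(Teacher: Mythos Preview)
Your proposal has a genuine gap, and also a smaller logical slip.

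\medskip
\noindent\textbf{The slip (Case~1, part~(a)).} You write that since $(s_x,q)$ does not block $M'$, ``clause~(P4) must fail, so $l_k$ prefers every student of $M'(q)$ to $s_x$''. But in part~(a) you have $q=p_j$, and $s_i\in M'(p_j)=M'(q)$; your contradiction hypothesis is precisely that $l_k$ prefers $s_x$ to $s_i$. Hence $l_k$ prefers $s_x$ to the worst student in $M'(q)$, and (P4) \emph{does} hold: $(s_x,p_j)$ blocks $M'$ outright. So Case~1 for part~(a) is a one-line contradiction (this is exactly what the paper does), and the entire chain you build there is based on a false premise.

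\medskip
\noindent\textbf{The gap (Case~2, and your ``escape'' clause).} The heart of the lemma is the case where the offending student prefers $M'$ to $M$; there you say ``the same stepping argument produces an infinite $l_k$-ascending chain'', and that if the chain ever reaches a student assigned in $M$ to a project of some $l'\neq l_k$, then Lemma~\ref{lem:samelecturer1} or Lemma~\ref{lem:student-pref-lecturer-pref} ``terminates the escape''. This does not work. Concretely, take a student $s'$ in your chain with $s'\in M'(l_k)\setminus M(l_k)$ (so $M(s')$ is offered by some $l'\neq l_k$) who prefers $M$ to $M'$. Lemma~\ref{lem:samelecturer1} is about students in $S_k(M,M')$, so it does not apply. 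Lemma~\ref{lem:student-pref-lecturer-pref} requires a student in $M(l_k)\setminus M'(l_k)$ who prefers $M$ to $M'$, or (with $M,M'$ swapped) a student in $M'(l_k)\setminus M(l_k)$ who prefers $M'$ to $M$; your $s'$ is in $M'(l_k)\setminus M(l_k)$ but prefers $M$ to $M'$, so neither orientation applies. There is no contradiction to extract at this point, and the chain has genuinely left $l_k$'s projects with nothing to stop it.

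\medskip
\noindent The paper's proof handles exactly this by \emph{not} trying to keep the chain inside $l_k$. It lets the chain traverse a sequence of (possibly distinct) lecturers $l_0,l_1,l_2,\ldots$: at step $t$, student $s_t$ prefers $M'$ to $M$, has $(s_t,p_t)\in M'\setminus M$ with $p_t$ offered by $l_t$, and stability of $M$ (checking separately whether $p_t$ is full or undersubscribed in $M$) produces a student $s_{t+1}$ assigned in $M$ to some project of $l_t$, with $l_t$ preferring $s_{t+1}$ to $s_t$ and $s_{t+1}$ again preferring $M'$ to $M$. The distinctness argument is then made carefully across changing lecturers. Your sketch would need to be reorganised along these lines; the attempt to confine everything to $l_k$ and invoke Lemmas~\ref{lem:samelecturer1}--\ref{lem:student-pref-lecturer-pref} at the boundary does not close the argument.
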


\begin{proof}
Let $M$ and $M'$ be two stable matchings in $I$, and let $s_i$ be a student assigned to different projects in $M$ and $M'$, where $s_i$ prefers $M$ to $M'$. Let $p_j = M'(s_i)$, and let $l_k$ be the lecturer offering $p_j$. In Case (a), we show that if $M(p_j) \setminus M'(p_j)$ is non-empty, then there exists a student $s' \in M(p_j) \setminus M'(p_j)$ who prefers $M'$ to $M$. In Case~(b), we show that if $p_j$ is undersubscribed in $M$, then there exists a student $s' \in M(l_k)$ who is assigned to different projects in $M$ and $M'$, and who prefers $M'$ to $M$. In both cases, we use the student $s'$ identified to initiate an inductive argument. This produces a sequence of distinct students, where each student prefers $M'$ to $M$, and is preferred, by the lecturer they are assigned to in $M$, to the previous student in the sequence. 

\medskip
\noindent \textbf{Case (a):} Suppose there exists some student \( s' \in M(p_j) \setminus M'(p_j) \), and suppose for contradiction that \( l_k \) prefers \( s' \) to \( s_i \). Suppose \( p_j \) is full in \( M' \).
If \( s' \) prefers \( M \) to \( M' \), then \( (s', p_j) \) blocks \( M' \), since \( l_k \) prefers \( s' \) to some student in \( M'(p_j) \) (namely \( s_i \)), a contradiction. Now suppose that \( p_j \) is undersubscribed in \( M' \). Since \( l_k \) prefers \( s' \) to some student in \( M'(l_k) \) (again, \( s_i \)), then \( (s', p_j) \) blocks \( M' \), another contradiction. Therefore, \( s' \) prefers \( M' \) to \( M \).

\medskip
\noindent \textbf{Case (b):} Suppose \( p_j \) is undersubscribed in \( M \), and suppose for contradiction that \( l_k \) prefers each student in \( M(l_k) \setminus M'(l_k) \) to \( s_i \). Recall that $s_i \in M'(p_j) \setminus M(p_j)$. Since $p_j$ is undersubscribed in $M$ and \( |M(l_k)| = |M'(l_k)| \), there exists some project \( p' \in P_{k} \) and some student \( s' \in M(p') \setminus M'(p') \), where \( p' \) is undersubscribed in \( M' \).  First suppose that \( s' \) prefers \( M \) to \( M' \). If \( s' \in S_{k}(M, M') \), then \( (s', p') \) blocks \( M' \), a contradiction. Thus, \( s' \in M(l_k) \setminus M'(l_k) \). Since, by our assumption, \( l_k \) prefers each student in \( M(l_k) \setminus M'(l_k) \) to \( s_i \), it follows that \( l_k \) prefers \( s' \) to \( s_i \). However, the pair \( (s', p') \) again blocks \( M' \). Therefore, \( s' \) prefers \( M' \) to \( M \). 

\medskip
\noindent The remainder of the proof for Cases (a) and (b) proceeds in an identical manner. We therefore continue with the inductive step that satisfies the conditions of both cases in the following paragraph.

\medskip
\noindent Let $s'$ be the student, identified in either Case~(a) or Case~(b) above, who prefers $M'$ to $M$. Let $s_0 = s_i$, $l_0 = l_k$. Let $s_1 = s'$, $p_0 = M(s_1)$, and $p_1 = M'(s_1)$, and let $l_1$ be the lecturer who offers $p_1$. Note that it is possible that $p_0 = p_j$ and $l_1 = l_0$. We have $(s_1, p_1) \in M' \setminus M$, $(s_1, p_0) \in M \setminus M'$, and $l_0$ prefers $s_1$ to $s_0$. Moreover, $s_1$ prefers $M'$ to $M$, while $s_0$ prefers $M$ to $M'$.

\medskip
\noindent We now proceed by identifying students \( s_2, s_3, \dots \), projects \( p_2, p_3, \dots \), and lecturers \( l_2, l_3, \dots \), such that for each \( t \geq 2 \), the following properties hold:

\begin{enumerate}[label=(\arabic*)]
  \item \( s_t \) prefers \( M' \) to \( M \);
  \item \( (s_t, p_t) \in M' \setminus M \), where project \( p_t \) is offered by lecturer \( l_t \);
  \item  \( s_t \) is assigned in \( M \) to some project offered by lecturer \( l_{t-1} \), and:
  \begin{itemize}
      \item if \( p_{t-1} \) is full in \( M \), then \( (s_t, p_{t-1}) \in M \setminus M' \); 
      \item otherwise, there exists a project \( p'_{t-1}\) such that \( (s_t, p'_{t-1}) \in M \setminus M' \) (where $l_{t-1}$ offers both $p_{t-1}$ and $p'_{t-1}$).
  \end{itemize}

  \item Lecturer \( l_{t-1} \) prefers \( s_t \) to \( s_{t-1} \).
\end{enumerate}

\noindent \textbf{Base case (t = 2):}  Since $s_1$ prefers $M'$ to $M$, so $s_1$ prefers $p_1$ to $p_0$. Moreover, since $(s_1, p_1) \in M' \setminus M$ and $(s_1, p_0) \in M \setminus M'$, it follows that $p_1 \neq p_0$. Also, lecturer $l_0$ prefers $s_1$ to $s_0$. By the stability of $M$, one of the following two cases holds:
\begin{enumerate}[label=(\roman*)]
    \item \( p_1 \) is full in \( M \), and \( l_1 \) prefers the worst student in \( M(p_1) \) to \( s_1 \);
    \item \( p_1 \) is undersubscribed in \( M \), \( l_1 \) is full in \( M \), \( s_1 \notin M(l_1) \), and \( l_1 \) prefers the worst student in \( M(l_1) \) to \( s_1 \).
\end{enumerate}
\noindent Case (i): Since $p_1$ is full in $M$ and $(s_1, p_1) \notin M$, there exists another student $s_2$ such that $(s_2, p_1) \in M \setminus M'$, otherwise $p_1$ would be oversubscribed in $M$. Moreover, $l_1$ prefers $s_2$ to $s_1$. Clearly, $s_2$ is assigned in $M'$; let $p_2 = M'(s_2)$, and let $l_2$ be the lecturer who offers $p_2$. Then $(s_2, p_2) \in M' \setminus M$. If $s_2$ prefers $p_1$ to $p_2$, then the pair $(s_2, p_1)$ blocks $M'$, since $p_1$ is full in $M$, and $l_1$ prefers $s_2$ to $s_1 \in M'(p_1)$. Thus, $s_2$ must prefer $p_2$ to $p_1$, that is, $s_2$ prefers $M'$ to $M$. Note that $s_2 \neq s_1$, since $l_1$ prefers $s_2$ to $s_1$; and $s_2 \neq s_0$, since $s_0$ prefers $M$ to $M'$, while $s_2$ prefers $M'$ to $M$.

\medskip
\noindent Case (ii): Since $(s_1, p_1) \in M' \setminus M$ and $p_1$ is undersubscribed in $M$, there exists some project $p'_1$ offered by $l_1$, such that $p'_1$ is undersubscribed in $M'$; otherwise, $l_1$ would be oversubscribed in $M$. Thus, there exists some student $s_2$ such that $(s_2, p'_1) \in M \setminus M'$ and $l_1$ prefers $s_2$ to $s_1$. Moreover, $s_2$ is assigned in $M'$; let $p_2 = M'(s_2)$, and let $l_2$ be the lecturer who offers $p_2$. Then $(s_2, p_2) \in M' \setminus M$. If $s_2$ prefers $p'_1$ to $p_2$, then the pair $(s_2, p'_1)$ would block $M'$, since $p'_1$ is undersubscribed in $M'$ and $l_1$ prefers $s_2$ to $s_1$. Hence, $s_2$ prefers $p_2$ to $p'_1$, i.e., $s_2$ prefers $M'$ to $M$. As before, $s_2 \neq s_1$, since $l_1$ prefers $s_2$ to $s_1$, and $s_2 \neq s_0$, since $s_2$ prefers $M'$ to $M$, whereas $s_0$ prefers $M$ to $M'$.

\medskip
\noindent In both cases (i) and (ii), we have identified a student $s_2$ who prefers $M'$ to $M$, with $(s_2, p_2) \in M' \setminus M$, where $p_2$ is offered by lecturer $l_2$. Moreover, $s_2$ is assigned in $M$ to some project offered by $l_1$. If $p_1$ is full in $M$, then $(s_2,p_1) \in M \setminus M'$; otherwise, $(s_2,p'_1) \in M \setminus M'$. Moreover, $l_1$ prefers $s_2$ to $s_1$. Thus, properties (1)–(4) hold for $t = 2$, completing the base case.

\bigskip
\noindent \textbf{Inductive step:} Suppose properties~(1)-(4) hold for some $t = q - 1 \geq 2$, that is:
\begin{enumerate}[label=(\arabic*)]
  \item $s_{q-1}$ prefers $M'$ to $M$;
  \item $(s_{q-1}, p_{q-1}) \in M' \setminus M$, where $p_{q-1}$ is offered by lecturer $l_{q-1}$;
  \item $s_{q-1}$ is assigned in $M$ to some project offered by lecturer $l_{q-2}$, and:
  \begin{itemize}
      \item if $p_{q-2}$ is full in $M$, then $(s_{q-1}, p_{q-2}) \in M \setminus M'$;
      \item otherwise, there exists a project $p'_{q-2}$ such that $(s_{q-1}, p'_{q-2}) \in M \setminus M'$ (where $l_{q-2}$ offers both $p_{q-2}$ and $p'_{q-2}$).
  \end{itemize}
  \item Lecturer $l_{q-2}$ prefers $s_{q-1}$ to $s_{q-2}$.
\end{enumerate}

\noindent By the stability of $M$, one of the following two cases must hold:
\begin{enumerate}[label=(\roman*)]
  \item $p_{q-1}$ is full in $M$, and $l_{q-1}$ prefers the worst student in $M(p_{q-1})$ to $s_{q-1}$;
  \item $p_{q-1}$ is undersubscribed in $M$, $l_{q-1}$ is full in $M$, $s_{q-1} \notin M(l_{q-1})$, and $l_{q-1}$ prefers the worst student in $M(l_{q-1})$ to $s_{q-1}$.
\end{enumerate}

\noindent Case~(i): Since $p_{q-1}$ is full in $M$ and $(s_{q-1}, p_{q-1}) \notin M$, there exists a student $s_q$ such that $(s_q, p_{q-1}) \in M \setminus M'$, otherwise $p_{q-1}$ would be oversubscribed in $M$. Moreover, $l_{q-1}$ prefers $s_q$ to $s_{q-1}$. Clearly, $s_q$ is assigned in $M'$; let $p_q = M'(s_q)$, and let $l_q$ be the lecturer who offers $p_q$. Then $(s_q, p_q) \in M' \setminus M$. If $s_q$ prefers $p_{q-1}$ to $p_q$, then the pair $(s_q, p_{q-1})$ blocks $M'$, since $p_{q-1}$ is full in $M$ and $l_{q-1}$ prefers $s_q$ to $s_{q-1} \in M'(p_{q-1})$. Hence $s_q$ prefers $p_q$ to $p_{q-1}$, that is, $s_q$ prefers $M'$ to $M$.

\bigskip
\noindent Case~(ii): Since $(s_{q-1}, p_{q-1}) \in M' \setminus M$ and $p_{q-1}$ is undersubscribed in $M$, there must exist some project $p'_{q-1} \in P_{{q-1}}$ that is undersubscribed in $M'$, for otherwise $l_{q-1}$ would be oversubscribed in $M$. Then there exists a student $s_q$ such that $(s_q, p'_{q-1}) \in M \setminus M'$, and $l_{q-1}$ prefers $s_q$ to $s_{q-1}$. Moreover, $s_q$ is assigned in $M'$; let $p_q = M'(s_q)$ and let $l_q$ be the lecturer who offers $p_q$. Then $(s_q, p_q) \in M' \setminus M$. If $s_q$ prefers $p'_{q-1}$ to $p_q$, then the pair $(s_q, p'_{q-1})$ blocks $M'$, since $p'_{q-1}$ is undersubscribed in $M'$ and $l_{q-1}$ prefers $s_q$ to $s_{q-1} \in M'(l_{q-1})$. Hence $s_q$ prefers $p_q$ to $p'_{q-1}$, that is, $s_q$ prefers $M'$ to $M$.

\bigskip
\noindent In both cases (i) and (ii), we have identified a student $s_q$ who prefers $M'$ to $M$, with $(s_q, p_q) \in M' \setminus M$, where $p_q$ is offered by lecturer $l_q$. Moreover, $s_q$ is assigned in $M$ to some project offered by $l_{q-1}$. If $p_{q-1}$ is full in $M$, then $(s_q,p_{q-1}) \in M \setminus M'$; otherwise, $(s_q,p'_{q-1}) \in M \setminus M'$. Moreover, $l_{q-1}$ prefers $s_q$ to $s_{q-1}$. Thus, properties (1)–(4) hold for $t = q$, completing the base case.

\bigskip
\noindent It follows from the construction that each project $p_t$ differs from the previous project $p_{t-1}$, since each student $s_t$ is assigned to different projects in $M$ and $M'$. We now show that all students in the sequence $s_1, s_2, \dots$ are distinct, by induction on $t$. Clearly, $s_2 \ne s_1$, since the lecturer $l_1$ prefers $s_2$ to $s_1$, and $s_2 \ne s_0$, since $s_2$ prefers $M'$ to $M$ while $s_0$ prefers $M$ to $M'$. Now suppose, as the inductive hypothesis, that $s_1, \dots, s_t$ are all distinct for some $t \geq 2$, and suppose for contradiction that $s_{t+1} = s_q$ for some $1 \leq q \leq t$. In the construction, $s_{t+1}$ is selected by lecturer $l_t$ to prevent the pair $(s_t, p_t) \in M' \setminus M$ from blocking $M$. This means that either $(s_{t+1}, p_t) \in M \setminus M'$, if $p_t$ is full in $M$, or $(s_{t+1}, p'_t) \in M \setminus M'$, if $p_t$ is undersubscribed in $M$. In both cases, $l_t$ prefers $s_{t+1}$ to $s_t$.

\medskip
\noindent Since $s_{t+1} = s_q$, this student must have appeared earlier in the sequence and must have been selected to resolve a different blocking pair involving some earlier student $s_{q-1}$. In that case, $l_t$ is using the same student $s_q$ to resolve two different blocking pairs: one involving $(s_t, p_t)$ and one involving $(s_{q-1}, p_t)$. But by the inductive hypothesis, $s_t \ne s_{q-1}$, and so $l_t$ should have selected two different students. This contradicts the construction, which requires that each blocking pair is resolved by a student who has not already appeared in the sequence. Therefore, $s_{t+1} \ne s_q$ for all $1 \leq q \leq t$, and so the sequence $s_1, s_2, \dots$ consists of distinct students. Since the number of students is finite, the construction must eventually terminate. This establishes our claim and completes the proof.

\bigskip
\noindent To illustrate this proof, consider Figure \ref{fig:infiniteseq2}. Suppose that a student \( s_{0} \) prefers \( M \) to \( M' \), where in \( M' \), \( s_{0} \) is assigned to \( p_{0} \), a project offered by \( l_{0} \). Further, suppose there exists a student \( s_{1} \in M(p_{0}) \setminus M'(p_{0}) \), and suppose for a contradiction that \( l_{0} \) prefers \( s_{1} \) to \( s_{0} \). Clearly, if \( s_{1} \) also prefers \( M \) to \( M' \), then \( (s_{1}, p_{0}) \) blocks \( M' \). This implies that \( s_{1} \) prefers \( M' \) to \( M \). Let \( M'(s_{1}) \) be \( p_{1} \), where \( l_{1} \) offers $p_{1}$. To ensure the stability of \( M \), we continue identifying a sequence of distinct students, as illustrated in Figure \ref{fig:infiniteseq2}. However, since this sequence is infinite, we arrive at a contradiction.

\begin{figure}[H]
\centering
\small
\begin{tabular}{llll}
\hline
Students' preferences & Lecturers' preferences  & Offers \\ 
$s_0$: \; $p'_0$ \; $p_0$ \; &  $l_0$: \; $s_1$ \; $s_0$ & $p_0$ \\ 
$s_1$: \; $p_1$ \;  $p_0$ \; & $l_1$: \; $s_2$ \; $s_1$ & $p_1$ \\ 
$s_2$: \; $p_2$ \;  $p_1$ \; & $l_2$: \; $s_3$ \; $s_2$ & $p_2$ \\ 
$s_3$: \; $p_3$ \;  $p_2$ \; & $l_3$: \; $s_4$ \; $s_3$ & $p_3$ \\ 
\vdots \; \vdots \;  \vdots \; & \vdots &  \vdots \\ 
$s_t$: \; $p_t$ \;  $p_{t-1}$ \; & $l_t$: \; $s_{t+1}$ \; $s_t$ & $p_t$\\  
\vdots \; \vdots \;  \vdots \; & \vdots &  \vdots\\
\hline
\end{tabular}
\caption{A {\sc spa-s} instance illustrating the infinite sequence of students generated in Lemma~\ref{lem:s-prefers-l-prefers1}, where $(s_t, p_t) \in M'$ and $(s_t, p_{t-1}) \in M$.}
\label{fig:infiniteseq2}
\end{figure}
\vspace{-0.5cm}
\end{proof}

\begin{lemma}
\label{lem:samelecturer2}
Let \( M \) and \( M' \) be two stable matchings in \( I \). If some student \( s_i \) is assigned in \( M \) and \( M' \) to different projects offered by the same lecturer \( l_k \), and $s_i$ prefers $M$ to $M'$, then $l_k$ prefers $s_i$ to some student $s_z \in M(l_k) \setminus M'(l_k)$.
\end{lemma}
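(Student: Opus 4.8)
The plan is to adapt the sequence construction of Lemma~\ref{lem:samelecturer1}, but to build a chain of students that \emph{descends} in $l_k$'s preference list and terminates in a student of $M(l_k)\setminus M'(l_k)$. First I would record two easy facts: by Lemma~\ref{lem:samelecturer1} we have $M(l_k)\ne M'(l_k)$, and by Theorem~\ref{thm:unpopular-students}(i) $|M(l_k)|=|M'(l_k)|$, so $M(l_k)\setminus M'(l_k)\ne\emptyset$; thus there is a nonempty target set to aim at.

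Starting from $s^{(0)}:=s_i$, I would construct students $s^{(0)},s^{(1)},s^{(2)},\dots$, maintaining the invariant that each $s^{(r)}$ is assigned in both $M$ and $M'$ to distinct projects offered by $l_k$, that $s^{(r)}$ prefers $M$ to $M'$, and that $l_k$ prefers $s^{(r)}$ to $s^{(r+1)}$. Given $s^{(r)}$, put $q^{(r)}:=M'(s^{(r)})$. If $q^{(r)}$ is undersubscribed in $M$, then part~(b) of Lemma~\ref{lem:s-prefers-l-prefers1} applied to $s^{(r)}$ gives that $l_k$ prefers $s^{(r)}$ to every student in $M(l_k)\setminus M'(l_k)$; picking any such student $s_z$ and chaining $l_k$'s strict preferences $s_i=s^{(0)},s^{(1)},\dots,s^{(r)},s_z$ shows $l_k$ prefers $s_i$ to $s_z$, and we are done. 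Otherwise $q^{(r)}$ is full in $M$; since $s^{(r)}\in M'(q^{(r)})\setminus M(q^{(r)})$, a short counting argument (otherwise $M'$ would oversubscribe $q^{(r)}$) shows $M(q^{(r)})\setminus M'(q^{(r)})\ne\emptyset$, and part~(a) of Lemma~\ref{lem:s-prefers-l-prefers1} gives that $l_k$ prefers $s^{(r)}$ to every student in $M(q^{(r)})\setminus M'(q^{(r)})$. If some such student lies in $M(l_k)\setminus M'(l_k)$ we are again done as above; otherwise every such student lies in $S_k(M,M')$, and I would let $s^{(r+1)}$ be an arbitrary one of them.

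The crux is to verify that $s^{(r+1)}$ still satisfies the invariant, the only non-routine part being that $s^{(r+1)}$ prefers $M$ to $M'$. I would argue by contradiction: if $s^{(r+1)}$ preferred $M'$ to $M$, then, since $s^{(r+1)}$ is assigned to the project $q^{(r)}$ of $l_k$ in $M$ and to a different project in $M'$, applying part~(a) of Lemma~\ref{lem:s-prefers-l-prefers1} with the roles of $M$ and $M'$ interchanged — and using that $s^{(r)}\in M'(q^{(r)})\setminus M(q^{(r)})\ne\emptyset$ — would force $l_k$ to prefer $s^{(r+1)}$ to $s^{(r)}$, contradicting the construction. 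I expect this to be the main obstacle: it is exactly the step where the naive ``descending'' analogue of Lemma~\ref{lem:samelecturer1} based only on stability is inconclusive, and it is rescued precisely by invoking Lemma~\ref{lem:s-prefers-l-prefers1} in the reversed direction.

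Finally I would close the argument. Because $l_k$'s preference strictly decreases along $s^{(0)},s^{(1)},s^{(2)},\dots$, these students are pairwise distinct; since there are only finitely many students, the construction cannot continue indefinitely, so it must reach one of the two ``done'' cases above, each of which exhibits a student $s_z\in M(l_k)\setminus M'(l_k)$ with $l_k$ preferring $s_i$ to $s_z$ (note $s_z\ne s_i$ since $s_i\in M(l_k)\cap M'(l_k)$). I would also include a small figure, analogous to Figure~\ref{fig:infiniteseq}, depicting the descending chain of students.
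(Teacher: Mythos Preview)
Your proposal is correct and follows essentially the same approach as the paper's proof: both build a descending chain $s^{(0)},s^{(1)},\dots$ in $l_k$'s preference list within $S_k(M,M')$, using Lemma~\ref{lem:s-prefers-l-prefers1}(b) to terminate when $q^{(r)}$ is undersubscribed in $M$, Lemma~\ref{lem:s-prefers-l-prefers1}(a) to identify $s^{(r+1)}$ when $q^{(r)}$ is full in $M$, and Lemma~\ref{lem:s-prefers-l-prefers1}(a) with $M,M'$ swapped to verify that $s^{(r+1)}$ prefers $M$ to $M'$. The only cosmetic difference is that the paper wraps the construction in an initial ``suppose for contradiction that no such $s_z$ exists,'' whereas you phrase it as a direct terminating construction; the underlying argument is identical.
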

\begin{proof}
    Let $M$ and $M'$ be two stable matchings in $I$, and let \( s_1 \) be a student assigned to different projects in \( M \) and \( M' \), both offered by lecturer \( l_k \), where \( s_1 \) prefers \( M \) to \( M' \).  By Lemma~\ref{lem:samelecturer1}, there exists a student in \( M'(l_k) \setminus M(l_k) \) and, consequently, one in \( M(l_k) \setminus M'(l_k) \). Suppose, for a contradiction, that no student \( s \in M(l_k) \setminus M'(l_k) \) is worse than \( s_1 \) according to \( l_k \).  Let \( M(s_1)=p_0 \) and \( M'(s_1)=p_1 \), where \( s_1 \) prefers \( p_0 \) to \( p_1 \).  If \( p_1 \) is undersubscribed in \( M \), then by the second part of Lemma~\ref{lem:s-prefers-l-prefers1}, \( l_k \) prefers \( s_1 \) to each student in \( M(l_k) \setminus M'(l_k) \), a contradiction.  Hence \( p_1 \) must be full in \( M \).  Since \( (s_1,p_1) \in M' \setminus M \) and $p_1$ is full in $M$, there exists some \( (s_2,p_1) \in M \setminus M' \), and by the first part of Lemma~\ref{lem:s-prefers-l-prefers1}, \( l_k \) prefers \( s_1 \) to \( s_2 \).  

If \( s_2 \in M(l_k) \setminus M'(l_k) \), then \( l_k \) prefers \( s_1 \) to some student in \( M(l_k) \setminus M'(l_k) \), contradicting our assumption.  Hence \( s_2 \in S_k(M,M') \).  
Let \( M'(s_2)=p_2 \). First suppose that \( s_2 \) prefers \( p_2 \) to \( p_1 \), i.e. $s_2$ prefers $M'$ to $M$. Since \( s_1 \in M'(p_1) \setminus M(p_1) \), part (a) of Lemma~\ref{lem:s-prefers-l-prefers1} implies that \( l_k \) prefers \( s_2 \) to \( s_1 \), a contradiction.  
Thus \( s_2 \) prefers \( p_1 \) to \( p_2 \).  If \( p_2 \) is undersubscribed in \( M \), then by Lemma~\ref{lem:s-prefers-l-prefers1}, \( l_k \) prefers \( s_2 \) to each student in \( M(l_k) \setminus M'(l_k) \). Since \( l_k \) prefers \( s_1 \) to \( s_2 \),  it follows that $l_k$ prefers $s_1$ to student \( s \in M(l_k) \setminus M'(l_k) \), which again contradicts our assumption.  
Hence \( p_2 \) is full in \( M \).
Since \( (s_2,p_2) \in M' \setminus M \) and $p_2$ is full in $M$, there exists \( (s_3,p_2) \in M \setminus M' \), and by the first part of Lemma \ref{lem:s-prefers-l-prefers1}, \( l_k \) prefers \( s_2 \) to \( s_3 \).  

Again, if \( s_3 \in M(l_k) \setminus M'(l_k) \), then \( l_k \) prefers \( s_1 \) to \( s_2 \), and \( s_2 \) to some student in \( M(l_k) \setminus M'(l_k) \), contradicting our assumption.  
Thus \( s_3 \in S_k(M,M') \), and let \( M'(s_3) = p_3 \).  
Proceeding inductively as before, we obtain a sequence \( s_1, s_2, s_3, \ldots \) such that, for each \( t \ge 1 \), \( s_t \) prefers \( p_{t-1} \) to \( p_t \), \( (s_t,p_{t-1}) \in M \setminus M' \), \( (s_t,p_t) \in M' \setminus M \),
and both \( p_{t-1} \) and \( p_t \) are offered by \( l_k \), who prefers \( s_t \) to \( s_{t+1} \).  
Hence \( l_k \) prefers \( s_1 \) to \( s_2 \), \( s_2 \) to \( s_3 \), and so on, implying that all identified students are distinct.  
Since the number of students is finite, this sequence cannot continue indefinitely and must terminate with some \( s \in M(l_k) \setminus M'(l_k) \) such that \( l_k \) prefers \( s_1 \) to \( s \).
\end{proof}

\noindent The following corollary follows from Lemmas \ref{lem:student-pref-lecturer-pref} - \ref{lem:samelecturer2}:
\begin{corollary}
\label{cor:dominance-reversal}
Let $M$ and $M'$ be stable matchings in an instance $I$, and let $\preceq_S$ and $\preceq_L$ denote the student-oriented and lecturer-oriented dominance relations, respectively. 
Then $M \preceq_S M'$ if and only if $M' \preceq_L M$.
\end{corollary}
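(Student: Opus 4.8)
\textbf{Proof proposal for Corollary~\ref{cor:dominance-reversal}.}
The plan is to prove both implications by a symmetric argument, using the structural lemmas to translate a student's preference between two stable matchings into the corresponding lecturer's preference. First I would establish the forward direction: assume $M \preceq_S M'$, so every student either prefers $M$ to $M'$ or is indifferent. I want to show every lecturer $l_k$ either prefers $M'$ to $M$ or is indifferent. If $M(l_k) = M'(l_k)$, then $l_k$ is indifferent by definition and there is nothing to prove, so assume $M(l_k) \neq M'(l_k)$. By Theorem~\ref{thm:unpopular-students}, $|M(l_k)| = |M'(l_k)|$, so $|M(l_k) \setminus M'(l_k)| = |M'(l_k) \setminus M(l_k)| = r \geq 1$. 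I need to exhibit, in $l_k$'s preference list, that matching the ordered set $M(l_k)\setminus M'(l_k) = \{s_1,\dots,s_r\}$ against $M'(l_k)\setminus M(l_k) = \{s'_1,\dots,s'_r\}$ position-by-position, $l_k$ prefers each $s'_i$ to $s_i$; equivalently (and this is the cleaner formulation to aim for) that there is a one-to-one correspondence witnessing $l_k$ prefers $M'$ to $M$.

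The key observation is that every student in $M(l_k) \setminus M'(l_k)$ is assigned in both $M$ and $M'$ (by Theorem~\ref{thm:unpopular-students}(ii), the same students are unassigned everywhere, and a student in $M(l_k)\setminus M'(l_k)$ is assigned in $M$, hence assigned in $M'$ too — necessarily to a project not offered by $l_k$). Since $M \preceq_S M'$, such a student $s$ is not indifferent (its projects in $M$ and $M'$ differ), so $s$ strictly prefers $M$ to $M'$. Thus the hypothesis of Lemma~\ref{lem:student-pref-lecturer-pref} is met: there exists a student $s \in M(l_k)\setminus M'(l_k)$ preferring $M$ to $M'$, and that lemma directly yields $l_k$ prefers $M'$ to $M$. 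So the forward direction reduces almost immediately to Lemma~\ref{lem:student-pref-lecturer-pref} once one checks the "assigned in both" point; Lemmas~\ref{lem:samelecturer1}, \ref{lem:s-prefers-l-prefers1}, \ref{lem:samelecturer2} are the machinery feeding into that lemma's proof, and are cited here only through it.

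For the reverse direction, assume $M' \preceq_L M$, i.e.\ each lecturer prefers $M'$ to $M$ or is indifferent; I want $M \preceq_S M'$, i.e.\ each student prefers $M$ to $M'$ or is indifferent. Suppose not: some student $s_i$ is assigned in both (again by Theorem~\ref{thm:unpopular-students}(ii)) with $M(s_i) \neq M'(s_i)$ and $s_i$ prefers $M'$ to $M$. Let $l_k$ be the lecturer offering $M(s_i)$. Then $s_i \in M(l_k)$, and $s_i$ prefers $M'$ to $M$. If $s_i \notin M'(l_k)$, then $s_i \in M(l_k)\setminus M'(l_k)$ and, applying Lemma~\ref{lem:student-pref-lecturer-pref} with the roles of $M$ and $M'$ swapped, $l_k$ prefers $M$ to $M'$ — contradicting $M' \preceq_L M$ (note $M(l_k)\neq M'(l_k)$ here since $s_i$ witnesses the difference, so $l_k$ is not indifferent). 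If instead $s_i \in M'(l_k)$, then $s_i$ is assigned by $l_k$ to different projects in $M$ and $M'$ and prefers $M'$ to $M$; by Lemma~\ref{lem:samelecturer2} (again with $M,M'$ swapped) $l_k$ prefers $s_i$ to some student in $M'(l_k)\setminus M(l_k)$, hence $M'(l_k)\neq M(l_k)$, and by Lemma~\ref{lem:samelecturer1} (swapped) there is a student in $M(l_k)\setminus M'(l_k)$ that $l_k$ prefers to $s_i$ — combined with Lemma~\ref{lem:student-pref-lecturer-pref} (swapped), since $s_i \in M'(l_k)\setminus M(l_k)$ prefers $M'$ to $M$, we again get $l_k$ prefers $M$ to $M'$, contradiction.

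The main obstacle I anticipate is bookkeeping rather than mathematical depth: correctly invoking the asymmetrically-stated lemmas with $M$ and $M'$ interchanged, and being careful that in the reverse direction the student $s_i$ may lie in $S_k(M,M')$ (assigned to $l_k$ in both matchings but to different projects) rather than in $M(l_k)\setminus M'(l_k)$, which is exactly the case handled by Lemmas~\ref{lem:samelecturer1} and~\ref{lem:samelecturer2}. One should also double-check the edge case where a lecturer is assigned the same \emph{set} but this set is nonempty and some student moves between two of that lecturer's projects — that student still prefers one matching strictly, so it cannot happen under $M\preceq_S M'$ unless it prefers $M$, which is consistent and leaves the lecturer indifferent, so no contradiction arises there. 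A clean writeup would first record the "every student in $M(l_k)\triangle M'(l_k)$ that is assigned is assigned in both" fact as a one-line consequence of Theorem~\ref{thm:unpopular-students}, then dispatch each direction in a short paragraph.
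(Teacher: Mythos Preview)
Your forward direction is correct and essentially identical to the paper's: once $M(l_k) \neq M'(l_k)$, pick any $s \in M(l_k) \setminus M'(l_k)$, note it strictly prefers $M$ by $M \preceq_S M'$, and invoke Lemma~\ref{lem:student-pref-lecturer-pref}.

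The reverse direction, however, has a genuine bookkeeping error --- precisely the kind you anticipated. You take $l_k$ to be the lecturer offering $M(s_i)$, the \emph{worse} project for $s_i$. Then in your first case $s_i \in M(l_k) \setminus M'(l_k)$ while $s_i$ prefers $M'$ to $M$. But Lemma~\ref{lem:student-pref-lecturer-pref}, in either orientation, requires a student lying in one side of the symmetric difference who prefers the matching in which it \emph{is} assigned to $l_k$: the original hypothesis is ``$s \in M(l_k)\setminus M'(l_k)$ prefers $M$'', and the swapped one is ``$s \in M'(l_k)\setminus M(l_k)$ prefers $M'$''. Your $s_i$ fits neither pattern, so the lemma does not fire. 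In your second case you conclude by writing ``since $s_i \in M'(l_k)\setminus M(l_k)$ prefers $M'$ to $M$'', but the case assumption there was $s_i \in M'(l_k)$ \emph{and} $s_i \in M(l_k)$ (i.e.\ $s_i \in S_k(M,M')$), so that final invocation contradicts your own case split.

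The fix is to choose $l_k$ as the lecturer offering $M'(s_i)$, the \emph{better} project. Then $s_i \in M'(l_k)$ and prefers $M'$, so if $s_i \notin M(l_k)$ the swapped Lemma~\ref{lem:student-pref-lecturer-pref} applies directly and gives $l_k$ prefers $M$ to $M'$, contradicting $M' \preceq_L M$. If instead $s_i \in S_k(M,M')$, the paper combines the swapped forms of Lemmas~\ref{lem:samelecturer1} and~\ref{lem:samelecturer2} to produce $s_r \in M(l_k)\setminus M'(l_k)$ and $s_z \in M'(l_k)\setminus M(l_k)$ with $l_k$ preferring $s_r$ to $s_i$ to $s_z$, and asserts this contradicts $l_k$ preferring $M'$ to $M$. (That last step --- that one such pair $(s_r,s_z)$ already rules out the position-wise lecturer preference --- is worth a sentence of justification in any clean writeup.)
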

\begin{proof}
($\Rightarrow$)
Suppose $M \preceq_S M'$. Then each student either prefers $M$ to $M'$ or is indifferent between them. If $M=M'$, the claim is immediate. Otherwise, consider any lecturer $l_k$. If $M(l_k)=M'(l_k)$, then $l_k$ is indifferent between $M$ and $M'$, so $M' \preceq_L M$ holds for $l_k$. If $M(l_k)\neq M'(l_k)$, then there exist some student $s\in M(l_k)\setminus M'(l_k)$. 
Since $M \preceq_S M'$, $s$ prefers $M$ to $M'$.  By Lemma~\ref{lem:student-pref-lecturer-pref}, it follows that $l_k$ prefers $M'$ to $M$. Thus, for every $l_k$, $l_k$ is either indifferent or prefers $M'$ to $M$, i.e., $M'\preceq_L M$.

($\Leftarrow$) 
Conversely, suppose \( M' \preceq_L M \). 
Then each lecturer either prefers \( M' \) to \( M \) or is indifferent between them. Consider any student \( s \). If \( M(s) = M'(s) \), then \( s \) is indifferent between the two matchings, so $M \preceq_S M'$ holds for $s$.  Otherwise, let \( l_k \) be the lecturer offering \( M'(s) \), and suppose, for a contradiction, that \( s \) prefers \( M' \) to \( M \). If \( s \in S_k(M,M') \), then by Lemma~\ref{lem:samelecturer1}, \( l_k \) prefers some \( s_r \in M(l_k) \setminus M'(l_k) \) to \( s \); and by Lemma~\ref{lem:samelecturer2}, \( l_k \) prefers \( s \) to some \( s_z \in M'(l_k) \setminus M(l_k) \).  
Consequently, \( l_k \) prefers a student in \( M(l_k) \setminus M'(l_k) \) to one in \( M'(l_k) \setminus M(l_k) \), contradicting \( M' \preceq_L M \). If instead \( s \in M'(l_k) \setminus M(l_k) \), then by Lemma~\ref{lem:s-prefers-l-prefers1}, \( l_k \) prefers \( M \) to \( M' \), again a contradiction.  
Hence no student prefers \( M' \) to \( M \).  
Therefore each student is either indifferent between \( M \) and \( M' \) or prefers \( M \) to \( M' \), i.e. \( M \preceq_S M' \).
\end{proof}

\section{Stable Matchings in {\sc spa-s} Form a Distributive Lattice}
\label{sect:stablematchings-distlattice}
\noindent To show that $(\mathcal{M}, \preceq)$ forms a distributive lattice, we define the \emph{meet} and \emph{join} of any two stable matchings in $\mathcal{M}$ based on student preferences. Given two stable matchings $M$ and $M'$, the meet matching assigns each student to the project they prefer more between their projects in $M$ and $M'$, while the join matching assigns each student to the less preferred of the two. In Lemmas~\ref{lemma_meet_stable} and~\ref{lemma_join_stable}, we show that both the meet and join matchings are stable. These results show that the meet and join operations are well-defined in $\mathcal{M}$ and respect the dominance relation $\preceq$. Finally, in Theorem \ref{theorem-lattice}, we prove that the meet and join operations distribute, and that $(\mathcal{M},\preceq)$ is a distributive lattice.

\begin{definition}
\label{prop:meet_is_stable}
Let \( M \) and \( M' \) be two stable matchings in \( I \), and define a matching \( M^\land \) as follows: for each student \( s_i \), 
\begin{itemize}
    \item if \( s_i \) is unassigned in both \( M \) and \( M' \), then \( s_i \) is unassigned in \( M^\land \);
    \item if \( s_i \) is assigned to the same project in both \( M \) and \( M' \), then \( s_i \) is assigned to that project in \( M^\land \).
    \item otherwise, \( s_i \) is assigned in \( M^\land \) to the better of their projects in \( M \) and \( M' \).
\end{itemize}
\end{definition}

\noindent In Lemma \ref{lemma_meet_stable}, we prove that \( M^\land \) is a stable matching in \( I \). To show this, we first present Lemmas \ref{lemma_meet_lkunder} -- \ref{lemma_meet_matching}.

\begin{lemma}
\label{lemma_meet_lkunder}
If a lecturer \( l_k \) is undersubscribed in \( M^\land \), then \( l_k \) is undersubscribed in both \( M \) and \( M' \).
\end{lemma}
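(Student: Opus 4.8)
The plan is to convert the statement into a numerical claim about $|M(l_k)|$, $|M'(l_k)|$ and $|M^\land(l_k)|$, and then to establish that claim by an exchange argument built on the structural lemmas of Section~\ref{sect:structural-lemmas}. For the reduction: by Theorem~\ref{thm:unpopular-students}(i) we have $|M(l_k)| = |M'(l_k)|$, so $l_k$ is undersubscribed in $M$ if and only if it is undersubscribed in $M'$, and it suffices to prove that if $l_k$ is undersubscribed in $M^\land$ then $l_k$ is undersubscribed in $M$. I would argue the contrapositive: assume $l_k$ is full in $M$ (hence also in $M'$) and show $|M^\land(l_k)| \geq d_k$.

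Next I would do some book-keeping. Classify the students assigned to $l_k$ in $M$ or in $M'$: let $X = M(l_k) \cap M'(l_k)$; split $M(l_k)\setminus M'(l_k)$ into the set $Y$ of students who prefer their project in $M$ and the set $Y'$ of students who prefer their project in $M'$ (a student of $Y'$ is assigned in $M'$, by Theorem~\ref{thm:unpopular-students}(ii), to a project of a lecturer other than $l_k$); split $M'(l_k)\setminus M(l_k)$ symmetrically into $Z$ and $Z'$. Directly from Definition~\ref{prop:meet_is_stable} one checks that $M^\land(l_k) = X \cup Y \cup Z$, a disjoint union: each student of $X$ stays at $l_k$ in $M^\land$ since both her projects are offered by $l_k$; each student of $Y$ (resp.\ $Z$) stays at $l_k$ since her preferred project is the one in $M$ (resp.\ $M'$); each student of $Y'$ or $Z'$ moves away. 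Since $l_k$ is full in $M$ and in $M'$, $|Y| + |Y'| = |Z| + |Z'| = d_k - |X|$, so the target inequality $|M^\land(l_k)| \geq d_k$ is equivalent to $|Z| \geq |Y'|$.

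It then remains to construct a one-to-one map from $Y'$ into $Z$. Fix $s \in Y'$: in $M$, $s$ is assigned to some $p \in P_k$, and in $M'$ she is assigned to a project, of some lecturer $l' \neq l_k$, which she prefers to $p$. From here I would run a chain argument in the spirit of the proofs of Lemmas~\ref{lem:samelecturer1}, \ref{lem:student-pref-lecturer-pref} and~\ref{lem:s-prefers-l-prefers1}: using the stability of $M$ and $M'$ (through Proposition~\ref{prop:s-pref-l-prefs}), each ``displaced'' student forces a next student whom $l_k$ prefers, who again prefers $M'$ to $M$, and who is assigned at $l_k$ in $M$; since $\mathcal{S}$ is finite and $l_k$'s ranking strictly increases along the chain, the chain is finite and terminates at a student of $Z$, which I take to be the image of $s$. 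When a single project of $l_k$ is the $M$-project of several students of $Y'$, the counting argument used in the proof of Lemma~\ref{lem:student-pref-lecturer-pref} shows that $l_k$ has enough students of $Z$ assigned to that project to keep the map one-to-one. This gives $|Z| \geq |Y'|$, hence $|M^\land(l_k)| \geq d_k$, which is the contrapositive we wanted.

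The only delicate point is the exchange argument in the third step. In the Hospitals/Residents problem the analogous statement is immediate, because an undersubscribed hospital is assigned the same set of residents in every stable matching; in {\sc spa-s} a lecturer may genuinely be assigned different student sets in different stable matchings, so equal loads cannot simply be read off and the displaced students must be paired up explicitly. Making the chain construction precise --- in particular handling the re-use of a project of $l_k$ by several chains, and proving termination and injectivity --- is where the real work lies, and is precisely why the structural lemmas of Section~\ref{sect:structural-lemmas} are invoked here.
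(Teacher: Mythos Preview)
Your reduction is clean and correct: by Theorem~\ref{thm:unpopular-students}(i) we have $|M(l_k)|=|M'(l_k)|$, so it suffices to show that $l_k$ full in $M$ (hence in $M'$) implies $|M^\land(l_k)|\ge d_k$; your decomposition $M^\land(l_k)=X\cup Y\cup Z$ is right, and the target does reduce to $|Z|\ge|Y'|$.

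The gap is in the chain argument. First, the direction is reversed: for $s_1\in Y'$ with $p_0=M(s_1)\in P_k$, the relevant structural fact is Lemma~\ref{lem:s-prefers-l-prefers1}(a) with $M$ and $M'$ swapped, which says that $l_k$ prefers $s_1$ to each student in $M'(p_0)\setminus M(p_0)$ --- so the chain descends in $l_k$'s order, not ascends. More seriously, the chain only continues when $p_0$ is full in $M'$; when $p_0$ is undersubscribed in $M'$ there is no $s_2\in M'(p_0)\setminus M(p_0)$ to pass to, and Lemma~\ref{lem:s-prefers-l-prefers1}(b) (swapped) gives only that $l_k$ prefers $s_1$ to every student in $M'(l_k)\setminus M(l_k)$. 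That statement does not single out an element of $Z$, let alone a distinct one for each starting point in $Y'$, so the injectivity claim is unsupported precisely in this case. Your own sketch flags the undersubscribed branch implicitly (``making the chain construction precise \ldots\ is where the real work lies''), but does not indicate how to close it.

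The paper's proof sidesteps the counting entirely. It picks a single project $p_a\in P_k$ with $|M(p_a)|>|M^\land(p_a)|$ and a single $p_b$ with $|M'(p_b)|>|M^\land(p_b)|$, finds witnesses $s_a\in M(p_a)\setminus M^\land(p_a)$ (who prefers $M'$) and $s_b\in M'(p_b)\setminus M^\land(p_b)$ (who prefers $M$), and first shows --- again via Lemma~\ref{lem:s-prefers-l-prefers1}(a) --- that $p_a$ must be \emph{undersubscribed} in $M'$ and $p_b$ undersubscribed in $M$. Then Lemma~\ref{lem:s-prefers-l-prefers1}(b) and Lemma~\ref{lem:samelecturer1} yield both ``$l_k$ prefers $s_a$ to $s_b$'' and ``$l_k$ prefers $s_b$ to $s_a$'', a direct contradiction. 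So rather than an injection, the paper uses one witness on each side and derives incompatible preference statements; the undersubscribed case, which breaks your chain, is exactly what drives the paper's argument.
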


\begin{proof}
Suppose, for contradiction, that \( l_k \) is undersubscribed in \( M^\land \), but is full in both \( M \) and \( M' \). Then,
\(|M(l_k)| > |M^\land(l_k)|\) and \(|M'(l_k)| > |M^\land(l_k)|\).
Thus, there exists projects \(p_a, p_b \in P_k\) such that
\[
|M(p_a)| > |M^\land(p_a)| \quad \text{and} \quad |M'(p_b)| > |M^\land(p_b)|.
\]
Suppose that \(s_a \in M(p_a) \setminus M^\land(p_a)\) and \(s_b \in M'(p_b) \setminus M^\land(p_b)\). This implies that \(s_a \in M(p_a) \setminus M'(p_a)\) and \(s_b \in M'(p_b) \setminus M(p_b)\). By the construction of \(M^\land\), each student is assigned to the more preferred of their two projects in \(M\) and \(M'\). Therefore, (a) \(s_a\) prefers \(M'(s_a)\) to \(p_a\), and (b) \(s_b\) prefers \(M(s_b)\) to \(p_b\).  We claim that \(p_a\) is undersubscribed in \(M'\) and $p_b$ is undersubscribed in $M$. We now consider each of these cases in turn.

\bigskip
\noindent \textbf{Case (a):} \(s_a \in M(p_a) \setminus M'(p_a)\) and \(s_a\) prefers \(M'\) to \(M\). Suppose for contradiction that \(p_a\) is full in \(M'\). Since $p_a$ cannot be oversubscribed in $M$, we have \(|M'(p_a)| \geq |M(p_a)|\). Given that \(|M(p_a)| > |M^\land(p_a)|\), it follows that \(|M'(p_a)| > |M^\land(p_a)|\). Hence there exists some student \(s \in M'(p_a) \setminus M^\land(p_a)\), which means \(s \in M'(p_a) \setminus M(p_a)\). Moreover, by the construction of \(M^\land\), \(s\) prefers \(M\) to \(M'\).  Applying the first part of Lemma~\ref{lem:s-prefers-l-prefers1} to the matchings \(M\) and \(M'\), with \(s_a\) as a student who prefers \(M'\) to \(M\) and \(s \in M'(p_a) \setminus M(p_a)\), it follows that \(l_k\) prefers \(s_a\) to \(s\) (note that here, \(M\) and \(M'\) are swapped compared to Lemma~\ref{lem:s-prefers-l-prefers1}). On the other hand, applying the same lemma to \(M\) and \(M'\), with \(s\) as a student who prefers \(M\) to \(M'\) and \(s_a \in M(p_a) \setminus M'(p_a)\), it follows that \(l_k\) prefers \(s\) to \(s_a\). This yields a contradiction. Therefore, \(p_a\) is undersubscribed in \(M'\).

\bigskip
\noindent \textbf{Case (b):} Suppose \(s_b \in M'(p_b) \setminus M(p_b)\) and \(s_b\) prefers \(M\) to \(M'\). Following a similar argument to case (a), if, on the contrary, \(p_b\) were full in \(M\), then \(|M(p_b)| > |M^\land(p_b)|\), and there would exist some student \(s \in M(p_b) \setminus M^\land(p_b)\), and hence \(s \in M(p_b) \setminus M'(p_b)\). By the construction of \(M^\land\), \(s\) prefers \(M'\) to \(M\).  In this case, we have \(s_b \in M'(p_b) \setminus M(p_b)\) who prefers \(M\) to \(M'\), and \(s \in M(p_b) \setminus M'(p_b)\) who prefers \(M'\) to \(M\). Applying Lemma~\ref{lem:s-prefers-l-prefers1} to these two cases yields a contradiction on \(l_k\)'s preference list. Hence, \(p_b\) is undersubscribed in \(M\).

\bigskip
\noindent From cases (a) and (b), it follows that \(s_a\) prefers \(M'(s_a)\) to \(p_a\), where \(p_a\) is undersubscribed in \(M'\), and that \(s_b\) prefers \(M(s_b)\) to \(p_b\), where \(p_b\) is undersubscribed in \(M\). By the second part of Lemma~\ref{lem:s-prefers-l-prefers1}, this implies that \(l_k\) prefers \(s_a\) to each student in \(M'(l_k) \setminus M(l_k)\), and also prefers \(s_b\) to each student in \(M(l_k) \setminus M'(l_k)\). If \(s_b \in M'(l_k) \setminus M(l_k)\), then $l_k$ prefers $s_a$ to $s_b$. If instead \(s_b \in S_k(M,M')\), then since $s_b$ prefers $M$ to $M'$, Lemma~\ref{lem:samelecturer1} implies that there exists some \(s' \in M'(l_k) \setminus M(l_k)\) where \(l_k\) prefers \(s'\) to \(s_b\). Thus, \(l_k\) prefers \(s_a\) to \(s'\), and hence \(l_k\) prefers \(s_a\) to \(s_b\). On the other hand, since \(s_b\) prefers \(M\) to \(M'\) and \(p_b\) is undersubscribed in \(M\), then Lemma~\ref{lem:s-prefers-l-prefers1} implies that \(l_k\) prefers \(s_b\) to each student in \(M(l_k) \setminus M'(l_k)\). If \(s_a \in M(l_k) \setminus M'(l_k)\), then \(l_k\) prefers \(s_b\) to \(s_a\), a contradiction. If instead \(s_a \in S_k(M,M')\), then, since \(s_a\) prefers \(M'\) to \(M\), there exists some \(s \in M(l_k) \setminus M'(l_k)\) such that \(l_k\) prefers \(s\) to \(s_a\). Consequently, \(l_k\) prefers \(s_b\) to \(s\), and therefore to \(s_a\). This again yields a contradiction.

\medskip
\noindent Therefore, our assumption is false, and \( l_k \) is undersubscribed in both \( M \) and \( M' \).
\end{proof}
\begin{lemma}
\label{lemma_meet_pjunder}
If a project \( p_j \) is undersubscribed in \( M^\land \), then it is undersubscribed in at least one of \( M \) or \( M' \).
\end{lemma}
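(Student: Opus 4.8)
The plan is to argue by contradiction, exactly mirroring Case~(a) of the proof of Lemma~\ref{lemma_meet_lkunder}, but localised to the project $p_j$ rather than to a whole lecturer. Let $l_k$ be the lecturer offering $p_j$, and suppose $p_j$ is undersubscribed in $M^\land$ but \emph{not} undersubscribed in $M$ and not in $M'$. Since $M$ and $M'$ are matchings, $p_j$ cannot be oversubscribed in either, so this forces $p_j$ to be full in both: $|M(p_j)| = |M'(p_j)| = c_j > |M^\land(p_j)|$.

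The first step is to extract two witness students. Because $|M(p_j)| > |M^\land(p_j)|$, there is some $s_a \in M(p_j)\setminus M^\land(p_j)$; because $s_a$ is assigned in $M$, Theorem~\ref{thm:unpopular-students}(ii) guarantees $s_a$ is assigned in $M'$ as well, hence in $M^\land$. Since $M(s_a)=p_j$ but $s_a\notin M^\land(p_j)$, the construction of $M^\land$ forces $M^\land(s_a)=M'(s_a)\neq p_j$, so $s_a\in M(p_j)\setminus M'(p_j)$ and $s_a$ prefers $M'$ to $M$. Symmetrically, from $|M'(p_j)| > |M^\land(p_j)|$ we obtain $s_b\in M'(p_j)\setminus M(p_j)$ with $s_b$ assigned in both matchings and preferring $M$ to $M'$. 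The sets $M(p_j)\setminus M'(p_j)$ and $M'(p_j)\setminus M(p_j)$ are disjoint, so $s_a\neq s_b$.

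The second step applies Lemma~\ref{lem:s-prefers-l-prefers1}(a) twice. Applied to $M$ and $M'$ as stated, with $s_b$ the student assigned to $p_j=M'(s_b)$ who prefers $M$ to $M'$ and with $s_a\in M(p_j)\setminus M'(p_j)$ a witness, it gives that $l_k$ prefers $s_b$ to $s_a$. Applied with the roles of $M$ and $M'$ interchanged, with $s_a$ the student assigned to $p_j=M(s_a)$ who prefers $M'$ to $M$ and with $s_b\in M'(p_j)\setminus M(p_j)$ a witness (this set plays the role of ``$M(p_j)\setminus M'(p_j)$'' after swapping), it gives that $l_k$ prefers $s_a$ to $s_b$. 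These two conclusions contradict the strictness of $l_k$'s preference list, so the assumption fails and $p_j$ is undersubscribed in at least one of $M$ or $M'$.

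I do not expect a serious obstacle here; the only point requiring care is bookkeeping the swap of $M$ and $M'$ when invoking Lemma~\ref{lem:s-prefers-l-prefers1}(a) the second time, together with the (routine) verifications that $s_a,s_b$ are genuinely assigned in $M^\land$ (via Theorem~\ref{thm:unpopular-students}(ii)) and distinct, so that part~(a) of the lemma is applicable in both directions.
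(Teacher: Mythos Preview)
Your proposal is correct and follows essentially the same approach as the paper: assume for contradiction that $p_j$ is full in both $M$ and $M'$, extract witnesses $s_a \in M(p_j)\setminus M'(p_j)$ preferring $M'$ and $s_b \in M'(p_j)\setminus M(p_j)$ preferring $M$, then apply Lemma~\ref{lem:s-prefers-l-prefers1}(a) in both directions to derive contradictory preferences for $l_k$. Your extra care in invoking Theorem~\ref{thm:unpopular-students}(ii) and noting $s_a\neq s_b$ is sound but not strictly needed, since the paper simply observes that any student in $M(p_j)\cap M'(p_j)$ is automatically in $M^\land(p_j)$.
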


\begin{proof}
Let \( l_k \) be the lecturer who offers project \( p_j \). Suppose, for contradiction, that \( p_j \) is full in both \( M \) and \( M' \), but undersubscribed in \( M^\land \). Then \( |M(p_j)| > |M^\land(p_j)| \) and \( |M'(p_j)| > |M^\land(p_j)| \). It follows that there exists a student \( s_a \in M(p_j) \setminus M^\land(p_j) \). Since any student assigned to \( p_j \) in both \( M \) and \( M' \) must also be assigned to \( p_j \) in \( M^\land \), we conclude that \( s_a \notin M'(p_j) \), and hence \( s_a \in M(p_j) \setminus M'(p_j) \). Similarly, there exists a student \( s_b \in M'(p_j) \setminus M^\land(p_j) \), which implies that \( s_b \in M'(p_j) \setminus M(p_j) \). By the construction of \( M^\land \), \( s_a \) prefers \( M' \) to $M$, and \( s_b \) prefers \( M \) to \(M' \).

\medskip
\noindent By applying the first part of Lemma~\ref{lem:s-prefers-l-prefers1} to the matchings \( M' \) and \( M \), with \( s_a \) as a student who prefers \( M' \) to \( M \), and with \( s_b \in M'(p_j) \setminus M(p_j) \), it follows that \( l_k \) prefers \( s_a \) to \( s_b \) (note that here, \( M \) and \( M' \) are swapped compared to Lemma~\ref{lem:s-prefers-l-prefers1}). Conversely, applying the same lemma to \( M \) and \( M' \), with \( s_b \) as a student who prefers \( M \) to \( M' \) and \( s_a \in M(p_j) \setminus M'(p_j) \), we conclude that \( l_k \) prefers \( s_b \) to \( s_a \). This is a contradiction, since \( l_k \) cannot simultaneously prefer \( s_a \) to \( s_b \) and \( s_b \) to \( s_a \). Therefore, \( p_j \) must be undersubscribed in at least one of \( M \) or \( M' \).
\end{proof}

\begin{lemma}
\label{lemma_meet_matching}
\( M^\land \) is a matching.
\end{lemma}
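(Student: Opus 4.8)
The plan is to verify the three defining conditions of a matching in \textsc{spa-s}: (a) each student is assigned at most one project; (b) no project is oversubscribed; (c) no lecturer is oversubscribed. Condition~(a) is immediate from the definition of $M^\land$, since every student is assigned exactly one of the (at most two) projects it receives in $M$ and $M'$, or none. For (b) and (c) I would first record the following bookkeeping. By Theorem~\ref{thm:unpopular-students}(ii), exactly the same students are assigned in $M$ and in $M'$, and in $M^\land$ each of them is assigned one of their two projects; hence $M^\land$ assigns precisely the students that $M$ does (so $|M^\land|=|M|=|M'|$), and for every project $p_j$ we have the disjoint decomposition
\[
M^\land(p_j)\;=\;\bigl(M(p_j)\cap M'(p_j)\bigr)\ \sqcup\ A_j\ \sqcup\ B_j ,
\]
where $A_j=\{\,s\in M(p_j)\setminus M'(p_j): s\text{ prefers }M\text{ to }M'\,\}$ and $B_j=\{\,s\in M'(p_j)\setminus M(p_j): s\text{ prefers }M'\text{ to }M\,\}$; the analogous decomposition $M^\land(l_k)=(M(l_k)\cap M'(l_k))\sqcup A_k^{*}\sqcup B_k^{*}$ holds for each lecturer $l_k$.

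For (b), suppose for a contradiction that $p_j$, offered by $l_k$, is oversubscribed in $M^\land$. Since $(M(p_j)\cap M'(p_j))\sqcup A_j\subseteq M(p_j)$ and $M$ is a matching, $|M(p_j)\cap M'(p_j)|+|A_j|\le c_j$, and symmetrically $|M(p_j)\cap M'(p_j)|+|B_j|\le c_j$; as $|M^\land(p_j)|>c_j$, both $A_j$ and $B_j$ are non-empty. Pick $s_a\in A_j$ and $s_b\in B_j$: then $s_a$ prefers $p_j$ to $M'(s_a)$, $s_b$ prefers $p_j$ to $M(s_b)$, $(s_a,p_j)\notin M'$ and $(s_b,p_j)\notin M$. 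Since $s_b$ is assigned in $M$ (again by Theorem~\ref{thm:unpopular-students}(ii)), condition (S2) holds for $(s_b,p_j)$ with respect to $M$, so stability of $M$ forces all of (P1)--(P4) to fail; distinguishing the cases ``$p_j$ full in $M$'' and ``$p_j$ undersubscribed in $M$'' (in the latter, $l_k$ must then be full in $M$ and $s_b\notin M(l_k)$), and using $s_a\in M(p_j)\subseteq M(l_k)$, each case gives that $l_k$ prefers $s_a$ to $s_b$. The symmetric argument applied to $(s_a,p_j)$ and the stable matching $M'$, with $s_b\in M'(p_j)\subseteq M'(l_k)$, gives that $l_k$ prefers $s_b$ to $s_a$ --- a contradiction. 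I expect this step to be routine; notably it uses only the stability of $M$ and $M'$ and not the structural lemmas of Section~\ref{sect:structural-lemmas}.

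For (c) I would start identically: assuming $l_k$ oversubscribed in $M^\land$, the decomposition forces $A_k^{*}$ and $B_k^{*}$ non-empty, so we may fix $s_a\in A_k^{*}$, $s_b\in B_k^{*}$, and set $p'=M'(s_b)\in P_k$. If $p'$ is undersubscribed in $M$, the non-blocking-pair analysis of $(s_b,p')$ with respect to $M$ again yields that $l_k$ prefers every student of $M(l_k)$ --- in particular $s_a$ --- to $s_b$, and the symmetric step closes the case. The main obstacle is the case where $p'$ is full in $M$: stability of $M$ then only guarantees that $l_k$ prefers every student of $M(p')$ to $s_b$, and since $s_a$ need not belong to $M(p')$ this does not immediately contradict anything. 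I plan to handle it by iterating in the style of Lemmas~\ref{lem:samelecturer1} and~\ref{lem:s-prefers-l-prefers1}: as $p'$ is full in $M$ while $s_b\in M'(p')\setminus M(p')$, there is $s'\in M(p')\setminus M'(p')$ with $l_k$ preferring $s'$ to $s_b$; if $s'\in M(l_k)\setminus M'(l_k)$ and $s'$ prefers $M$, then analysing $(s',p')$ with respect to $M'$ immediately contradicts $l_k$ preferring $s'$ to $s_b$, while otherwise $s'\in S_k(M,M')$ or $s'$ prefers $M'$, and one continues the chase through $M'(s')$, producing a sequence of students along which $l_k$'s preference strictly improves and which is therefore repetition-free; finiteness of $\mathcal{S}$ then forces termination and the desired contradiction. (An alternative route for (c) is to combine $|M^\land|=|M|$ with Lemma~\ref{lemma_meet_lkunder} --- an oversubscribed $l_k$ makes some lecturer $l_{k'}$ undersubscribed in $M^\land$, hence in $M$ and $M'$ --- and then reduce the claim, via the decompositions above, to showing that at each lecturer the sets $M(l_k)\setminus M'(l_k)$ and $M'(l_k)\setminus M(l_k)$ contain equally many students who prefer $M'$; this balance should follow from the one-to-one correspondence built in the proof of Lemma~\ref{lem:student-pref-lecturer-pref}.)
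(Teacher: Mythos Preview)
Your treatment of (a) and (b) is correct and essentially identical to the paper's: both pick $s_a\in M(p_j)\setminus M'(p_j)$ preferring $M$ and $s_b\in M'(p_j)\setminus M(p_j)$ preferring $M'$, then extract contradictory preferences of $l_k$ from the stability of $M$ and $M'$ alone.

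For (c), your decomposition is right and your starting point is actually cleaner than the paper's: you immediately have $s_a\in M(l_k)\setminus M'(l_k)$ preferring $M$ to $M'$ and $s_b\in M'(l_k)\setminus M(l_k)$ preferring $M'$ to $M$. But from here you take a detour. The paper simply applies Lemma~\ref{lem:student-pref-lecturer-pref} twice: $s_a$ gives that $l_k$ prefers $M'$ to $M$, and $s_b$ (with the roles of $M,M'$ swapped) gives that $l_k$ prefers $M$ to $M'$, a contradiction. That is the entire argument. (The paper first locates projects $p_a,p_b\in P_k$ undersubscribed in $M'$ and $M$ respectively, but only to deduce $s_a\notin M'(l_k)$ and $s_b\notin M(l_k)$; your decomposition already gives that.)

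Your proposed chase, by contrast, is incomplete as written. The sentence ``the symmetric step closes the case'' when $p'=M'(s_b)$ is undersubscribed in $M$ is not justified: the symmetric analysis of $(s_a,M(s_a))$ with respect to $M'$ may itself land in the ``full'' case, so you cannot simply mirror the easy direction. And in the hard case your iteration sketch does not specify how to continue when the chosen $s'\in M(p')\setminus M'(p')$ prefers $M'$ (so that $(s',p')$ does not threaten $M'$ at all), nor why the chase must exit $S_k(M,M')$ in the right direction. What you are effectively attempting is an in-situ reproof of Lemma~\ref{lem:student-pref-lecturer-pref}; since that lemma is already available, invoke it directly and the proof of (c) is two lines.
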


\begin{proof}
By construction, no student is assigned to more than one project in \( M^\land \). It remains to show that no project or lecturer is oversubscribed in \( M^\land \). Suppose, for contradiction, that some project \( p_j \) is oversubscribed in \( M^\land \). Let \( l_k \) be the lecturer who offers \( p_j \). Then \( |M^\land(p_j)| > |M(p_j)| \) and \( |M^\land(p_j)| > |M'(p_j)| \), since both \( M \) and \( M' \) are valid matchings. Thus, there exist students \(s_a \in M^\land(p_j) \setminus M'(p_j)\) and \(s_b \in M^\land(p_j) \setminus M(p_j)\). It follows that \(s_a \in M(p_j) \setminus M'(p_j)\) and \(s_b \in M'(p_j) \setminus M(p_j)\), where \(s_a\) prefers \(M\) to \(M'\) and \(s_b\) prefers \(M'\) to \(M\).

\medskip
\noindent By stability of $M'$ and since $s_a$ prefers $p_j$ to $M'(s_a)$, it follows that $l_k$ prefers the worst student in $M'(p_j)$ to $s_a$ (if $p_j$ is full in $M'$) or the worst student in $M'(l_k)$ to $s_a$ (if $p_j$ is undersubscribed in $M'$). This implies that $l_k$ prefers $s_b$ to $s_a$, since $s_b \in M'(p_j)$. On the other hand, $s_b$ prefers $p_j$ to $M(s_b)$. If $p_j$ is full in $M$, then $l_k$ prefers $s_b$ to some student in $M(p_j)$ (namely $s_a$), and $(s_b,p_j)$ blocks $M$, a contradiction. If $p_j$ is undersubscribed in $M$, then $l_k$ prefers $s_b$ to some student in $M(l_k)$ (namely $s_a$), and $(s_b,p_j)$ blocks $M$, a contradiction (This holds whether $l_k$ is full or undersubscribed in $M$). Therefore, our assumption is false and no project is oversubscribed in \( M^\land \).

\medskip
\noindent Next, suppose for contradiction that some lecturer \( l_k \) is oversubscribed in \( M^\land \). Then there exist projects \( p_a \) and \( p_b \) offered by \( l_k \) such that \( |M^\land(p_a)| > |M'(p_a)| \) and \( |M^\land(p_b)| > |M(p_b)| \). Since both \( M \) and \( M' \) are valid matchings, this implies that \( p_a \) is undersubscribed in \( M' \) and \( p_b \) is undersubscribed in \( M \). Moreover, as established earlier, no project is oversubscribed in \( M^\land \). Let \( s_a \in M^\land(p_a) \setminus M'(p_a) \), so \( s_a \in M(p_a) \setminus M'(p_a) \), and let \( s_b \in M^\land(p_b) \setminus M(p_b) \), so \( s_b \in M'(p_b) \setminus M(p_b) \). By the definition of \( M^\land \), each student is assigned to the more preferred of their two projects in \( M \) and \( M' \); therefore, \( s_a \) prefers \( M \) to \( M' \), and \( s_b \) prefers \( M' \) to \( M \).

\medskip
\noindent Since $p_a$ is undersubscribed in $M'$, we have $s_a \notin M'(l_k)$; otherwise $(s_a,p_a)$ would block $M'$, regardless of whether $l_k$ is full or undersubscribed in $M'$.  Similarly, since \(p_b\) is undersubscribed in \(M\), we have \(s_b \notin M(l_k)\), otherwise \((s_b,p_b)\) would block \(M\).  Hence \(s_a \in M(l_k) \setminus M'(l_k)\) and \(s_b \in M'(l_k) \setminus M(l_k)\).  Now, by Lemma~\ref{lem:student-pref-lecturer-pref}, since \( s_a \) prefers \( M \) to \( M' \), it follows that \( l_k \) prefers \( M' \) to \( M \). Conversely, since \( s_b \) prefers \( M' \) to \( M \), the same lemma implies that \( l_k \) prefers \( M \) to \( M' \). This yields a contradiction. Hence our assumption is false, and no lecturer is oversubscribed in $M^\land$. Therefore, $M^\land$ is a valid matching.
\end{proof}

\begin{lemma}
\label{lemma_meet_stable}
\( M^\land \) is a stable matching.
\end{lemma}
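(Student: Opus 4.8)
The plan is to use Lemma~\ref{lemma_meet_matching}, which already guarantees that $M^\land$ is a valid matching, so that only the absence of a blocking pair remains to be shown. I would argue by contradiction: suppose $(s_i,p_j)$ blocks $M^\land$, where $l_k$ offers $p_j$, and deduce that $(s_i,p_j)$ blocks $M$ or $M'$. The first step is to observe that the student side of the blocking condition transfers to both $M$ and $M'$. If $s_i$ is unassigned in $M^\land$, then by the construction of $M^\land$ and Theorem~\ref{thm:unpopular-students}(ii), $s_i$ is unassigned in both $M$ and $M'$; otherwise $s_i$ prefers $p_j$ to $M^\land(s_i)$, which is the better of $M(s_i)$ and $M'(s_i)$, and hence $s_i$ prefers $p_j$ to each of $M(s_i)$ and $M'(s_i)$. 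In particular $(s_i,p_j)\notin M\cup M'$, and condition (S1) or (S2) holds with respect to both $M$ and $M'$.

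\textbf{Extracting consequences of stability and dispatching the easy cases.} Since $(s_i,p_j)$ blocks neither $M$ nor $M'$, the project side must fail in each: for $X\in\{M,M'\}$, either $p_j$ is full in $X$ and $l_k$ prefers every student of $X(p_j)$ to $s_i$, or $p_j$ is undersubscribed in $X$, $l_k$ is full in $X$, $s_i\notin X(l_k)$, and $l_k$ prefers every student of $X(l_k)$ to $s_i$. In all cases $l_k$ prefers every student of $M(p_j)$ and of $M'(p_j)$ to $s_i$, and since $M^\land(p_j)\subseteq M(p_j)\cup M'(p_j)$, also every student of $M^\land(p_j)$. This already rules out $p_j$ being full in $M^\land$, since the only applicable blocking condition would then be (P4), demanding that $l_k$ prefer $s_i$ to the worst student of $M^\land(p_j)$. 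So $p_j$ is undersubscribed in $M^\land$, and by Lemma~\ref{lemma_meet_pjunder} I may assume without loss of generality that $p_j$ is undersubscribed in $M$. If $l_k$ is undersubscribed in $M^\land$, then Lemma~\ref{lemma_meet_lkunder} gives that $l_k$ is undersubscribed in $M$, so $(s_i,p_j)$ blocks $M$ via (P1). If $l_k$ is full in $M^\land$ and $p_j$ is also undersubscribed in $M'$, then the above information yields $s_i\notin M^\land(l_k)$ and that $l_k$ prefers every student of $M^\land(l_k)$ to $s_i$ (each such student is assigned by $l_k$ in $M$ or in $M'$), so none of (P1)--(P4) can hold for $M^\land$; contradiction.

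\textbf{The crossed case.} The remaining case, which I expect to be the crux, is the \emph{crossed} one: $p_j$ undersubscribed in $M^\land$ and in $M$, full in $M'$, $l_k$ full in $M^\land$, and $(s_i,p_j)$ blocking $M^\land$ via (P2) or (P3). Here the new structural lemmas are needed. I would first exhibit a student $s^\ast\in M'(l_k)\setminus M(l_k)$ preferring $M'$ to $M$: in case (P2) take $s^\ast=s_i$, since $s_i\in M^\land(l_k)$ together with $s_i\notin M(l_k)$ forces $M^\land(s_i)=M'(s_i)$ to be offered by $l_k$, whence $s_i$ prefers $M'$ to $M$; in case (P3) take $s^\ast$ to be the worst student $s_w$ of $M^\land(l_k)$, noting that if $M^\land(s_w)=M(s_w)$ then $s_w\in M(l_k)$ and $(s_i,p_j)$ already blocks $M$ via (P3). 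By Lemma~\ref{lem:student-pref-lecturer-pref} applied with $M$ and $M'$ interchanged, $l_k$ prefers $M$ to $M'$; fix the ordered lists $M(l_k)\setminus M'(l_k)=\{t_1,\dots,t_r\}$ and $M'(l_k)\setminus M(l_k)=\{t'_1,\dots,t'_r\}$ witnessing this, so $l_k$ prefers $t_a$ to $t'_a$ for every $a$. Since $p_j$ is full in $M'$ but undersubscribed in $M^\land$, there is some $s_a\in M'(p_j)\setminus M^\land(p_j)$, and by construction of $M^\land$ this $s_a$ prefers $M$ to $M'$ (as $M^\land(s_a)=M(s_a)\neq p_j$). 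Applying Lemma~\ref{lem:s-prefers-l-prefers1}(b) to $s_a$ (which prefers $M$ to $M'$, is assigned to $p_j$ in $M'$, and $p_j$ is undersubscribed in $M$) shows that $l_k$ prefers $s_a$ to every student of $M(l_k)\setminus M'(l_k)$. Finally I would split on whether $M(s_a)$ is offered by $l_k$. If it is, then $s_a\in S_k(M,M')$ prefers $M$ to $M'$, so Lemma~\ref{lem:samelecturer1} yields some $t'_c\in M'(l_k)\setminus M(l_k)$ with $l_k$ preferring $t'_c$ to $s_a$; combined with the previous sentence, $l_k$ prefers $t'_c$ to $s_a$ to $t_c$. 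If it is not, then $s_a$ itself is one of the $t'_c$, and $l_k$ prefers $t'_c=s_a$ to $t_c$. Either way $l_k$ prefers $t'_c$ to $t_c$, contradicting that $l_k$ prefers $M$ to $M'$. This contradiction establishes the lemma.

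\textbf{Main obstacle.} The hard part is exactly this crossed case: the undersubscription of $p_j$ is witnessed by $M$, whereas the reason $l_k$ would accept $s_i$ at $p_j$ in $M^\land$ is inherited from $M'$, so neither $M$ nor $M'$ visibly receives the blocking pair, and reconciling the two requires a chain argument of the kind embodied in Lemmas~\ref{lem:samelecturer1} and~\ref{lem:s-prefers-l-prefers1}, together with Lemmas~\ref{lemma_meet_lkunder} and~\ref{lemma_meet_pjunder}. A secondary subtlety, absent in {\sc hr}, is that $|M^\land(l_k)|$ need not equal $|M(l_k)|$, so statements such as $s_i\notin M^\land(l_k)$ must be deduced from the project each student actually receives rather than by a cardinality argument.
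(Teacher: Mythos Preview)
Your argument is essentially correct and is organised somewhat differently from the paper's. Both proofs share the easy observations (the student side of blocking transfers to $M$ and $M'$; the (P4) case dies because $M^\land(p_j)\subseteq M(p_j)\cup M'(p_j)$; Lemmas~\ref{lemma_meet_lkunder} and~\ref{lemma_meet_pjunder} dispose of (P1)). In the crossed case, however, the paper does not invoke Lemmas~\ref{lem:samelecturer1} or~\ref{lem:student-pref-lecturer-pref} at all. Instead it argues by a pigeonhole: once $p_j$ is full in $M'$ but undersubscribed in $M^\land$ and $l_k$ is full in both, there must be some other project $p_a\in P_k$ with $|M^\land(p_a)|>|M'(p_a)|$, hence a student $s_a\in M(p_a)\setminus M'(p_a)$ (in your orientation) who prefers $M$ to $M'$; stability of $M'$ then forces $l_k$ to prefer every student of $M'(l_k)$ to $s_a$, and this directly produces a blocking pair in $M$. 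Your route---producing $s^\ast\in M'(l_k)\setminus M(l_k)$ preferring $M'$ to $M$, invoking Lemma~\ref{lem:student-pref-lecturer-pref} to get $l_k$ preferring $M$ to $M'$, then using Lemma~\ref{lem:s-prefers-l-prefers1}(b) and Lemma~\ref{lem:samelecturer1} to contradict this---is a clean alternative that squeezes more out of the structural lemmas and avoids the capacity-counting step.

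One small gap to fix. In the (P3) branch of the crossed case you write ``if $M^\land(s_w)=M(s_w)$ then $s_w\in M(l_k)$ and $(s_i,p_j)$ already blocks $M$ via (P3)'', and then take $s^\ast=s_w\in M'(l_k)\setminus M(l_k)$. But $M^\land(s_w)=M'(s_w)\ne M(s_w)$ does \emph{not} exclude $s_w\in M(l_k)$: it may be that $s_w\in S_k(M,M')$, i.e.\ $M(s_w)$ is a different project also offered by $l_k$. The remedy is trivial: split instead on whether $s_w\in M(l_k)$. If $s_w\in M(l_k)$ then, since $l_k$ prefers $s_i$ to $s_w$, $p_j$ is undersubscribed in $M$ and $l_k$ is full in $M$, the pair $(s_i,p_j)$ blocks $M$ via (P3). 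Otherwise $s_w\in M'(l_k)\setminus M(l_k)$ and (since $M^\land(s_w)=M'(s_w)\ne M(s_w)$) prefers $M'$ to $M$, and your application of Lemma~\ref{lem:student-pref-lecturer-pref} goes through.
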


\begin{proof}
Suppose for contradiction that \( (s, p) \) is a blocking pair for \( M^\land \), where project \( p \) is offered by lecturer \( l \). Then either:

\begin{itemize}
    \item[(S1)] \( s \) is unassigned in \( M^\land \), or
    \item[(S2)] \( s \) is assigned in \( M^\land \), but prefers \( p \) to $M^\land(s)$.
\end{itemize}

\noindent And one of the following four conditions holds for \( p \) and \(l\):

\begin{itemize}
    \item[(P1)] both \( p \) and \( l \) are undersubscribed in \( M^\land \).
    \item[(P2)] \( p \) is undersubscribed in \( M^\land \), \( l \) is full in \( M^\land \), and \( s \in M^\land(l) \).
    \item[(P3)] \( p \) is undersubscribed in \( M^\land \), \( l \) is full in \( M^\land \), and \( l \) prefers \( s \) to the worst student in \( M^\land(l) \).
    \item[(P4)] \( p \) is full in \( M^\land \), and \( l \) prefers \( s \) to the worst student in \( M^\land(p) \).
\end{itemize}

\noindent We consider each combination of conditions in turn. Note that the case (S1 \& P2) cannot arise, since $s$ is unassigned in $M^\land$ and therefore cannot belong to $M^\land(l)$.

\bigskip
\noindent \textbf{(S1 \& P1) and (S2 \& P1):} First, suppose $s$ is unassigned in $M^\land$. Then, by the construction of $M^\land$, $s$ is unassigned in both $M$ and $M'$. Alternatively, if $s$ is assigned in $M^\land$ and prefers $p$ to $M^\land(s)$, then $s$ prefers $p$ to both $M(s)$ and $M'(s)$, since $s$  receives their preferred project in $M^\land$. Now consider condition (P1), where both $p$ and its lecturer $l$ are undersubscribed in $M^\land$. By Lemma~\ref{lemma_meet_lkunder}, since $l$ is undersubscribed in $M^\land$, it is undersubscribed in both $M$ and $M'$. By Lemma~\ref{lemma_meet_pjunder}, since $p$ is undersubscribed in $M^\land$, it is undersubscribed in at least one of $M$ or $M'$. Without loss of generality, suppose $p$ is undersubscribed in $M'$. Then, whether $s$ is unassigned in $M'$, or assigned to a project they prefer less than $p$, the pair $(s, p)$ blocks $M'$, a contradiction. We conclude that  (S1 \& P1) and (S2 \& P1) cannot arise.\\

\noindent \textbf{(S2 \& P2), (S1 \& P3) and (S2 \& P3):}  
First, consider (S1), where \(s\) is unassigned in \(M^\land\). By construction of \(M^\land\), this means that \(s\) is unassigned in both \(M\) and \(M'\).  Next, consider (S2), where \(s\) is assigned in \(M^\land\) and prefers \(p\) to \(M^\land(s)\). Since each student in \(M^\land\) receives the more preferred of their two projects from \(M\) and \(M'\), it follows that \(s\) prefers \(p\) to both \(M(s)\) and \(M'(s)\). In conditions (P2) and (P3), \(p\) is undersubscribed in \(M^\land\). Hence, by Lemma~\ref{lemma_meet_pjunder}, \(p\) must be undersubscribed in at least one of \(M\) or \(M'\). 

\medskip
\noindent Suppose first that \(p\) is undersubscribed in both $M$ and $M'$. From (S2 \& P2), we have that \(s \in M^\land(l)\), which means that \(s \in M(l)\) or \(s \in M'(l)\).  If $s \in M'(l)$, then $(s,p)$ blocks $M'$, since $s$ prefers $p$ to $M'(s)$ and $p$ is undersubscribed in $M'$. This blocking pair arises whether $l$ is undersubscribed or full in $M'$. A similar contradiction arises in $M$ if \(s \in M(l)\). In conditions (S1 \& P3) and (S2 \& P3), let \(s_z\) be the worst student in \(M^\land(l)\), where \(l\) prefers \(s\) to \(s_z\).  If \(s_z \in M'(l)\), then \((s,p)\) blocks \(M'\), since \(s\) is either unassigned in \(M'\) or prefers \(p\) to \(M'(s)\), \(p\) is undersubscribed in \(M'\) and $l$ prefers $s$ to $s_z$ (again, this holds whether \(l\) is full or undersubscribed in $M'$).  A similar contradiction arises  in $M$ if \(s_z \in M(l)\).  

\medskip
\noindent Next, suppose without loss of generality that \(p\) is full in \(M\) but undersubscribed in \(M^\land\) and \(M'\).  If \(l\) is undersubscribed in \(M'\), then the pair \((s,p)\) blocks \(M'\), since \(s\) is either unassigned in $M'$ or prefers \(p\) to \(M'(s)\), and both \(p\) and \(l\) are undersubscribed in \(M'\).  Hence \(l\) is full in \(M'\), and therefore also full in \(M\). From (S2 \& P2), we have that \(s \in M(l)\) or \(s \in M'(l)\). If \(s \in M'(l)\), then \((s,p)\) blocks \(M'\), since \(s\) prefers \(p\) to \(M'(s)\), \(p\) is undersubscribed in \(M'\) and $l$ is full in $M'$.  Therefore, \(s \in M(l) \setminus M'(l)\). In (S1 \& P3) and (S2 \& P3), we have that \(s_z \in M(l)\) or \(s_z \in M'(l)\). If \(s_z \in M'(l)\), then \((s,p)\) blocks \(M'\), since \(s\) is either unassigned in $M'$ or prefers \(p\) to \(M'(s)\), \(p\) is undersubscribed in \(M'\), $l$ is full in $M'$, and \(l\) prefers \(s\) to \(s_z\).  Thus \(s_z \in M(l) \setminus M'(l)\).  

\bigskip
\noindent Now since $p$ is full in $M$ but undersubscribed in $M^\land$, it follows that $|M(p)| > |M^\land(p)|$. Since $l$ is full in both $M^\land$ and $M$, and $|M(p)| > |M^\land(p)|$, there exists a project $p_a$ offered by $l$ such that $|M^\land(p_a)| > |M(p_a)|$, implying that $p_a$ is undersubscribed in $M$. Thus, there exists a student $s_a \in M^\land(p_a) \setminus M(p_a)$, and hence $s_a \in M'(p_a) \setminus M(p_a)$. Since $s_a$ receives their more preferred project in $M^\land$, they prefer $p_a$ to $M(s_a)$. Moreover by the stability of $M$ (and since $p_a$ is undersubscribed in $M$), $l$ prefers the worst student in $M(l)$ to $s_a$. In the (S2 \& P2) case, it follows that $l$ prefers $s$ to $s_a$, since $s \in M(l) \setminus M'(l)$. However, since $s$ prefers $p$ to $M'(s)$, $p$ is undersubcribed in $M'$, $l$ is full in $M'$, and $l$ prefers $s$ to $s_a$, the pair $(s,p)$ blocks $M'$, a contradiction. In the (S1 \& P3) and (S2 \& P3) case, it follows that $l$ prefers $s_z$ to $s_a$, since $s_z \in M(l) \setminus M'(l)$. Consequently, $l$ prefers $s$ to $s_a$, since $l$ prefers $s$ to $s_z$. Again, we arrive at a similar contradiction as in (S2 \& P2) case, whereby $(s,p)$ blocks $M'$.

\medskip
\noindent Therefore, no blocking pair of type (S2 \& P2), (S1 \& P3) and (S2 \& P3) exists in \( M^\land \).\\

\noindent \textbf{(S1 \& P4) and (S2 \& P4):} Clearly, if $s$ is unassigned in $M^\land$, then by construction, $s$ is unassigned in both $M$ and $M'$. If instead $s$ is assigned in $M^\land$ and prefers $p$ to $M^\land(s)$, then $s$ prefers $p$ to both $M(s)$ and $M'(s)$, since $s$ receives the more preferred of their two projects in $M^\land$. Consider condition (P4), where $p$ is full in $M^\land$ and $l$ prefers $s$ to the worst student in $M^\land(p)$. Let $s_z$ be the worst student in $M^\land (p)$. Clearly, either $s_z \in M(p)$ or $s_z \in M'(p)$.

\medskip
\noindent \noindent First suppose \( (s_z, p) \in M \). If \( s \) is unassigned in \( M \), or if $s$ prefers \( p \) to \( M(s) \), then \( (s, p) \) blocks \( M \), since \( p \) is full and \( l \) prefers \( s \) to \( s_z \in M(p) \). This contradicts the stability of \( M \). A similar argument applies if \( (s_z, p) \in M' \): whether \( s \) is unassigned in \( M' \), or prefers \( p \) to \( M'(s) \), the pair \( (s, p) \) blocks \( M' \), again a contradiction. Therefore, no blocking pair of type (S1 \& P4) or (S2 \& P4) can exist in \( M^\land \).

\bigskip
\noindent In all possible cases, we arrive at a contradiction. Therefore, no blocking pair exists in \( M^\land \), and \( M^\land \) is stable.
\end{proof}

\medskip
\noindent We denote by $M \land M'$ the set of (student, project) pairs in which each student is assigned the better of her project in $M$ and $M'$; and it follows from Lemma $\ref{lemma_meet_stable}$ that $M \land M'$ is a stable matching. Hence, if each student is given the better of her project in any fixed set of stable matchings, then the resulting assignment is a stable matching. For the case where \(\mathcal{M}\) is the set of all stable matchings in \(I\), we denote by \(\bigwedge_{M \in \mathcal{M}} M\), or simply \(\bigwedge \mathcal{M}\), the resulting stable matching. This matching is student-optimal and, by Corollary~\ref{cor:dominance-reversal}, lecturer-pessimal.

\medskip
\begin{definition}
\label{prop:join_is_stable}
Let $M$ and $M'$ be two stable matchings in $I$, and define a matching $M^\lor$ as follows: for each student $s_i$,
\begin{itemize}
\item if $s_i$ is unassigned in both $M$ and $M'$, then $s_i$ is unassigned in $M^\lor$;
\item if $s_i$ is assigned to the same project in both $M$ and $M'$, then $s_i$ is assigned to that project in $M^\lor$;
\item otherwise, $s_i$ is assigned in $M^\lor$ to the worse of their two projects in $M$ and $M'$.
\end{itemize}  
\end{definition}

\noindent In Lemma \ref{lemma_join_stable}, we prove that $M^\lor$ is a stable matching in $I$. To prove this, we first present Lemmas~\ref{lemma_join_lkunder} -- \ref{lemma_join_matching}.

\begin{lemma}
\label{lemma_join_lkunder}
If a lecturer \( l_k \) is undersubscribed in \( M^\lor \), then \( l_k \) is undersubscribed in both \( M \) and \( M' \).
\end{lemma}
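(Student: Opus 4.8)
The plan is to argue by contradiction and to mirror the structure of Lemma~\ref{lemma_meet_lkunder}, but exploiting the fact that the join construction makes the argument shorter than the meet case: because each student receives the \emph{worse} of their two projects in $M^\lor$, the students we extract will point in the ``right'' direction for Lemma~\ref{lem:student-pref-lecturer-pref}, so the heavier machinery of Lemma~\ref{lem:s-prefers-l-prefers1} needed in the meet case can be avoided. So I would suppose $l_k$ is undersubscribed in $M^\lor$ but is not undersubscribed in both $M$ and $M'$. By Theorem~\ref{thm:unpopular-students}(i), $|M(l_k)| = |M'(l_k)|$, so $l_k$ is either full in both matchings or undersubscribed in both; since the latter is excluded, $l_k$ is full in both $M$ and $M'$, whence $|M^\lor(l_k)| < d_k = |M(l_k)| = |M'(l_k)|$. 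In particular both $M(l_k) \setminus M^\lor(l_k)$ and $M'(l_k) \setminus M^\lor(l_k)$ are non-empty.

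Next I would extract a student $s_a \in M(l_k) \setminus M^\lor(l_k)$ and determine the behaviour of $s_a$ across $M$, $M'$ and $M^\lor$. Writing $p_a = M(s_a) \in P_k$, the fact that $s_a$ is assigned in $M^\lor$ (being assigned in $M$) but $M^\lor(s_a) \notin P_k$ rules out, via Definition~\ref{prop:join_is_stable}, the possibility that $s_a$ is assigned the same project in $M$ and $M'$; hence $s_a$ is assigned to different projects and $M^\lor(s_a)$ is the worse (for $s_a$) of $M(s_a)$ and $M'(s_a)$. Since $M^\lor(s_a) \notin P_k$ while $p_a \in P_k$, this forces $M^\lor(s_a) = M'(s_a) \notin P_k$; therefore $s_a \in M(l_k) \setminus M'(l_k)$, and, since $M^\lor$ selected $M'(s_a)$ as the worse project, $s_a$ prefers $M$ to $M'$. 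The symmetric argument applied to a student $s_b \in M'(l_k) \setminus M^\lor(l_k)$ shows that $s_b \in M'(l_k) \setminus M(l_k)$ and $s_b$ prefers $M'$ to $M$.

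Finally I would invoke Lemma~\ref{lem:student-pref-lecturer-pref} twice. Since $M(l_k) \neq M'(l_k)$ (witnessed by $s_a$) and $s_a \in M(l_k) \setminus M'(l_k)$ prefers $M$ to $M'$, the lemma gives that $l_k$ prefers $M'$ to $M$. Applying the lemma with the roles of $M$ and $M'$ interchanged, using $s_b \in M'(l_k) \setminus M(l_k)$ who prefers $M'$ to $M$, gives that $l_k$ prefers $M$ to $M'$. These two statements are contradictory, since the definition of a lecturer's preference over matchings is antisymmetric when $M(l_k) \neq M'(l_k)$ (the two sets $M(l_k)\setminus M'(l_k)$ and $M'(l_k)\setminus M(l_k)$ are non-empty and equinumerous, and their elements cannot be dominated in both directions simultaneously). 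This contradiction shows that $l_k$ must be undersubscribed in both $M$ and $M'$.

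The only step requiring genuine care is the middle one: correctly translating ``$s_a \notin M^\lor(l_k)$'' into the two facts ``$M'(s_a) \notin P_k$'' and ``$s_a$ prefers $M$ to $M'$'' by going carefully through the three cases of the join construction, and then observing that this really does place $s_a$ (and $s_b$) in the direction that Lemma~\ref{lem:student-pref-lecturer-pref} can consume, namely with the student assigned to $l_k$ precisely in the matching they prefer — which is exactly what fails for $M^\land$ and is why Lemma~\ref{lemma_meet_lkunder} needed a longer argument. Everything else is bookkeeping.
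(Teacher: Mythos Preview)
Your proof is correct and follows essentially the same approach as the paper: assume $l_k$ is full in both $M$ and $M'$, extract students $s_a \in M(l_k)\setminus M^\lor(l_k)$ and $s_b \in M'(l_k)\setminus M^\lor(l_k)$, deduce that $s_a$ prefers $M$ to $M'$ and $s_b$ prefers $M'$ to $M$, and then apply Lemma~\ref{lem:student-pref-lecturer-pref} twice to obtain the contradiction. Your middle paragraph spells out more carefully than the paper why $s_a \in M(l_k)\setminus M'(l_k)$ (the paper asserts this in one line), and your explicit invocation of Theorem~\ref{thm:unpopular-students}(i) to force fullness in both matchings is a small expository improvement, but the argument is otherwise identical.
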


\begin{proof}
Suppose, for contradiction, that \( l_k \) is undersubscribed in \( M^\lor \), but is full in both \( M \) and \( M' \). Then
\(|M(l_k)| > |M^\lor(l_k)| \) and \( |M'(l_k)| > |M^\lor(l_k)|\). It follows that there exist students \( s_a \in M(l_k) \setminus M^\lor(l_k) \) and \( s_b \in M'(l_k) \setminus M^\lor(l_k) \). Hence, \( s_a \in M(l_k) \setminus M'(l_k) \), and \( s_b \in M'(l_k) \setminus M(l_k) \). By construction of \( M^\lor \), each student is assigned to the less preferred of their two projects in \( M \) and \( M' \); therefore, \( s_a \) prefers \( M \) to \( M' \), and \( s_b \) prefers \( M' \) to \( M \).

\medskip
\noindent By Lemma~\ref{lem:student-pref-lecturer-pref}, since \( s_a \) prefers \( M \) to \( M' \), it follows that \( l_k \) prefers \( M' \) to \( M \). Conversely, since \( s_b \) prefers \( M' \) to \( M \), the same lemma implies that \( l_k \) prefers \( M \) to \( M' \). This yields a contradiction.  Therefore, our assumption is false, and \( l_k \) must be undersubscribed in both \( M \) and \( M' \).
\end{proof}

\begin{lemma}
\label{lemma_join_pjunder}
If a project \( p_j \) is undersubscribed in \( M^\lor \), then it must be undersubscribed in at least one of \( M \) or \( M' \).
\end{lemma}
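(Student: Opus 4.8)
The plan is to prove the contrapositive, closely following the structure of the proof of Lemma~\ref{lemma_meet_pjunder} but with the preference directions reversed. So suppose $p_j$, offered by lecturer $l_k$, is full in \emph{both} $M$ and $M'$ while being undersubscribed in $M^\lor$; I want a contradiction. Since a project is never oversubscribed in a matching, $|M(p_j)| = |M'(p_j)| = c_j > |M^\lor(p_j)|$, so I can fix a student $s_a \in M(p_j) \setminus M^\lor(p_j)$. Any student assigned to $p_j$ in both $M$ and $M'$ is, by the second bullet of Definition~\ref{prop:join_is_stable}, assigned to $p_j$ in $M^\lor$; hence $s_a \notin M'(p_j)$, i.e. $s_a \in M(p_j) \setminus M'(p_j)$. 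By Theorem~\ref{thm:unpopular-students}(ii), $s_a$ is assigned in $M'$, say to $M'(s_a) \neq p_j$. Now $M^\lor$ gives $s_a$ the worse of $M(s_a) = p_j$ and $M'(s_a)$, and $M^\lor(s_a) \ne p_j$ (else $s_a \in M^\lor(p_j)$), so $M'(s_a)$ must be the worse project; thus $s_a$ prefers $M$ to $M'$. A symmetric choice gives a student $s_b \in M'(p_j) \setminus M(p_j)$ who prefers $M'$ to $M$.

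The heart of the argument is then two symmetric blocking-pair arguments. Since $s_a$ is assigned in $M'$ and prefers $p_j = M(s_a)$ to $M'(s_a)$, and $(s_a, p_j) \notin M'$, the only way $(s_a, p_j)$ can fail to block $M'$ — given that $p_j$ is full in $M'$, which rules out conditions (P1)--(P3) — is for condition (P4) to fail, i.e. for $l_k$ to prefer the worst student in $M'(p_j)$, and hence every student in $M'(p_j)$, to $s_a$. Stability of $M'$ forces this, and since $s_b \in M'(p_j)$ we conclude that $l_k$ prefers $s_b$ to $s_a$. Interchanging the roles of $M$ and $M'$ — now using that $s_b$ is assigned in $M$, prefers $p_j = M'(s_b)$ to $M(s_b)$, that $p_j$ is full in $M$, and that $s_a \in M(p_j)$ — gives that $l_k$ prefers $s_a$ to $s_b$. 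These two facts contradict the strictness of $l_k$'s preference list, so $p_j$ must be undersubscribed in at least one of $M$ or $M'$.

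I do not expect a serious obstacle here; the argument is actually lighter than for the meet case (Lemma~\ref{lemma_meet_pjunder}), where the corresponding students received their \emph{better} projects at $p_j$, so $(s_a, p_j)$ was never a candidate blocking pair and one had to detour through Lemma~\ref{lem:s-prefers-l-prefers1}. The one place to be careful is pinning down the preference directions correctly from the definition of $M^\lor$ (the ``worse project'' rule reverses everything relative to $M^\land$), together with the small bookkeeping that $s_a$ and $s_b$ are genuinely \emph{assigned} in the other matching (Theorem~\ref{thm:unpopular-students}(ii)) and that $(s_a, p_j)$ and $(s_b, p_j)$ are acceptable pairs — immediate, since each of $s_a, s_b$ is assigned to $p_j$ in one of the two matchings.
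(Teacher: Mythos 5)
Your proof is correct and follows essentially the same route as the paper's: both extract $s_a \in M(p_j)\setminus M'(p_j)$ preferring $M$ to $M'$ and $s_b \in M'(p_j)\setminus M(p_j)$ preferring $M'$ to $M$, then apply stability of $M'$ (and symmetrically of $M$) with $p_j$ full to force contradictory preferences of $l_k$ over $s_a$ and $s_b$. The only cosmetic difference is that the paper phrases the second half as ``$(s_b,p_j)$ blocks $M$'' rather than as a violation of strictness of $l_k$'s list, which is the same contradiction.
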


\begin{proof}
Suppose, for contradiction, that \( p_j \) is undersubscribed in \( M^\lor \), but full in both \( M \) and \( M' \). Then we have
\[
|M(p_j)| > |M^\lor(p_j)| \quad \text{and} \quad |M'(p_j)| > |M^\lor(p_j)|.
\]
It follows that there exist students \( s_a \in M(p_j) \setminus M^\lor(p_j) \) and \( s_b \in M'(p_j) \setminus M^\lor(p_j) \). In particular, \( s_a \in M(p_j) \setminus M'(p_j) \) and \( s_b \in M'(p_j) \setminus M(p_j) \). Since each student is assigned in \( M^\lor \) to the less preferred of their projects in \( M \) and \( M' \), it follows that \( s_a \) prefers \( M \) to \( M' \), and \( s_b \) prefers \( M' \) to \( M \).

\medskip
\noindent Let $l_k$ be the lecturer who offers $p_j$. By stability of $M'$, since $s_a$ prefers $p_j$ to $M'(s_a)$ and $p_j$ is full in $M'$, it follows that $l_k$ prefers the worst student in $M'(p_j)$ to $s_a$. In particular, $l_k$ prefers student $s_b \in M'(p_j)$ to $s_a$. However, since $s_b$ prefers $p_j$ to $M(s_b)$, $p_j$ is full in $M$, and $l_k$ prefers $s_b$ to some student in $M(p_j)$ (namely $s_a$), it follows that $(s_b,p_j)$ blocks $M$, a contradiction. Therefore, $p_j$ must be undersubscribed in at least one of $M$ or $M'$.
\end{proof}

\begin{lemma}
\label{lemma_join_matching}
\( M^\lor \) is a matching.
\end{lemma}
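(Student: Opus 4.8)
\noindent The plan is to follow the same two-step pattern as Lemma~\ref{lemma_meet_matching}, adapted to the fact that \(M^\lor\) gives each student the \emph{worse} of her two projects. Since by construction no student is assigned more than one project in \(M^\lor\), it suffices to show (i) that no project is oversubscribed in \(M^\lor\), and (ii) that no lecturer is oversubscribed in \(M^\lor\); each is proved by contradiction, and (ii) will use (i). The tools are Lemmas~\ref{lem:s-prefers-l-prefers1}, \ref{lem:samelecturer1} (and Theorem~\ref{thm:unpopular-students}) rather than the blocking-pair arguments used for \(M^\land\).

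\medskip
\noindent For (i), suppose \(p_j\), offered by \(l_k\), is oversubscribed in \(M^\lor\). Since \(M\) and \(M'\) are matchings, \(|M^\lor(p_j)|>|M(p_j)|\) and \(|M^\lor(p_j)|>|M'(p_j)|\), so there are students \(s_a\in M^\lor(p_j)\setminus M(p_j)\) and \(s_b\in M^\lor(p_j)\setminus M'(p_j)\). Because \(M^\lor(s_a)=p_j\) is one of \(s_a\)'s two projects but is not \(M(s_a)\), we must have \(M'(s_a)=p_j\) and, as \(p_j\) is the worse one, \(s_a\) prefers \(M\) to \(M'\) and \(s_a\in M'(p_j)\setminus M(p_j)\); symmetrically \(s_b\in M(p_j)\setminus M'(p_j)\) and \(s_b\) prefers \(M'\) to \(M\). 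Applying Lemma~\ref{lem:s-prefers-l-prefers1}(a) to \(M,M'\) with \(s_a\) (assigned in \(M'\) to \(p_j\), preferring \(M\) to \(M'\)), and using that \(s_b\in M(p_j)\setminus M'(p_j)\neq\emptyset\), gives that \(l_k\) prefers \(s_a\) to \(s_b\); applying the same lemma with the roles of \(M\) and \(M'\) interchanged and the student \(s_b\) gives that \(l_k\) prefers \(s_b\) to \(s_a\) --- a contradiction.

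\medskip
\noindent For (ii), suppose \(l_k\) is oversubscribed in \(M^\lor\). Counting over \(P_k\) exactly as in Lemma~\ref{lemma_meet_matching}, there is a project \(p_a\in P_k\) with \(|M^\lor(p_a)|>|M(p_a)|\) and a project \(p_b\in P_k\) with \(|M^\lor(p_b)|>|M'(p_b)|\). If \(p_a=p_b\), the argument of (i) applied to \(p_a\) yields a contradiction (it used only that \(M'(p_a)\setminus M(p_a)\) and \(M(p_a)\setminus M'(p_a)\) are both non-empty, which again holds). So assume \(p_a\neq p_b\). As in (i), choose \(s_a\in M^\lor(p_a)\setminus M(p_a)=M'(p_a)\setminus M(p_a)\) preferring \(M\) to \(M'\); by part~(i), \(p_a\) is not oversubscribed in \(M^\lor\), so \(|M(p_a)|<|M^\lor(p_a)|\le c_a\), i.e.\ \(p_a\) is undersubscribed in \(M\). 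Symmetrically there is \(s_b\in M(p_b)\setminus M'(p_b)\) preferring \(M'\) to \(M\), with \(p_b\) undersubscribed in \(M'\). By Lemma~\ref{lem:s-prefers-l-prefers1}(b) (applied to \(M,M'\) with \(s_a\), and with \(M,M'\) interchanged for \(s_b\)), \(l_k\) prefers \(s_a\) to every student in \(M(l_k)\setminus M'(l_k)\), and prefers \(s_b\) to every student in \(M'(l_k)\setminus M(l_k)\). Now \(s_a\in M'(l_k)\): if \(s_a\in M'(l_k)\setminus M(l_k)\) then \(l_k\) prefers \(s_b\) to \(s_a\) directly, and if \(s_a\in S_k(M,M')\) then Lemma~\ref{lem:samelecturer1} supplies \(s_r\in M'(l_k)\setminus M(l_k)\) with \(l_k\) preferring \(s_r\) to \(s_a\), whence \(l_k\) prefers \(s_b\) to \(s_a\) by transitivity. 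A symmetric argument on \(s_b\in M(l_k)\) (using Lemma~\ref{lem:samelecturer1} with \(M,M'\) interchanged when \(s_b\in S_k(M,M')\)) shows \(l_k\) prefers \(s_a\) to \(s_b\). This contradiction completes the proof.

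\medskip
\noindent I expect case~(ii) to be the main obstacle. In Lemma~\ref{lemma_meet_matching} the witnesses to oversubscription are assigned to the relevant projects in the \emph{opposite} matching, and the undersubscription of those projects forces the witnesses out of \(l_k\)'s assigned set, so Lemma~\ref{lem:student-pref-lecturer-pref} applies at once; here \(s_a\) and \(s_b\) are assigned to \(p_a\) and \(p_b\) \emph{within} \(M'\) and \(M\), so they automatically lie in \(M'(l_k)\) and \(M(l_k)\), and we cannot conclude they lie in \(M(l_k)\setminus M'(l_k)\) or \(M'(l_k)\setminus M(l_k)\). Routing through Lemma~\ref{lem:samelecturer1} to relate them to students in those difference sets is the key extra step. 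One must also verify that each (sometimes role-swapped) invocation of Lemma~\ref{lem:s-prefers-l-prefers1} meets its hypotheses --- in particular that the relevant student is assigned to different projects in the two matchings, which follows from Theorem~\ref{thm:unpopular-students}(ii).
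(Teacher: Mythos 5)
Your proposal is correct and follows essentially the same route as the paper's own proof: part (i) derives contradictory preferences of $l_k$ via the two applications of Lemma~\ref{lem:s-prefers-l-prefers1}(a), and part (ii) uses the no-oversubscribed-project fact to obtain projects undersubscribed in $M$ and $M'$ respectively, applies Lemma~\ref{lem:s-prefers-l-prefers1}(b) in both directions, and bridges via Lemma~\ref{lem:samelecturer1} when a witness lies in $S_k(M,M')$, exactly as the paper does (your explicit $p_a=p_b$ case is harmless extra care, and the asserted equality $M^\lor(p_a)\setminus M(p_a)=M'(p_a)\setminus M(p_a)$ should strictly be the inclusion $\subseteq$, which is all you use).
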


\begin{proof}
By construction, no student is assigned to more than one project in \( M^\lor \). It remains to show that no project or lecturer is oversubscribed in \( M^\lor \). Suppose, for contradiction, that some project \( p_j \) is oversubscribed in \( M^\lor \). Then
\[
|M^\lor(p_j)| > |M(p_j)| \quad \text{and} \quad |M^\lor(p_j)| > |M'(p_j)|.
\]
Thus, there exist students \( s_a \in M^\lor(p_j) \setminus M'(p_j) \) and \( s_b \in M^\lor(p_j) \setminus M(p_j) \). Since any student assigned to \( p_j \) in both \( M \) and \( M' \) would also be assigned to \( p_j \) in \( M^\lor \), it follows that \( s_a \in M(p_j) \setminus M'(p_j) \) and \( s_b \in M'(p_j) \setminus M(p_j) \). Moreover, \( s_a \) prefers \( M' \) to \( M \), and \( s_b \) prefers \( M \) to \( M' \), since each student is assigned in \( M^\lor \) to the less preferred of their two projects. Let \( l_k \) be the lecturer who offers \( p_j \). By the first part of Lemma~\ref{lem:s-prefers-l-prefers1}, since \( s_a \) prefers \( M' \) to \( M \) and \( s_b \in M'(p_j) \setminus M(p_j) \), it follows that \( l_k \) prefers \( s_a \) to \( s_b \). Similarly, since \( s_b \) prefers \( M \) to \( M' \) and \( s_a \in M(p_j) \setminus M'(p_j) \), it follows that \( l_k \) prefers \( s_b \) to \( s_a \). Thus, \( l_k \) simultaneously prefers \( s_a \) to \( s_b \), and \( s_b \) to \( s_a \), a contradiction. Therefore, no project is oversubscribed in \( M^\lor \).

\medskip
\noindent Next, suppose that there exists some lecturer \( l_k \) who is oversubscribed in \( M^\lor \). Then there must be some project \( p_a \in P_k \) such that \( |M^\lor(p_a)| > |M'(p_a)| \), meaning that \( p_a \) is undersubscribed in \( M' \), since no project can be oversubscribed in \( M^\lor \). Similarly, there exists some project \( p_b \in P_k \) such that \( |M^\lor(p_b)| > |M(p_b)| \), meaning that \( p_b \) is undersubscribed in \( M \).  Let \( s_a \) be a student such that \( (s_a, p_a) \in M^\lor \setminus M' \). Thus, \( s_a \in M(p_a) \setminus M'(p_a) \) and $s_a$ prefers $M'$ to $M$. Let \( s_b \) be a student such that \( (s_b, p_b) \in M^\lor \setminus M \); thus \( s_b \in M'(p_b) \setminus M(p_b) \) and $s_b$ prefers $M$ to $M'$.

\medskip
\noindent By Lemma~\ref{lem:s-prefers-l-prefers1}, since \( s_a \) prefers \( M' \) to \( M \) and \( p_a \) is undersubscribed in \( M' \), \( l_k \) prefers \( s_a \) to each student in \( M'(l_k) \setminus M(l_k) \) (Note that here, \( M \) and \( M' \) are swapped compared to the statement of Lemma~\ref{lem:s-prefers-l-prefers1}). If \(s_b \in M'(l_k) \setminus M(l_k)\), then \(l_k\) prefers \(s_a\) to \(s_b\).  If instead \(s_b \in S_k(M,M')\), then since \(s_b\) prefers \(M\) to \(M'\), Lemma~\ref{lem:samelecturer1} implies that there exists some \(s' \in M'(l_k) \setminus M(l_k)\) such that \(l_k\) prefers \(s'\) to \(s_b\). It follows that \(l_k\) prefers \(s_a\) to \(s'\), and consequently to \(s_b\). 

\medskip
\noindent On the other hand, by Lemma~\ref{lem:s-prefers-l-prefers1} again, since \( s_b \) prefers \( M \) to \( M' \) and \( p_b \) is undersubscribed in \( M \), it follows that \( l_k \) prefers \( s_b \) to each student in \(M(l_k) \setminus M'(l_k) \). If $s_a \in M(l_k) \setminus M'(l_k)$, then $l_k$ prefers $s_b$ to $s_a$, a contradiction. If instead $s_a \in S_k(M,M')$, then by Lemma \ref{lem:samelecturer1}, there exists some $s \in M(l_k) \setminus M'(l_k)$ where $l_k$ prefers $s$ to $s_a$ (Note that here $s_a$ prefers $M'$ to $M$). This implies that $l_k$ prefers $s_b$ to $s$, and consequently to $s_a$. This yields a contradiction on $l_k$'s preferences. Hence, no lecturer is oversubscribed in \( M^\lor \). Therefore, \( M^\lor \) is a matching.
\end{proof}

\begin{lemma}
\label{lemma_join_stable}
\( M^\lor \) is a stable matching.
\end{lemma}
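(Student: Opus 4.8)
The plan is to argue by contradiction, mirroring the structure of the proof for the meet (Lemma~\ref{lemma_meet_stable}). Suppose $(s,p)$ is a blocking pair for $M^\lor$, where $p$ is offered by lecturer $l$. By Lemma~\ref{lemma_join_matching}, $M^\lor$ is a valid matching, so one of the student-side conditions (S1), (S2) holds together with one of the lecturer/project-side conditions (P1)--(P4) (and, as in the meet proof, the combination (S1 \& P2) is vacuous). In every case the goal is to show that $(s,p)$, or some auxiliary pair constructed from it, blocks $M$ or $M'$, contradicting their stability. Throughout I would transfer under-subscription of $l$ and of $p$ from $M^\lor$ back to $M$ and $M'$ using Lemmas~\ref{lemma_join_lkunder} and~\ref{lemma_join_pjunder}.

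The essential new feature, compared with the meet, is a loss of symmetry on the student side. In $M^\land$ each student receives the \emph{better} of her two projects, so a blocking pair immediately certifies that $s$ prefers $p$ to \emph{both} $M(s)$ and $M'(s)$; here each student receives the \emph{worse} of her two projects, so a blocking pair only certifies that $s$ is unassigned in both $M$ and $M'$, or that $s$ prefers $p$ to $M^\lor(s)$, which equals the less-preferred of $M(s)$ and $M'(s)$. Since the construction of $M^\lor$ is symmetric in $M$ and $M'$, I would relabel so that $M^\lor(s)=M(s)$; then $s$ prefers $p$ to $M(s)$ and the natural target for a contradiction is the stability of $M$. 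For the project side, in conditions (P3) and (P4) the worst student $s_z$ of $M^\lor(l)$, respectively $M^\lor(p)$, lies in $M(l)\cup M'(l)$, respectively $M(p)\cup M'(p)$; tracking on which side $s_z$ sits, and using that $l$ prefers $s$ to $s_z$, pins down $l$'s preference well enough to exhibit a blocking pair. For example, if (P4) holds with $s_z\in M(p)$, then $p$ full in $M$ gives $(s,p)$ blocking $M$ by (P4), while $p$ undersubscribed in $M$ gives $(s,p)$ blocking $M$ by (P1)--(P3), since $l$ prefers $s$ to $s_z\in M(l)$.

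The main obstacle, exactly as in the later cases of the meet proof, is the ``crossed'' configuration in which the student side points to $M$ but $p$ is full in $M$ while undersubscribed (and possibly so is $l$) in $M'$, together with its dual. I would resolve this with a counting argument: from $|M(p)|>|M^\lor(p)|$ (or the analogous inequality for $l$) one extracts a project $p'$ offered by $l$ and an auxiliary student $s_a\in M(p')\setminus M'(p')$ with $p'$ undersubscribed in $M'$; by the construction of $M^\lor$, $s_a$ prefers $M$ to $M'$, and then either $(s_a,p')$ already blocks $M'$ directly (when $l$ is undersubscribed in $M'$), or a further application of Lemma~\ref{lem:s-prefers-l-prefers1} with the roles of $M$ and $M'$ interchanged, together with Lemmas~\ref{lem:samelecturer1} and~\ref{lem:student-pref-lecturer-pref}, forces $l$'s preference between $s$ (or $s_z$) and $s_a$, yielding that $(s,p)$ blocks $M'$ after all. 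Assembling these subcases over all of (S1)--(S2)$\times$(P1)--(P4) shows that no blocking pair can exist, so $M^\lor$ is stable.
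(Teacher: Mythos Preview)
Your outline is correct and mirrors the paper's approach: a case analysis over (S1/S2)$\times$(P1)--(P4), transferring undersubscription of $l$ and $p$ back from $M^\lor$ to $M$ and $M'$ via Lemmas~\ref{lemma_join_lkunder} and~\ref{lemma_join_pjunder}, and in the crossed subcases extracting an auxiliary student by counting and invoking Lemma~\ref{lem:s-prefers-l-prefers1} (with the roles of $M$ and $M'$ interchanged) to pin down $l$'s preference. Two small deviations worth noting: the paper's proof of this lemma relies solely on Lemma~\ref{lem:s-prefers-l-prefers1} and never calls on Lemmas~\ref{lem:samelecturer1} or~\ref{lem:student-pref-lecturer-pref} directly, and in several subcases (for instance (S2 \& P1) and (S2 \& P2)) the contradicting blocking pair in $M$ or $M'$ is the auxiliary pair $(s^*,p)$ rather than $(s,p)$ itself.
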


\begin{proof}
Suppose for contradiction that \( (s, p) \) is a blocking pair for \( M^\lor \), where project \( p \) is offered by lecturer \( l \). Then either:

\begin{itemize}
    \item[(S1)] \( s \) is unassigned in \( M^\lor \), or
    \item[(S2)] \( s \) is assigned in \( M^\lor \), but prefers \( p \) to \( M^\lor(s) \).
\end{itemize}

\noindent And one of the following four conditions holds for the \( p \) and \( l \):

\begin{itemize}
    \item[(P1)] both \( p \) and \( l \) are undersubscribed in \( M^\lor \);
    \item[(P2)] \( p \) is undersubscribed in $M^\lor$, \( l \) is full in $M^\lor$, and \( s \in M^\lor(l) \);
    \item[(P3)] \( p \) is undersubscribed in $M^\lor$, \( l \) is full in $M^\lor$, and \( l \) prefers \( s \) to the worst student in \( M^\lor(l) \);
    \item[(P4)] \( p \) is full in $M^\lor$, and \( l \) prefers \( s \) to the worst student in \( M^\lor(p) \).
\end{itemize}

\noindent We consider each combination of conditions in turn. Note that the case (S1 \& P2) cannot arise, since $s$ is unassigned in $M^\lor$ and therefore cannot belong to $M^\lor(l)$.

\medskip
\noindent \textbf{(S1 \& P1):} Suppose \( s \) is unassigned in \( M^\lor \), and both \( p \) and \( l \) are undersubscribed in \( M^\lor \). By construction, if \( s \) is unassigned in \( M^\lor \), then \( s \) is unassigned in both \( M \) and \( M' \). By Lemma~\ref{lemma_join_lkunder}, since \( l \) is undersubscribed in \( M^\lor \), it is undersubscribed in both \( M \) and \( M' \). By Lemma~\ref{lemma_join_pjunder}, since \( p \) is undersubscribed in \( M^\lor \), it is undersubscribed in at least one of \( M \) or \( M' \). Without loss of generality, suppose \( p \) is undersubscribed in \( M' \). Then \( s \) is unassigned in \( M' \), and both $p$ and $l$ are undersubscribed in $M'$, the pair \( (s, p) \) blocks \( M' \), a contradiction. A similar argument applies if $p$ is undersubscribed in $M$. We conclude that no blocking pair of type (S1 \& P1) exists in \( M^\lor \).\\

\noindent \textbf{(S2 \& P1):} Suppose \( s \) is assigned in \( M^\lor \), prefers \( p \) to \( M^\lor(s) \), and both \( p \) and \( l \) are undersubscribed in \( M^\lor \). Since \( l \) is undersubscribed in \( M^\lor \), it follows from Lemma~\ref{lemma_join_lkunder} that \( l \) is undersubscribed in both \( M \) and \( M' \). In addition, Lemma~\ref{lemma_join_pjunder} implies that \( p \) is undersubscribed in at least one of \( M \) or \( M' \). Without loss of generality, assume that \( p \) is undersubscribed in \( M' \). Let \( p^* = M^\lor(s) \). Since \( s \) is assigned in \( M^\lor \), they must be assigned in at least one of \( M \) or \( M' \), and $p^*$ is the less preferred of the two. There are two possibilities for \( M(s) \) and \( M'(s) \):

\begin{itemize}
    \item[(i)] \( M'(s) = p^* \). Then $s$ prefers $p$ to $M'(s)$. Since \( s \) prefers \( p \) to \( p^* \), and both \( p \) and \( l \) are undersubscribed in \( M' \), the pair \( (s, p) \) blocks \( M' \), a contradiction.

    \item[(ii)] \( M(s) = p^* \). Then \( s \in M(p^*) \setminus M'(p^*) \). Suppose first that $p$ is undersubscribed in $M$. Then, since $s$ prefers $p$ to $p^*$, and both $p$ and $l$ are undersubscribed in $M$, the pair $(s, p)$ blocks $M$, a contradiction. Next suppose that $p$ is full in $M$. Since $p$ is undersubscribed in both $M^\lor$ and $M'$, it follows that
$$
|M(p)| > |M^\lor(p)| \quad \text{and} \quad |M(p)| > |M'(p)|.
$$
Hence, there exists some student \( s' \in M(p) \setminus M^\lor(p) \), and in particular \( s' \in M(p) \setminus M'(p) \). Since \( s' \) is assigned in $M^\lor$ to the less preferred of their two projects from \( M \) and \( M' \), it follows that \( s' \) prefers \( M \) to \( M' \). Since both \( p \) and \( l \) are undersubscribed in \( M' \), the pair \( (s', p) \) blocks \( M' \), contradicting the stability of $M'$.
\end{itemize}

\noindent In both cases~(i) and (ii), a contradiction arises. Therefore, no blocking pair of type (S2 \& P1) exists in \( M^\lor \).\\

\noindent \textbf{(S2 \& P2):}  Here, \( s \) is assigned in \( M^\lor \), prefers \( p \) to \( M^\lor(s) \), \( p \) is undersubscribed in \( M^\lor \), \( l \) is full in \( M^\lor \), and \( s \in M^\lor(l) \). By Lemma~\ref{lemma_join_pjunder}, \( p \) is undersubscribed in at least one of \( M \) or \( M' \); without loss of generality, assume \( p \) is undersubscribed in \( M' \). Let \( p^* = M^\lor(s) \), where \( p^* \) is offered by \( l \). Since \( s \) is assigned in \( M^\lor \), they must be assigned in either \( M \) or \( M' \) (or both). We consider the possibilities for \( M(s) \) and \( M'(s) \):

\begin{itemize}
    \item[(i)] \( M'(s) = p^* \). Then \( s \) prefers \( p \) to \( M'(s) \). Moreover, it follows that \( s \in M'(l) \), and since \( p \) is undersubscribed in \( M' \), the pair \( (s, p) \) blocks \( M' \), a contradiction (This blocking pair occurs whether $l$ is undersubscribed or full in $M'$).

    \item[(ii)] \( M(s) = p^* \) and \( M'(s) \ne p^* \). Then \( s \in M(p^*) \setminus M'(p^*) \) and prefers \( p \) to \( M(s) \). If \( p \) is undersubscribed in \( M \), then since \( s \in M(l) \), the pair \( (s, p) \) blocks \( M \). Therefore, \( p \) must be full in \( M \). Following the case (ii) argument in (S2 \& P1) (where $p$ is full in $M$ but undersubscribed in both $M^\lor$ and $M'$), there exists some $s^* \in M(p) \setminus M'(p)$ who prefers $p$ to $M'(s^*)$. We now consider the following subcases:
    
    \item[(iia):] $s \in M'(l)$.  Recall that $s$ prefers $p$ to $M(s)$ and $p$ is full in $M$. By the stability of $M$, $l$ prefers the worst student in $M(p)$ to $s$; that is, $l$ prefers $s^*$ to $s$. Now since $s^*$ prefers $p$ to $M'(s)$, $p$ is undersubscribed in $M'$ and $l$ prefers $s^*$ to some student in $M'(l)$ (namely $s$), the pair $(s^*,p)$ blocks $M'$, a contradiction. (Again, this blocking pair occurs whether $l$ is undersubscribed or full in $M'$)

    \item[(iib:)] Suppose \(s \notin M'(l)\). Then \(s \in M(l) \setminus M'(l)\), so there exists some \(\hat{s} \in M'(l) \setminus M(l)\), since $|M(l)| = |M'(l)|$.  Since \(s \in M^\lor(l)\) and \(s \in M(l)\), it follows that \(s\) prefers \(M'\) to \(M\), i.e.\ \(s\) prefers \(M'(s)\) to \(p^*\).  Since \(p^*\) is undersubscribed in \(M'\) and \(\hat{s} \in M'(l) \setminus M(l)\), Lemma~\ref{lem:s-prefers-l-prefers1} implies that \(l\) prefers \(s\) to \(\hat{s}\). Now, recall that $s$ prefers $p$ to $M(s)$ and $p$ is full in $M$. By the stability of \(M\),  $l$ prefers the worst student in $M(p)$ to $s$; that is, $l$ prefers $s^*$ to $s$. Since, \(l\) prefers \(s^*\) to \(s\), it follows that \(l\) prefers \(s^*\) to \(\hat{s}\). However, \(s^*\) prefers \(p\) to \(M'(s^*)\),  \(p\) is undersubscribed in \(M'\), and \(l\) prefers \(s^*\) to some student in \(M'(l)\) (namely \(\hat{s}\)). Thus \((s^*,p)\) blocks \(M'\), a contradiction.  
    \end{itemize}

\smallskip
\noindent Therefore, no blocking pair of type (S2 \& P2) exists in \( M^\lor \).\\

\noindent \textbf{(S1 \& P3):}  
Suppose \( s \) is unassigned in \( M^\lor \), \( p \) is undersubscribed in \( M^\lor \), \( l \) is full in \( M^\lor \), and \( l \) prefers \( s \) to the worst student in \( M^\lor(l) \). By construction of \( M^\lor \), any student unassigned in \( M^\lor \) must also be unassigned in both \( M \) and \( M' \). Let \( s_z \) denote the worst student in \( M^\lor(l) \), so that \( l \) prefers \( s \) to \( s_z \). Since \( p \) is undersubscribed in \( M^\lor \), by Lemma~\ref{lemma_join_pjunder}, \( p \) is undersubscribed in at least one of \( M \) or \( M' \); without loss of generality, assume that \( p \) is undersubscribed in \( M' \) (it may be full or undersubscribed in \( M \)). Suppose first that \( s_z \in M'(l) \). Then in \( M' \), \( s \) is unassigned, \( p \) is undersubscribed, and  \( l \) prefers \( s \) to \( s_z \). Therefore, the pair \( (s, p) \) blocks \( M' \), contradicting its stability. It follows that \( s_z \notin M'(l) \), and hence \( s_z \in M(l) \setminus M'(l) \). 

\medskip
\noindent Since $s_z \in M^\lor(l)$, they must be assigned in $M^\lor$ to the less preferred of their two projects. Hence $s_z$ prefers $M'$ to $M$. Let $M(s_z) = p_z$.  If $p_z$ is full in $M'$, then $|M'(p_z)| \ge |M(p_z)|$. Since $s_z \in M(p_z) \setminus M'(p_z)$, there exists some $s^* \in M'(p_z) \setminus M(p_z)$. Since $s_z$ prefers $M'$ to $M$, Lemma~\ref{lem:s-prefers-l-prefers1} implies that $l$ prefers $s_z$ to $s^*$. Similarly, given that $s_z \in M(l) \setminus M'(l)$, there also exist some student $s^* \in M'(l) \setminus M(l)$.  Now if $p_z$ is undersubscribed in $M'$, then by the second part of Lemma~\ref{lem:s-prefers-l-prefers1}, $l$ prefers $s_z$ to $s^*$.  In both cases, $l$ prefers $s_z$ to $s^*$. Since $l$ also prefers $s$ to $s_z$, it follows that $l$ prefers $s$ to $s^*$. But in $M'$, $s$ is unassigned, $p$ is undersubscribed, and $l$ prefers $s$ to some $s^* \in M'(l)$. Hence $(s,p)$ blocks $M'$, a contradiction. Note that the pair $(s,p)$ blocks $M'$ whether $l$ is undersubscribed or full in $M'$.

\smallskip
\noindent It follows that no blocking pair of type (S1 \& P3) exists in \( M^\lor \).\\

\noindent \textbf{(S2 \& P3):}  Suppose \( s \) is assigned in \( M^\lor \), prefers \( p \) to \( M^\lor(s) \), \( p \) is undersubscribed in \( M^\lor \), \( l \) is full in \( M^\lor \), and \( l \) prefers \( s \) to the worst student in \( M^\lor(l) \). Let \( p^* = M^\lor(s) \), and suppose that \( p^* = M'(s) \), so \( s \) prefers \( p \) to \( p^* \). Let \( s_z \) be the worst student in \( M^\lor(l) \). By Lemma~\ref{lemma_join_pjunder}, \( p \) is undersubscribed in at least one of \( M \) or \( M' \). 

\medskip
\noindent Suppose first that \( p \) is undersubscribed in \( M' \) (it may be full or undersubscribed in \( M \)). If \( s_z \in M'(l) \), then in \( M' \),  \( s \) is assigned to \( p^* \), prefers \( p \) to \( p^* \), \( p \) is undersubscribed in $M'$, and \( l \) prefers \( s \) to \( s_z \). Thus, the pair \( (s, p) \) blocks \( M' \), a contradiction. It follows that \( s_z \in M(l) \setminus M'(l) \). Since \( s_z \in M^\lor(l) \), they are assigned in both \( M \) and \( M' \), and assigned in \( M^\lor \) to the less preferred of the two. Therefore, \( s_z \) prefers \( M' \) to \( M \). Let \( M(s_z) = p_z \), where \( p_z \) is offered by \( l \). Regardless of whether \( p_z \) is full or undersubscribed in \( M' \), by Lemma~\ref{lem:s-prefers-l-prefers1}, there exists some student \( s^* \in M'(l) \) such that \( l \) prefers \( s_z \) to \( s^* \). Since \( l \) prefers \( s \) to \( s_z \), it follows that \( l \) also prefers \( s \) to \( s^* \). However, since \( s \) prefers \( p \) to \( M'(s) \), and \( p \) is undersubscribed in \( M' \), the pair \( (s, p) \) blocks \( M' \), a contradiction.

\medskip
\noindent Now suppose instead that \( p \) is full in \( M' \) and undersubscribed in \( M \). Then \( |M'(p)| > |M^\lor(p)| \) and \( |M'(p)| > |M(p)| \), so there exists a student \( s^* \in M'(p) \setminus M^\lor(p) \), and in particular \( s^* \in M'(p) \setminus M(p) \). Moreover, $s^*$ prefers $p$ to $M(s)$. By the stability of $M$ and since $p$ is undersubscribed in $M$, it follows that $s^* \notin M(l)$, and $l$ prefers each student in $M(l)$ to $s^*$. Thus, \( s^* \in M'(l) \setminus M(l) \). If \( s_z \in M(l) \), then \( l \) prefers \( s_z \) to \( s^* \), and since \( l \) also prefers \( s \) to \( s_z \), it follows that \( l \) prefers \( s \) to \( s^* \). Hence, the pair \( (s, p) \) blocks \( M' \), a contradiction. We conclude that \( s_z \in M'(l) \setminus M(l) \). Since \( s_z \in M^\lor(l) \), it follows that \( s_z \) prefers \( M \) to \( M' \). Let \( M'(s_z) = p_z \), where \( p_z \) is offered by \( l \).

\medskip
\noindent Suppose \( p_z \) is full in \( M \). Then there exists a student \( \hat{s} \in M(p_z) \setminus M'(p_z) \). (This is because $s_z \in M'(p_z) \setminus M(p_z)$ and clearly $|M(p_z)| \ge |M'(p_z)|$). Since \( s_z \) prefers \( M \) to \( M' \), Lemma~\ref{lem:s-prefers-l-prefers1} implies that \( l \) prefers \( s_z \) to \( \hat{s} \). Moreover, since \( \hat{s} \in M(l) \), and \( l \) prefers each student in \( M(l) \) to \( s^* \), it follows that \( l \) prefers \( \hat{s} \) to \( s^* \), and \( l \) prefers \( s \) to \( s^* \) (since $l$ prefers $s$ to $s_z$ and $l$ prefers $s_z$ to $\hat{s}$). Suppose \( p_z \) is undersubscribed in \( M \). Then Lemma~\ref{lem:s-prefers-l-prefers1} implies that there exists a student \( \hat{s} \in M(l) \setminus M'(l) \) such that \( l \) prefers \( s_z \) to \( \hat{s} \). Since \( \hat{s} \in M(l) \), and $l$ prefers each student in $M(l)$ to $s^*$, it follows that $l$ prefers $\hat{s}$ to $s^*$, and consequently, prefers $s$ to $s^*$(since $l$ prefers $s$ to $s_z$ and $l$ prefers $s_z$ to $\hat{s}$). In both cases, \( l \) prefers \( s \) to some student \( s^* \in M'(p) \). Since \( s \) prefers \( p \) to \( M'(s) \), and \( p \) is full in \( M' \), the pair \( (s, p) \) blocks \( M' \), a contradiction.

\smallskip
\noindent A similar argument applies if \( p^* = M(s) \). We therefore conclude that no blocking pair of type (S2 \& P3) exists in \( M^\lor \). \\

\noindent \textbf{(S1 \& P4):}  Suppose \( s \) is unassigned in \( M^\lor \), project \( p \) is full in \( M^\lor \), and lecturer \( l \) prefers \( s \) to the worst student in \( M^\lor(p) \). Then, by construction of \( M^\lor \), student \( s \) is unassigned in both \( M \) and \( M' \). Let \( s_z \) denote the worst student in \( M^\lor(p) \), so that \( l \) prefers \( s \) to \( s_z \). Since \( s_z \in M^\lor(p) \), it must be that either \( (s_z, p) \in M \) or \( (s_z, p) \in M' \). 

\medskip
\noindent Suppose first that \( (s_z, p) \in M \). If \( p \) is full in \( M \), then \( s \) is unassigned, \( p \) is full, and \( l \) prefers \( s \) to \( s_z \in M(p) \), so the pair \( (s, p) \) blocks \( M \), contradicting its stability. If instead \( p \) is undersubscribed in \( M \), then \( s \) is unassigned, \( p \) is undersubscribed, and $l$ prefers $s$ to \( s_z \in M(l) \); thus, \( (s, p) \) again blocks \( M \), a contradiction. Now suppose that \( (s_z, p) \in M' \). The same reasoning applies: whether \( p \) is full or undersubscribed in \( M' \), student \( s \) is unassigned in \( M' \), \( s_z \in M'(p) \) and $s_z \in M'(l)$, and \( l \) prefers \( s \) to \( s_z \). Hence, the pair \( (s, p) \) blocks \( M' \), again a contradiction.

\smallskip
\noindent We conclude that no blocking pair of type (S1 \& P4) exists in \( M^\lor \).\\

\noindent \textbf{(S2 \& P4):}  Suppose \( s \) is assigned in \( M^\lor \), prefers \( p \) to \( M^\lor(s) \), \( p \) is full in \( M^\lor \), and \( l \) prefers \( s \) to the worst student in \( M^\lor(p) \). Let \( p^* = M^\lor(s) \), and suppose \( p^* = M'(s) \), so that \( s \) prefers \( p \) to \( p^* \). Let \( s_z \) be the worst student in \( M^\lor(p) \). If \( (s_z, p) \in M' \), then \( s \) prefers \( p \) to \( M'(s) \), \( p \) is full in \( M' \), and \( l \) prefers \( s \) to \( s_z \in M'(p) \). Moreover, if $p$ is undersubscribed in $M'$, then $l$ prefers $s$ to student $s_z \in M'(l)$. Therefore, the pair \( (s, p) \) blocks \( M' \), a contradiction. It follows that \( s_z \in M(p) \setminus M'(p) \). Since \( s_z \in M^\lor(p) \), they are assigned in both \( M \) and \( M' \), and assigned in \( M^\lor \) to the less preferred of the two. Hence, \( s_z \) prefers \( M' \) to \( M \).

\medskip
\noindent Suppose first that \( p \) is full in \( M' \). Then there exists some student \( s^* \in M'(p) \setminus M(p) \). Since \( s_z \) prefers \( M' \) to \( M \), Lemma~\ref{lem:s-prefers-l-prefers1} implies that \( l \) prefers \( s_z \) to \( s^* \). If instead \( p \) is undersubscribed in \( M' \), then by Lemma~\ref{lem:s-prefers-l-prefers1}, \( l \) prefers \( s_z \) to some student in \(s ^* \in M'(l) \setminus M(l) \). Since \( l \) also prefers \( s \) to \( s_z \), it follows that \( l \) prefers \( s \) to student \( s^* \in M'(l) \), namely \( s^* \). Therefore, regardless of whether \( p \) is full or undersubscribed in \( M' \), the pair \( (s, p) \) blocks \( M' \), a contradiction. A similar argument applies if \( p^* = M(s) \). We therefore conclude that no blocking pair of type (S2 \& P4) can exist in \( M^\lor \).

\medskip
\noindent In all possible cases, we derive a contradiction. Therefore, no blocking pair exists in \( M^\lor \), and \( M^\lor \) is stable.
\end{proof}

\medskip
\noindent We denote by $M \lor M'$ the set of (student, project) pairs in which each student is assigned to the poorer of her projects in $M$ and $M'$; and it follows from Lemma $\ref{lemma_join_stable}$ that if each student is given the poorer of her projects in any fixed set of stable matchings, then the resulting assignment is a stable matching. For the case where $\mathcal{M}$ is the set of all stable matchings in $I$, we denote by $\bigvee_{M \in \mathcal{M}}M$, or simply $\bigvee \mathcal{M}$, the resulting stable matching.  This matching is student-pessimal and, by Corollary~\ref{cor:dominance-reversal}, lecturer-optimal. We are now ready to present our main result.

\begin{theorem}
\label{theorem-lattice}
    Let $I$ be an instance of {\sc spa-s}, and let $\mathcal{M}$ be the set of stable matchings in $I$. Let $\preceq$ be the dominance partial order on $\mathcal{M}$ and let $M, M' \in \mathcal{M}$. Then $(\mathcal{M},\preceq)$ is a distributive lattice, with $M \land M'$ representing the meet of $M$ and $M'$, and $M \lor M'$ the join of $M$ and $M'$.
\end{theorem}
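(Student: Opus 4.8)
The plan is to assemble the theorem from the machinery already in place. Proposition~\ref{prop:dominance} gives that $(\mathcal{M},\preceq)$ is a partial order, and Lemmas~\ref{lemma_meet_stable} and~\ref{lemma_join_stable} guarantee that $M\land M'$ and $M\lor M'$ belong to $\mathcal{M}$, so all that remains is to verify the three clauses of Definition~\ref{def:dl}: that $M\land M'$ is the greatest lower bound of $\{M,M'\}$, that $M\lor M'$ is the least upper bound, and that the two operations distribute over one another. Throughout I would invoke Theorem~\ref{thm:unpopular-students}(ii) so that each student is either unassigned in \emph{all} stable matchings under discussion or assigned in all of them; this makes ``the better (resp.\ worse) of a student's two projects'' well defined and lets unassigned students be handled uniformly.

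First I would show $M\land M'$ is the meet. By construction every student weakly prefers $M\land M'$ to $M$ and weakly prefers $M\land M'$ to $M'$, so $M\land M'\preceq M$ and $M\land M'\preceq M'$; thus it is a lower bound. For the greatest-lower-bound property, let $N\in\mathcal{M}$ with $N\preceq M$ and $N\preceq M'$, and fix a student $s_i$ that is assigned in these matchings (the unassigned case is trivial). Then $s_i$ weakly prefers $N(s_i)$ to $M(s_i)$ and weakly prefers $N(s_i)$ to $M'(s_i)$, hence weakly prefers $N(s_i)$ to whichever of $M(s_i),M'(s_i)$ it ranks higher, which is exactly $(M\land M')(s_i)$. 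Therefore $N\preceq M\land M'$, so no lower bound lies strictly above $M\land M'$, giving clause~(i). The argument for $M\lor M'$ is dual: each student is assigned the worse of their two projects, so $M\preceq M\lor M'$ and $M'\preceq M\lor M'$, and for any $N\in\mathcal{M}$ with $M\preceq N$ and $M'\preceq N$ each assigned student weakly prefers $(M\lor M')(s_i)$ to $N(s_i)$, whence $M\lor M'\preceq N$; this gives clause~(ii).

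For distributivity, fix $M,M',M''\in\mathcal{M}$. By Lemmas~\ref{lemma_meet_stable} and~\ref{lemma_join_stable}, every matching appearing in the identities $M\lor(M'\land M'')=(M\lor M')\land(M\lor M'')$ and $M\land(M'\lor M'')=(M\land M')\lor(M\land M'')$ lies in $\mathcal{M}$, so it suffices to check that the two sides of each identity assign every student the same project (and, by Theorem~\ref{thm:unpopular-students}(ii), leave the same students unassigned). Fix a student $s_i$ assigned in $M,M',M''$, and write $\mathrm{hi}(\cdot,\cdot)$ and $\mathrm{lo}(\cdot,\cdot)$ for the more- and less-preferred of two projects on $s_i$'s strictly ordered preference list. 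Then the two sides of the first identity evaluate at $s_i$ to $\mathrm{lo}\bigl(M(s_i),\mathrm{hi}(M'(s_i),M''(s_i))\bigr)$ and $\mathrm{hi}\bigl(\mathrm{lo}(M(s_i),M'(s_i)),\mathrm{lo}(M(s_i),M''(s_i))\bigr)$, which agree because $s_i$'s preference list is a chain and every chain is a distributive lattice; the second identity follows symmetrically. Hence both identities hold student-by-student, so the matchings are equal, and $(\mathcal{M},\preceq)$ is a distributive lattice with meet $M\land M'$ and join $M\lor M'$.

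I do not expect a serious obstacle at this stage, since the difficult content—stability of the pointwise meet and join—is already established in Lemmas~\ref{lemma_meet_stable} and~\ref{lemma_join_stable}. The only points demanding care are (i) keeping the orientation of $\preceq$ straight, so that the ``better-project'' matching $M\land M'$ is shown to sit \emph{below} $M$ and $M'$ rather than above them, and (ii) consistently using Theorem~\ref{thm:unpopular-students}(ii) at each step, so that ``the better/worse project'' is always well defined and the distributivity check really does reduce to the chain identities for $\mathrm{hi}$ and $\mathrm{lo}$, with no appeal to blocking pairs needed here.
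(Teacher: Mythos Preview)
Your proposal is correct and follows the same overall strategy as the paper: invoke Lemmas~\ref{lemma_meet_stable} and~\ref{lemma_join_stable} for stability, verify the greatest-lower-bound and least-upper-bound properties by a direct student-by-student comparison, and then check distributivity coordinatewise. Your handling of the meet and join as GLB and LUB is essentially identical to the paper's argument.

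The one place you diverge is in the distributivity step. The paper proceeds by an explicit case analysis on which of $M(s_i)$, $M'(s_i)$, $M''(s_i)$ coincide, enumerating the equality patterns and, in the all-distinct case, the six possible preference orderings, and verifying $X(s_i)=Y(s_i)$ in each. You instead observe that $s_i$'s strict preference list is a chain, so the $\mathrm{hi}/\mathrm{lo}$ (i.e.\ $\min/\max$) operations already satisfy the distributive laws, and the matching-level identities follow pointwise. Your route is more conceptual and avoids the case enumeration entirely; the paper's route is more explicit and self-contained (no appeal to the general fact that chains are distributive). Both are valid, and neither requires any additional stability argument beyond what is already in place.
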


\begin{proof}
We note that the proof is analogous to that in \cite{O2020}. Let $M$ and $M'$ be two stable matchings in $\mathcal{M}$. By Lemma $\ref{lemma_meet_stable}$, we have that $M \land M'$ is a stable matching; and by the definition of $M \land M'$, it follows that $M \land M' \preceq M$ and $M \land M' \preceq M'$. Further, if $M^*$ is an arbitrary stable matching satisfying $M^* \preceq M$ and $M^* \preceq M'$, then each student must be assigned in $M^*$ to a project that is at least as good as her assigned projects in each of $M$ and $M'$, so that $M^* \preceq M \land M'$. Thus $M \land M'$ is the meet of $M$ and $M'$. Similarly, by Lemma $\ref{lemma_join_stable}$, we have that $M \lor M'$ is a stable matching; and by the definition of $M \lor M'$, it follows that $M \preceq M \lor M'$ and $M' \preceq M \lor M'$. Following a similar argument as above, $M \lor M'$ is the join of $M$ and $M'$. Hence, $(\mathcal{M}, \preceq)$ is a lattice.\\

    \noindent Next, we show that the join and meet operation distribute over each other. Let $M,M'$ and $M''$ be stable matchings in $\mathcal{M}$. First, let $X = M \lor (M' \land M'')$ and let $Y = (M \lor M') \land (M \lor M'')$; we need to show that $X = Y$. Let $s_i$ be an arbitrary student. If $s_i$ is unassigned in each of $M$, $M'$ and $M''$, it is clear that $s_i$ is unassigned in both $X$ and $Y$. Now, suppose that $s_i$ is assigned to some project in each of $M$, $M'$ and $M''$. We consider the following cases.
    \begin{enumerate}[label = (\roman*)]
        \item Suppose that $M(s_i) = M'(s_i) = M''(s_i)$, clearly $X(s_i) = Y(s_i)$.
        \item Suppose that either (a) $M(s_i) = M'(s_i)$ and $M(s_i) \neq M''(s_i)$ or (b) $M(s_i) \neq M'(s_i)$ and $M(s_i) = M''(s_i)$ holds. Irrespective of how we express $s_i$'s preference over $M(s_i),  M'(s_i)$ and $M''(s_i)$ in cases (a) and (b), we have that $s_i$ is assigned to $M(s_i)$ in both $X$ and $Y$.
        \item Suppose that $M'(s_i) = M''(s_i)$ and $M'(s_i) \neq M(s_i)$. If $s_i$ prefers $M'(s_i)$ to $M(s_i)$ then $s_i$ is assigned to $M(s_i)$ in both $X$ and $Y$. Otherwise, if $s_i$ prefers $M(s_i)$ to $M'(s_i)$ then $s_i$ is assigned to $M'(s_i)$ in both $X$ and $Y$.
        \item Suppose that $M(s_i)$, $M'(s_i)$ and $M''(s_i)$ are distinct projects. There are six different ways to express $s_i$'s preference over $M(s_i)$, $M'(s_i)$ and $M''(s_i)$. If $s_i$ prefers $M(s_i)$ to $M'(s_i)$ to $M''(s_i)$, then $s_i$ is assigned to $M'(s_i)$ in both $X$ and $Y$. If $s_i$ prefers $M(s_i)$ to $M''(s_i)$ to $M'(s_i)$, then $s_i$ is assigned to $M''(s_i)$ in both $X$ and $Y$. We leave it to the reader to verify that in the remaining four cases, $s_i$ is assigned to $M(s_i)$ in both $X$ and $Y$.
    \end{enumerate}
\noindent Since $s_i$ is an arbitrary student, it follows that $X = Y$; and thus the first distributive property holds. Next, we show that the second distributive property holds. Let $X = M \land (M' \lor M'')$ and let $Y = (M \land M') \lor (M \land M'')$. Let $s_i$ be an arbitrary student. Again, if $s_i$ is unassigned in each of $M,M'$ and $M''$, it is clear that $s_i$ is unassigned in both $X$ and $Y$. Now, suppose $s_i$ is assigned to some project in each of $M,M'$ and $M''$. Following the same case analysis as before, we arrive at the same conclusion in cases (i) and (ii). We consider cases (iii) and (iv) in detail:

\begin{enumerate}[label = (\roman*)]
\setcounter{enumi}{2}
    \item If $M'(s_i) = M''(s_i)$ and $M'(s_i) \neq M(s_i)$. If $s_i$ prefers $M'(s_i)$ to $M(s_i)$ then $s_i$ is assigned to $M'(s_i)$ in both $X$ and $Y$. Otherwise, if $s_i$ prefers $M(s_i)$ to $M'(s_i)$ then $s_i$ is assigned to $M(s_i)$ in both $X$ and $Y$.
    \item Suppose that $M(s_i)$, $M'(s_i)$ and $M''(s_i)$ are distinct projects. Again, there are six different ways to express $s_i$'s preference over $M(s_i)$, $M'(s_i)$ and $M''(s_i)$. If $s_i$ prefers $M'(s_i)$ to $M''(s_i)$ to $M(s_i)$, then $s_i$ is assigned to $M''(s_i)$ in both $X$ and $Y$. If $s_i$ prefers $M''(s_i)$ to $M'(s_i)$ to $M(s_i)$, then $s_i$ is assigned to $M'(s_i)$ in both $X$ and $Y$. We leave it to the reader to verify that in the remaining four cases, $s_i$ is assigned to $M(s_i)$ in both $X$ and $Y$.
\end{enumerate}
\noindent Since $s_i$ is an arbitrary student, it follows that $X = Y$; and thus the second distributive property holds. Since each of $M$, $M'$ and $M''$ is an arbitrary stable matching in $\mathcal{M}$, it follows that $(\mathcal{M},\preceq)$ is a distributive lattice.
\end{proof}

\subsection{Example}
\noindent Finally, consider the {\sc spa-s} instance \( I_3 \) illustrated in Figure \ref{fig:example3}, which admits a total of seven stable matchings, as shown in Table \ref{tab:table3}. The \textit{meet} of matchings \( M_3 \) and \( M_4 \) is \( M_2 \), i.e., \( M_2 = M_3 \land M_4 \). For each student assigned to different projects in \( M_3 \) and \( M_4 \)—namely, \( s_2 \), \( s_4 \), \( s_5 \), \( s_6 \), and \( s_7 \)—the assignment in \( M_2 \) corresponds to the better of their projects in \( M_3 \) and \( M_4 \). Conversely, the \textit{join} of matchings \( M_3 \) and \( M_4 \) is \( M_5 \), i.e., \( M_5 = M_3 \lor M_4 \). In \( M_5 \), each student is assigned to the poorer of their projects in \( M_3 \) and \( M_4 \).

\begin{figure}[H]
\centering
\renewcommand{\arraystretch}{1.2} 
\setlength{\tabcolsep}{5pt} 
\resizebox{\textwidth}{!}{ 
\begin{tabular}{p{0.33\textwidth} p{0.41\textwidth} p{0.25\textwidth}}
\hline
\textbf{Students' preferences} & \textbf{Lecturers' preferences} & \textbf{Offers} \\ 
\hline
$s_1$: $p_1 \; p_2 \; p_4 \; p_3$ & $l_1$: $s_7 \; s_9 \; s_3 \; s_4 \; s_5 \; s_1 \; s_2 \; s_6 \; s_8 $ & $p_1$, $p_2$, $p_5$, $p_6$ \\ 

$s_2$: $p_1 \; p_4 \; p_3 \; p_2$ & $l_2$: $s_6 \; s_1 \; s_2 \; s_5 \; s_3 \; s_4 \; s_7 \; s_8 \;  s_9$ & $p_3$, $p_4$, $p_7$, $p_8$ \\ 

$s_3$: $p_3 \; p_1 \; p_2 \; p_4$ & & \\ 
$s_4$: $p_3 \; p_2 \; p_1 \; p_4$ & & \\ 
$s_5$: $p_4 \; p_3 \; p_1$ & & \\ 
$s_6$: $p_5 \; p_2 \; p_7$ & & \\ 
$s_7$: $p_7 \; p_3 \; p_6$ & & \\ 
$s_8$: $p_6 \; p_8$ & \multicolumn{2}{l}{\textbf{Project capacities:} $c_1 = c_3 = 2$; $\forall j \in \{2, 4, 5, 6, 7, 8\}, \, c_j = 1$} \\ 
$s_9$: $p_8 \; p_2 \; p_3$ & \multicolumn{2}{l}{\textbf{Lecturer capacities:} $d_1 = 4$, $d_2 = 5$}  \\ 
\hline
\end{tabular}
}
\caption{An instance \( I_3 \) of {\sc spa-s}} 
\label{fig:example3}
\end{figure}


\begin{table}[H]
    \centering
    \setlength{\tabcolsep}{6pt} 
    \renewcommand{\arraystretch}{1.3} 
    \begin{tabular}{c|*{9}{c}} 
        Matching & $s_1$ & $s_2$ & $s_3$ & $s_4$ & $s_5$ & $s_6$ & $s_7$ & $s_8$ & $s_9$ \\
        \hline
        $M_1$ & $p_1$ & $p_1$ & $p_3$ & $p_3$ & $p_4$ & $p_5$ & $p_7$ & $p_6$ & $p_8$ \\
        $M_2$ & $p_1$ & $p_1$ & $p_3$ & $p_3$ & $p_4$ & $p_5$ & $p_7$ & $p_8$ & $p_2$  \\
        $M_3$ & $p_1$ & $p_1$ & $p_3$ & $p_3$ & $p_4$ & $p_7$ & $p_6$ & $p_8$ & $p_2$ \\
        $M_4$ & $p_1$ & $p_4$ & $p_3$ & $p_1$ & $p_3$ & $p_5$ & $p_7$ & $p_8$ & $p_2$ \\
        $M_5$ & $p_1$ & $p_4$ & $p_3$ & $p_1$ & $p_3$ & $p_7$ & $p_6$ & $p_8$ & $p_2$\\
        $M_6$ & $p_4$ & $p_3$ & $p_1$ & $p_1$ & $p_3$ & $p_5$ & $p_7$ & $p_8$ & $p_2$  \\
        $M_7$ & $p_4$ & $p_3$ & $p_1$ & $p_1$ & $p_3$ & $p_7$ & $p_6$ & $p_8$ & $p_2$ \\
    \end{tabular}
    \caption{Instance $I_3$ admits seven stable matchings.}
    \label{tab:table3}
\end{table}

\vspace{1cm}
\noindent The Hasse diagram illustrated in Figure \ref{fig:instance-2-lattice} is a directed graph with each vertex representing a stable matching, and there is a directed edge from vertex $M$ to $M'$ if $M \preceq M'$ and there is no such $M^*$ such that $M \preceq M^* \preceq M'$. We note that all the edges representing precedence implied by
transitivity are suppressed in the diagram.\\

\begin{figure}[!ht]
    \centering
\begin{tikzpicture}[scale=0.80]
\SetVertexNormal[MinSize = 1pt,LineWidth = 1pt,FillColor = white]
\Vertex[x=5, y=7]{$M_1$}
\Vertex[x=5, y=5]{$M_2$}
\Vertex[x=3, y=3]{$M_3$}
\Vertex[x=7, y=3]{$M_4$}
\Vertex[x=2, y=1]{$M_5$}
\Vertex[x=8, y=1]{$M_6$}
\Vertex[x=5, y=-1]{$M_7$}

\draw [thick, ->, > = stealth, shorten <= 0.4cm, shorten >= 0.4cm]  (5,7) to (5,5); 
\draw [thick, ->, > = stealth, shorten <= 0.4cm, shorten >= 0.4cm]  (5,5) to (3,3); 
\draw [thick, ->, > = stealth, shorten <= 0.4cm, shorten >= 0.4cm]  (5,5) to (7,3); 
\draw [thick, ->, > = stealth, shorten <= 0.4cm, shorten >= 0.4cm]  (3,3) to (2,1); 
\draw [thick, ->, > = stealth, shorten <= 0.4cm, shorten >= 0.4cm]  (7,3) to (2,1); 
\draw [thick, ->, > = stealth, shorten <= 0.4cm, shorten >= 0.4cm]  (7,3) to (8,1); 
\draw [thick, ->, > = stealth, shorten <= 0.4cm, shorten >= 0.4cm]  (2,1) to (5,-1); 
\draw [thick, ->, > = stealth, shorten <= 0.4cm, shorten >= 0.4cm]  (8,1) to (5,-1); 
\end{tikzpicture}
\caption{Lattice structure for $I_3$}
\label{fig:instance-2-lattice}
\end{figure}

\section{Conclusion and Open Problems}
\label{sect:conclusion}
\noindent In this paper, we study the structure of the set of stable matchings in a given instance $I$ of {\sc spa-s}. We proved that the set of stable matchings, ordered by the defined dominance relation, forms a finite distributive lattice. This result reveals that stable matchings in {\sc spa-s} exhibit a well-defined structure with desirable combinatorial properties. While such a lattice structure is well known in classical bipartite models like the Stable Marriage problem, our contribution is novel in extending this property to a more complex setting involving three types of agents: students, projects, and lecturers. Additionally, we presented several structural properties of {\sc spa-s} instances that, to the best of our knowledge, have not been previously studied.

\bigskip
\noindent Our results on the lattice structure of stable matchings in {\sc spa-s} also enable a formal definition of meta-rotations, and support the construction of the meta-rotation poset—a compact representation of the complete set of stable matchings in any {\sc spa-s} instance. Together with future work on meta-rotations, the results presented here provide a foundation for the development of efficient algorithms to enumerate all stable matchings, identify all stable pairs, and compute egalitarian stable matchings in {\sc spa-s}, similar to results in the Stable Marriage and Hospital Residents models. An open question remains as to whether an analogous lattice structure exists in {\sc spa-st}, the extension of {\sc spa-s} that allows ties in student and lecturer preferences, under super and strong stability \cite{O2020}. The results in this paper offer a starting point for addressing this question and for exploring the complexity of related problems in {\sc spa-s} and its extensions.

\section{Acknowledgements}
\noindent The authors thank the anonymous ISCO reviewers for their valuable feedback, which contributed to improving the clarity and presentation of the conference version of this paper. We would also like to appreciate Betina Klaus for her contributions to an earlier version of this work.

\bibliographystyle{elsarticle-num-names} 
\bibliography{ref}

\end{document}